\newcommand*\samethanks[1][\value{footnote}]{\footnotemark[#1]}
\g@addto@macro\bfseries{\boldmath}
\g@addto@macro\mdseries{\unboldmath}
\g@addto@macro\normalfont{\unboldmath}
\g@addto@macro\rmfamily{\unboldmath}
\g@addto@macro\upshape{\unboldmath}
\renewcommand*{\multicitedelim}{\addcomma\space}
\newcommand{\myhref}[1]{%
  \iffieldundef{doi}
    {\iffieldundef{url}
       {#1}
       {\href{\strfield{url}}{#1}}}
    {\href{http://dx.doi.org/\strfield{doi}}{#1}}%
}
    \newlength{\temp@x}%
    \newlength{\temp@y}%
    \newlength{\temp@w}%
    \newlength{\temp@h}%
    \def\my@coords#1#2#3#4{%
      \setlength{\temp@x}{#1}%
      \setlength{\temp@y}{#2}%
      \setlength{\temp@w}{#3}%
      \setlength{\temp@h}{#4}%
      \adjustlengths{}%
      \my@pdfliteral{\strip@pt\temp@x\space\strip@pt\temp@y\space\strip@pt\temp@w\space\strip@pt\temp@h\space re}}%
      \def\my@pdfliteral#1{\pdfliteral page{#1}}
      \def\adjustlengths{}%
      \def\my@pdfliteral #1{}
      \def\adjustlengths{\setlength{\temp@h}{-\temp@h}\addtolength{\temp@y}{1in}\addtolength{\temp@x}{-1in}}%
    \def\Hy@colorlink#1{%
      \begingroup
        \ifHy@ocgcolorlinks
          \def\Hy@ocgcolor{#1}%
          \my@pdfliteral{q}%
          \my@pdfliteral{7 Tr}
        \else
          \HyColor@UseColor#1%
        \fi
    }%
    \def\Hy@endcolorlink{%
      \ifHy@ocgcolorlinks%
        \my@pdfliteral{/OC/OCPrint BDC}%
        \my@coords{0pt}{0pt}{\pdfpagewidth}{\pdfpageheight}%
        \my@pdfliteral{F}
        %
        \my@pdfliteral{EMC/OC/OCView BDC}%
        \begingroup%
          \expandafter\HyColor@UseColor\Hy@ocgcolor%
          \my@coords{0pt}{0pt}{\pdfpagewidth}{\pdfpageheight}%
          \my@pdfliteral{F}
        \endgroup%
        \my@pdfliteral{EMC}%
        \my@pdfliteral{0 Tr}
        \my@pdfliteral{Q}%
      \fi
      \endgroup
    }%
\colorlet{DarkRed}{red!50!black}
\colorlet{DarkGreen}{green!50!black}
\colorlet{DarkBlue}{blue!50!black}
\declaretheorem[numberwithin=section]{theorem}
\declaretheorem[numberlike=theorem]{lemma}
\declaretheorem[numberlike=theorem]{proposition}
\declaretheorem[numberlike=theorem]{corollary}
\declaretheorem[numberlike=theorem]{definition}
\declaretheorem[numberlike=theorem]{claim}
\newcounter{theoremtmp}
\newcounter{ValueIterationCounter}
\newcounter{TheoremThree}
\newcounter{TheoremTwo}
\newcounter{TheoremOne}
\newcommand{\poly}{\operatorname{poly}}
\newcommand{\en}{\ensuremath{e}\xspace}
\newcommand{\counter}{\operatorname{count}}
\newcommand{\degree}{\operatorname{deg}}
\newcommand{\val}{\operatorname{val}}
\def\cA{\mathcal{A}}
\def\Alice{Alice\xspace}
\def\Bob{Bob\xspace}
\title{Polynomial-Time Algorithms for Energy Games with Special Weight Structures\thanks{This paper appeared in the ESA 2012 special issue of \emph{Algorithmica}~\cite{ChatterjeeHKN14}. A preliminary version was presented at the \emph{20th Annual European Symposium on Algorithms (ESA 2012).}}}
\author{
Krishnendu~Chatterjee\thanks{Institute of Science and Technology, Klosterneuburg, Austria. Supported by the Austrian Science Fund (FWF): P23499-N23, the Austrian Science Fund (FWF): S11407-N23 (RiSE), an ERC Start Grant (279307: Graph Games), and a Microsoft Faculty Fellows Award.}
\and
Monika~Henzinger\thanks{University of Vienna, Faculty of Computer Science, Vienna, Austria. Supported by the Austrian Science Fund (FWF): P23499-N23, the Vienna Science and Technology Fund (WWTF) grant ICT10-002, the University of Vienna (IK \mbox{I049-N}), and a Google Faculty Research Award.}
\and
Sebastian~Krinninger\samethanks[3]
\and
Danupon~Nanongkai\thanks{Nanyang Technological University, Singapore, Singapore. Work partially done while at University of Vienna, Austria.}
}
\date{}
\begin{document}
\maketitle
\begin{abstract}
Energy games belong to a class of {\em turn-based two-player infinite-duration games} played on a weighted directed graph. It is one of the rare and intriguing combinatorial problems that lie in
${\sf NP}\cap {\sf co\mbox{-}NP}$, but are not known to be in {\sf P}. The existence of polynomial-time algorithms has been a major open problem for decades and apart from pseudopolynomial algorithms there is no algorithm that solves any non-trivial subclass in polynomial time.

In this paper, we give several results based on the weight structures of the graph. First, we identify a notion of {\em penalty} and present a polynomial-time algorithm when the penalty is large. Our algorithm is the first polynomial-time algorithm on a large class of weighted graphs. It includes several worst-case instances on which previous algorithms, such as value iteration and random facet algorithms, require at least sub-exponential time.
Our main technique is developing the first non-trivial {\em approximation} algorithm and showing how to convert it to an exact algorithm.
Moreover, we show that in a practical case in verification where weights are clustered around a constant number of values, the energy game problem can be solved in polynomial time.
We also show that the problem is still as hard as in general when the clique-width is bounded or the graph is strongly ergodic, suggesting that restricting the graph structure does not necessarily help.

\end{abstract}
\newpage

\tableofcontents
\newpage

\section{Introduction}\label{sec:intro}

Consider a coffee shop $A$ having a budget of $\en$ competing with its rival $B$ across the street who has an unlimited budget. Each competitor can set the price of a cup of coffee between $ 1 $ cent and $ 10 $ euros (as an integer cent amount). Coffee shop $B$ can observe the price of a cup of coffee at $A$, say $p_0$, and responds with a price $p_1$, causing $A$ a loss of $w(p_0, p_1)$, which could potentially put $A$ out of business. If $A$ manages to survive, then it can respond to $B$ with a price $p_2$, gaining itself a profit of $w(p_1, p_2)$. Then $B$ will try to put $A$ out of business again with a price $p_3$. How much initial budget $\en$ does $A$ need in order to guarantee that its business will survive forever?
This is a simple example of a perfect-information turn-based infinite-duration game called an {\em energy game}, defined as follows.

In an energy game, there are two players, \Alice and \Bob, playing a game on a finite directed graph $G=(V, E)$ with weight function $w: E\rightarrow \mathbb{Z}$. Each node in $G$ belongs to either \Alice or \Bob.
The game starts by placing an imaginary car on a specified starting node $v_0$ with an initial energy $\en_0\in \mathbb{Z}^{\geq 0} \cup \{\infty\}$ in the car (where $\mathbb{Z}^{\geq 0} = \{0, 1, \ldots\}$).
The game is played in {\em rounds}:
at any round $i>0$, if the car is at node $v_{i-1}$ and has energy $\en_{i-1}$, then the owner of $v_{i-1}$ moves the car from $v_{i-1}$ to a node $v_i$ along an edge $(v_{i-1}, v_i)\in E$. The energy of the car is then updated to $\en_i=\en_{i-1}+w(v_{i-1}, v_i)$.
The goal of \Alice is to sustain the energy of the car while \Bob will try to make \Alice fail. That is, we say that
\Alice \emph{wins} the game if the energy of the car is never below zero, i.e. $\en_i\geq 0$ for all~$i$; otherwise, \Bob wins.
The problem of {\em computing the minimal sufficient energy} is to compute the minimal initial energy $\en_0$ such that \Alice wins the game. (Note that such $\en_0$ always exists since it could be $\infty$ in the worst case.)
Figure~\ref{fig:energy_games_example} shows an example run of an energy game.
The important parameters in terms of running time are the number $ n $ of nodes in the graph, the number $ m $ of edges in the graph, and the weight parameter $ W $ defined as $ W = \max_{(u, v) \in E} | w(u, v) | $.

\begin{figure}[h]
\centering

\subfigure[][]{
\label{fig:energy_games_example_a}
\begin{tikzpicture}
\SetGraphUnit{1.5}
\tikzset{VertexStyle/.style = {minimum size=15pt,inner sep=0pt,outer sep=0pt,draw}}
\tikzset{EdgeStyle/.style = {->}}
\tikzset{LabelStyle/.style= {draw=none,inner sep=3pt,outer sep=0pt,fill=lightgray!50}}

\Vertex[NoLabel,style={shape=circle}]{a}
\NOWE[NoLabel,style={shape=rectangle}](a){b}
\NOEA[NoLabel,style={shape=rectangle}](a){c}

\Edge[label=7](a)(b)
\Edge[label=2, style={ultra thick}](a)(c)
\Edge[label=4](b)(c)
\Edge[label=-2, style={->,bend right=60}](b)(a)
\Edge[label=3, style={->,bend right=60}](c)(b)
\Edge[label=-8, style={->,bend left=60}](c)(a)

\node at (0, 0) {\includegraphics[width=10pt]{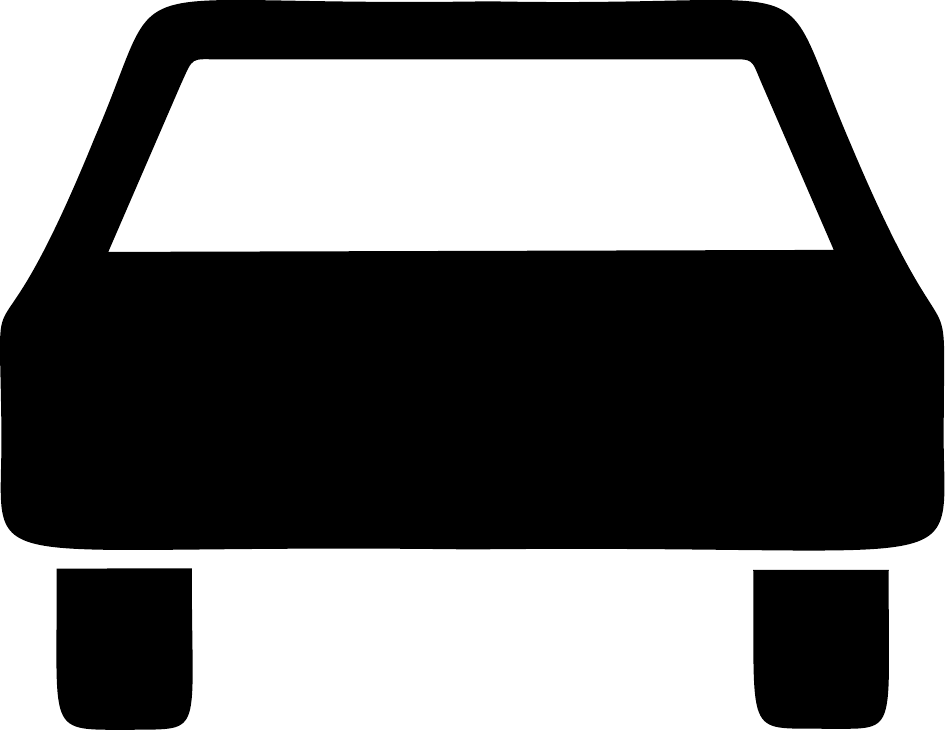}};
\node at (0, -0.75) {
\begin{tabular}{m{25pt} m{1em}}
\includegraphics[width=14pt,angle=90]{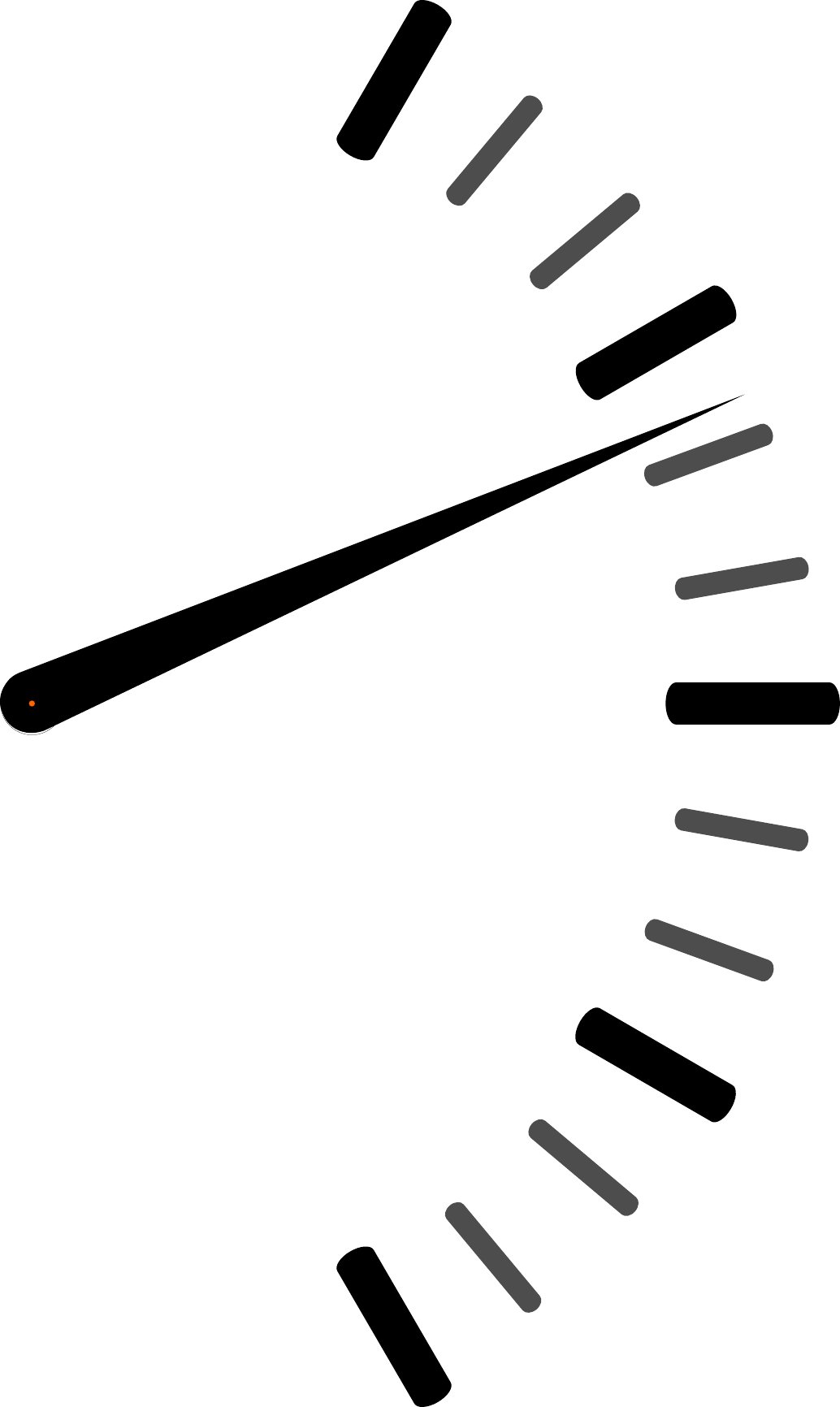} &
\begin{tikzpicture}
\node[draw,fill=lightgray!50] {0};
\end{tikzpicture}
\end{tabular}
};
\end{tikzpicture}
}
\hspace{0.5em}
\subfigure[][]{
\label{fig:energy_games_example_b}
\begin{tikzpicture}
\SetGraphUnit{1.5}
\tikzset{VertexStyle/.style = {minimum size=15pt,inner sep=0pt,outer sep=0pt,draw}}
\tikzset{EdgeStyle/.style = {->}}
\tikzset{LabelStyle/.style= {draw=none,inner sep=3pt,outer sep=0pt,fill=lightgray!50}}

\Vertex[NoLabel,style={shape=circle}]{a}
\NOWE[NoLabel,style={shape=rectangle}](a){b}
\NOEA[NoLabel,style={shape=rectangle}](a){c}

\Edge[label=7](a)(b)
\Edge[label=2](a)(c)
\Edge[label=4](b)(c)
\Edge[label=-2, style={->,bend right=60}](b)(a)
\Edge[label=3, style={->,bend right=60}](c)(b)
\Edge[label=-8, style={style={ultra thick},->,bend left=60}](c)(a)

\node at (1.5, 1.5) {\includegraphics[width=10pt]{fig/car}};
\node at (0, -0.75) {
\begin{tabular}{m{25pt} m{1em}}
\includegraphics[width=14pt,angle=90]{fig/fuel_meter} &
\begin{tikzpicture}
\node[draw,fill=lightgray!50] {2};
\end{tikzpicture}
\end{tabular}
};
\end{tikzpicture}
}
\hspace{0.5em}
\subfigure[][]{
\label{fig:energy_games_example_c}
\begin{tikzpicture}
\SetGraphUnit{1.5}
\tikzset{VertexStyle/.style = {minimum size=15pt,inner sep=0pt,outer sep=0pt,draw}}
\tikzset{EdgeStyle/.style = {->}}
\tikzset{LabelStyle/.style= {draw=none,inner sep=3pt,outer sep=0pt,fill=lightgray!50}}

\Vertex[NoLabel,style={shape=circle}]{a}
\NOWE[NoLabel,style={shape=rectangle}](a){b}
\NOEA[NoLabel,style={shape=rectangle}](a){c}

\Edge[label=7](a)(b)
\Edge[label=2](a)(c)
\Edge[label=4](b)(c)
\Edge[label=-2, style={->,bend right=60}](b)(a)
\Edge[label=3, style={->,bend right=60}](c)(b)
\Edge[label=-8, style={->,bend left=60}](c)(a)

\node at (0, 0) {\includegraphics[width=10pt]{fig/car}};
\node at (0, -0.75) {
\begin{tabular}{m{25pt} m{1em}}
\includegraphics[width=14pt,angle=90]{fig/fuel_meter} &
\begin{tikzpicture}
\node[draw,fill=lightgray!50] {-6};
\end{tikzpicture}
\end{tabular}
};
\end{tikzpicture}
}

\caption{
An example of an energy game.
The round node belongs to Alice and the rectangular nodes belong to Bob.
The current energy level of the car is written in the box at the bottom.
The game starts at the bottom node, which belongs to Alice, with the initial energy level $ 0 $~\subref{fig:energy_games_example_a}.
Alice chooses to move the car to the upper right node using the edge of weight $ 2 $.
Afterwards the car has energy $ 2 $ and is located on Bob's node~\subref{fig:energy_games_example_b}.
Bob chooses to move the car to the bottom node using the edge of weight $ -8 $.
This decreases the energy of the car to $ -6 $~\subref{fig:energy_games_example_c}.
At this point Alice has lost the game because the energy of the car is negative.
}\label{fig:energy_games_example}
\end{figure}

\paragraph{Related Work.} Energy games belong to an intriguing family of {\em infinite-duration turn-based games} which includes alternating games \cite{RothBKM10}, and has applications in areas such as
computer-aided verification and automata theory~\cite{CdAHS03,BCHJ09,CCHRS11}, as well as in online and streaming problems~\cite{ZwickP96}.
Energy games are polynomial-time equivalent to \emph{mean-payoff games}~\cite{Bouyer08}.
Furthermore there are polynomial-time reductions from \emph{parity games} to energy games~\cite{Jurdzinski98} and from energy games to \emph{simple stochastic games}~\cite{Condon92,ZwickP96}.\footnote{Both reductions were originally shown for mean-payoff games.}
These games are among the rare combinatorial problems, along with {\em Graph Isomorphism}, that are unlikely to be {\sf NP}-complete (since they are in ${\sf UP} \cap {\sf co\mbox{-}UP} \subseteq {\sf NP}\cap {\sf co\mbox{-}NP}$~\cite{EM79,GurvichKK88,ZwickP96,Jurdzinski98}) but not known to be in {\sf P}. 
It is a major open problem whether any of these games are in {\sf P} or not.
While the energy game is relatively new and interesting in its own right, it has been implicitly studied since the late 80s, due to its close connection with the mean-payoff game. In particular, the seminal paper by Gurvich et al.~\cite{GurvichKK88} presents a simplex-like algorithm for mean-payoff games which computes a ``potential function'' that is essentially the energy function.\footnote{More precisely, the auxiliary algorithm of Gurvich et al.~\cite{GurvichKK88} solves a decision version of mean-payoff games. It has to output, for every node $ v $, whether the mean-payoff at $ v $ is at least zero or not. If it is, the potential of $ v $ computed by this algorithm is equal to what we call the minimal sufficient energy of $ v $. If not, we know that the minimal sufficient energy of $ v $ is $ \infty $.}

The algorithm of Gurvich et al.~\cite{GurvichKK88} was shown to be pseudopolynomial by Pisaruk~\cite{Pisaruk99}.
Another pseudopolynomial algorithm was given by Zwick and Paterson~\cite{ZwickP96}.
Björklund and Vorobyov~\cite{Bjoerklund07} developed an algorithm for mean-payoff games that besides being pseudopolynomial has a randomized strongly subexponential running time of $2^{O(\sqrt{n\log n})} \log W$.
Lifshits and Pavlov~\cite{Lifshits07} described an exponential algorithm for mean-payoff games.
Recently, Brim et al.~\cite{Brim11} gave an algorithm for energy games that is faster than previous deterministic pseudopolynomial algorithms and runs in time $ O (m n W) $.
It yields the current fastest pseudopolynomial complexity for energy games as well as mean-payoff games.
To the best of our knowledge only two special cases of energy or mean-payoff games are known to admit a polynomial-time algorithm.
The first special case is where all nodes belong to one player and for example can be solved with Karp's minimum cycle mean algorithm~\cite{Karp1978}.
The second special case is when $ W $ is polynomial in the input size and thus pseudopolynomial algorithms actually run in polynomial time.

Infinite-duration turn-based games also have strong connections to {\em (mixed) Nash equilibrium computation}~\cite{DaskalakisP11} and {\em Linear Programming}~\cite{Vorobyov08}.
For example, they are in a low complexity class lying very close to {\sf P} called {\sf CCLS} \cite{DaskalakisP11} which is in ${\sf PPAD}\cap {\sf PLS}$.
This implies that, unlike many problems in Game Theory, these games are {\em unlikely} to be {\sf PPAD}-complete.
Moreover, as shown by Halman~\cite{Halman07algorithmica}, all these games are \emph{LP-type problems}~\cite{SharirW92}, a concept that generalizes linear programming.
Therefore the random facet algorithm~\cite{Kalai92,Kalai97,MatousekSW96}, a simplex-algorithm with a certain randomized pivoting rule, can be used to solve them in randomized subexponential time\footnote{The randomized subexponential algorithm of Björklund and Vorobyov~\cite{Bjoerklund07} uses the same randomization scheme as the random facet algorithm.}.
This relates infinite-duration turn-based games to the question whether there exists a pivoting rule for the simplex algorithm that requires a polynomial number of pivoting steps on any linear program, which is perhaps one of the most important problems in the field of linear programming.
In fact, several randomized pivoting rules have been conjectured to solve linear programs in polynomial time until recent breakthrough results~\cite{FriedmannHZSTOC11,FriedmannIPCO11,FriedmannHZSODA11} have rejected these conjectures.
As noted by Friedmann et al.~\cite{FriedmannHZSTOC11}, infinite-duration turn-based games played an important role in this breakthrough as the lower bounds were first developed for these games and later extended to linear programs via Markov Decision Processes.
Figure~\ref{fig:energy_games_complexity} summarizes the complexity status of energy games and related problems.

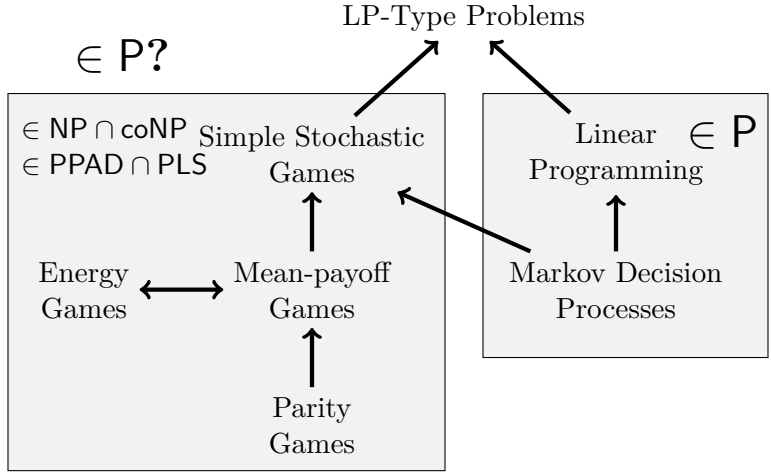
\begin{figure}
\centering
\begin{tikzpicture}
\filldraw[fill=gray!10] (-6,-6) rectangle (-0.25, -1);
\filldraw[fill=gray!10] (0.25,-1) rectangle (4,-4.5);

\node (LPtype) at (0,0) [draw=none,align=center] {LP-Type Problems};
\node (SSG) at (-2,-1.8) [draw=none,align=center] {Simple Stochastic \\ Games};
\node (LP)  at (2,-1.8) [draw=none,align=center]{Linear \\ Programming};
\node (EG) at (-5,-3.6) [draw=none,align=center] {Energy \\Games};
\node (MPG) at (-2,-3.6) [draw=none,align=center] {Mean-payoff \\ Games};
\node (MDP) at (2,-3.6) [draw=none,align=center] {Markov Decision \\ Processes};
\node (PG) at (-2,-5.4) [draw=none,align=center] {Parity \\ Games};

\draw [->,ultra thick] (SSG) to (LPtype);
\draw [->,ultra thick] (LP) to (LPtype);
\draw [->,ultra thick] (MPG) to (SSG);
\draw [->,ultra thick] (MDP) to (LP);
\draw [->,ultra thick] (MDP) to (SSG);
\draw [->,ultra thick] (PG) to (MPG);
\draw [<->,ultra thick] (EG) to (MPG);

\node at (3.4, -1.5) {\LARGE $ \in \mathsf{P} $};

\node at (-4.5, -0.5) {\LARGE $ \in \mathsf{P} $\textbf{?}};

\node at (-4.6, -1.7) [align=left] {$ \in \mathsf{NP} \cap \mathsf{coNP} $ \\ $ \in \mathsf{PPAD} \cap \mathsf{PLS} $ };

\end{tikzpicture}

\caption{The complexity status of energy games and related problems. Arrows indicate polynomial-time reductions.}\label{fig:energy_games_complexity}
\end{figure}

\paragraph{Our Contributions.} In this paper we identify several classes of graphs (based on weight structures) for which energy games can be solved in polynomial time.
For any starting node $s$, let $\en_{G, w}^*(s)$ denote the minimal sufficient energy.
Our first contribution is an algorithm whose running time is based on a parameter called {\em penalty}.
Informally, a penalty\footnote{We formally define the concept of penalty in Section~\ref{sec:prelim}.} of $ D $ means that Bob has a way to play optimally such that, for all choices of Alice, one of the following two situations occurs.
(1) Alice wins the game for some finite initial energy.
(2) Alice loses the game even if an additional energy of $ D $ would be added to the car in every turn.
We denote the \emph{penalty of the graph $ (G, w) $} by $ P(G, w) $.
We show that the \emph{higher} the penalty is, the faster we can compute the minimal energies.

\setcounter{TheoremOne}{\value{theorem}}
\begin{theorem}\label{thm:main 1}
Given a graph $(G, w)$ and an integer $M$ we can compute the minimal initial energies of all nodes in
\begin{equation*}
O \left( mn \left( \log\frac{M}{n} \right) \left( \log\frac{M}{n\lceil P(G, w)\rceil} \right) + m \frac{M}{\lceil P(G, w)\rceil} \right)
\end{equation*}
time, provided that for all $v$, $e_{G, w}^*(v)<\infty$ implies that $e_{G, w}^*(v)\leq M$.
\end{theorem}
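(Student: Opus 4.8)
\emph{Overall strategy.} I would first build an \emph{approximation} algorithm for the minimal energies and then upgrade it to an exact one, using the penalty to bound the loss. The workhorse is the pseudopolynomial value-iteration algorithm of Brim et al.~\cite{Brim11}, which computes all $e_{G,w}^*(v)$ in time $O(mnW)$. I use two observations about it. First, under the hypothesis that every finite value is at most $M$, one may cap the progress measure at $M$ — declaring a node lost once its estimate would exceed $M$; with the counter bookkeeping of Brim et al.\ this brings the running time down to $O(mM)$ while still outputting every finite value correctly, together with the exact winning region, and the same holds with $M$ replaced by any bound known to dominate all finite values. Second, the iteration may be warm-started from any $g$ with $0\le g\le e_{G,w}^*$, in which case the extra work is only $O(m\cdot\max_v(e_{G,w}^*(v)-g(v)))$, since each node is then lifted only from $g(v)$ up to $e_{G,w}^*(v)$.

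\emph{A penalty-controlled approximation.} For a scale $s\ge 1$ let $w_s$ be the weight function $w_s(e)=\lceil w(e)/s\rceil$, so that $w(e)\le s\,w_s(e)<w(e)+s$. The left inequality says $(G,s\,w_s)$ is only easier for Alice, hence $s\cdot e_{G,w_s}^*(v)\le e_{G,w}^*(v)$ for every $v$; in particular $e_{G,w_s}^*(v)\le M/s$ whenever $e_{G,w}^*(v)<\infty$, so capped Brim on $(G,w_s)$ runs in time $O(mM/s)$ and produces a pointwise under-estimate of $e_{G,w}^*$. The crux is a \emph{quantitative penalty lemma}: if $s\le\lceil P(G,w)\rceil$ then $(G,s\,w_s)$ has the same winning region as $(G,w)$ and, on it, $0\le e_{G,w}^*(v)-s\cdot e_{G,w_s}^*(v)=O(ns)$. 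Equality of the winning regions is the ``situation~(1) versus situation~(2)'' dichotomy in the definition of penalty: played against a positional optimal strategy of Bob, no strategy of Alice can be in situation~(2) after a uniform perturbation of size strictly below $P(G,w)$, so each of them is in situation~(1); the $O(ns)$ bound then follows by evaluating a positional optimal strategy of $(G,s\,w_s)$ in $(G,w)$ (and conversely) and bounding the worst energy deficit over a single transient-plus-cycle of the induced play, whose length is at most $n$. Subtracting the error bound, $g(v):=\max(0,\,s\cdot e_{G,w_s}^*(v)-cns)$ for a suitable constant $c$ is a certified under-estimate that is tight up to $O(ns)$ on the winning region. This is the promised non-trivial approximation algorithm, and with $s=\lceil P(G,w)\rceil$ it runs in time $O(mM/\lceil P(G,w)\rceil)$ — the second term of the bound — correctly identifying the winning region and estimating $e_{G,w}^*$ on it up to $O(n\lceil P(G,w)\rceil)$.

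\emph{From approximate to exact.} It remains to sharpen the estimate on the winning region to equality. I would do this with a scaling hierarchy over scales decreasing geometrically from $\Theta(M/n)$ down to $\lceil P(G,w)\rceil$ (the algorithm is only given $M$, so the penalty enters only in the analysis): maintain a certified under-estimate, and at each level seed a warm start of capped Brim at the estimate of the previous, coarser level — doubled and lowered by the error bound to stay certified — so that the residual per-node gap is kept small rather than growing; then perform the warm-started, capped run and pass to the finer scale. The penalty guarantees that one may stop once scale $\lceil P(G,w)\rceil$ is reached, i.e.\ after $O(\log\frac{M}{n\lceil P(G,w)\rceil})$ levels, and a careful telescoping of the extra lifting performed across the levels — using that the values are bounded by $M$ — bounds the total by $O(mn\log\frac{M}{n}\log\frac{M}{n\lceil P(G,w)\rceil})$, the first term. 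Nodes outside the winning region receive value $\infty$, completing the computation.

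\emph{Where the difficulty lies.} The technical heart is the quantitative penalty lemma: that rounding every weight by strictly less than $P(G,w)$ leaves the winning region intact and changes each finite value by at most $O(ns)$. Its winning-region half is essentially immediate from the definition of penalty together with positional determinacy of energy games, but the $O(ns)$ half is delicate, as it requires comparing the behaviour of a single positional strategy across two games and controlling the maximum energy deficit it incurs, for which the eventual periodicity of plays (transient and period of length $\le n$) is indispensable. The secondary difficulty is the accounting for the scaling hierarchy: one must verify that the total extra lifting Brim's algorithm performs across all levels telescopes to the claimed bound instead of collapsing to $O(mnW)$, which is precisely where seeding each level from the previous one, capping the progress measure, and the penalty-controlled gap all have to be combined.
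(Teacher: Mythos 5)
Your approximation half is essentially the paper's: $s\lceil w/s\rceil$ is exactly the paper's rounding $w_B$ with $B=s$, the under-estimate comes from monotonicity of $\en^*$ under weight increases, and the $O(ns)$ error plus preservation of the winning region come from the penalty (negative cycles of average weight $\le -P$ stay negative after a per-edge perturbation below $P$, then compare a fixed positional strategy pair across the two games along a transient-plus-cycle of length $\le n$). That part is sound in outline. The gaps are in the ``approximate to exact'' half.

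First, your algorithm has no termination/correctness mechanism that works without knowing $P(G,w)$. You say the penalty ``enters only in the analysis'' and that ``one may stop once scale $\lceil P(G,w)\rceil$ is reached,'' but (i) the algorithm cannot detect that it has reached that scale, and (ii) even at that scale the estimate is only within $O(n\lceil P(G,w)\rceil)$ of $\en^*_{G,w}$, which is not the exact answer the theorem requires; pushing a final warm-started run all the way to scale $1$ from that estimate costs $O(mn\lceil P(G,w)\rceil)$ per the residual gap, which is $\Theta(mM)$ when $\lceil P(G,w)\rceil=\Theta(M/n)$ and so breaks the bound. The paper resolves exactly this by (a) recursing all the way down to scales $O(1)$ via a potential transformation $w'(u,v)=w(u,v)+\en(u)-\en(v)$ that halves the upper bound $M$ on finite values at cost $O(mn)$ per level, and (b) wrapping everything in an outer loop that guesses a lower bound $D$ on the penalty, runs the recursion, and \emph{verifies} the output in $O(m)$ time against the Bellman fixed-point characterization of $\en^*$ (Lemma~\ref{lem:characterization_of_minimal_energy}), halving $D$ on failure. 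That verification step is the missing ingredient, and it is also the sole source of the product $\bigl(\log\frac{M}{n}\bigr)\bigl(\log\frac{M}{n\lceil P(G,w)\rceil}\bigr)$ in the stated bound, which your accounting never explains.

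Second, the soundness of your warm-started, capped runs at coarse scales $s>P(G,w)$ is asserted but not established, and it is delicate. At such scales the quantitative penalty lemma fails, the winning regions of $(G,w)$ and $(G,w_s)$ can differ, and $M/s$ is \emph{not} a valid upper bound on the finite values of $(G,w_s)$; a capped value iteration can then push to $\infty$ nodes whose value in $(G,w_s)$ is finite, and such overshooting can propagate, so the output is no longer obviously a valid (under-)seed for the next, finer scale --- and seeding a monotone lifting algorithm above the target is unrecoverable. One can argue around this (e.g., using that Alice's optimal strategy for $(G,w)$ keeps the play inside nodes of value $\le M$), but your proposal does not, and the paper sidesteps the issue entirely by only ever invoking the rounding with $B\le D\le P(G,w)$ inside the recursion and tolerating garbage output for wrong guesses $D>P(G,w)$ via the a posteriori check.
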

We note that in addition to $(G, w)$, our algorithm takes $M$ as an input. If $M$ is unknown, we can simply use the universal upper bound $M=nW$~\cite{Brim11}. Allowing different values of $M$ will be useful in our proofs.
We emphasize that the algorithm can run without knowing $P(G, w)$. Our algorithm is as efficient as the fastest known pseudopolynomial-time ($O(mnW)$-time) algorithm \cite{Brim11} in the general case where $M=nW$ and 
$P(G, w)=1/n$ (so $\lceil P(G, w)\rceil =1$).

If the penalty is at least $W/\poly(n)$, our algorithm runs in polynomial time. Therefore, the algorithm also solves several classes of graphs that are previously not known to be solvable in polynomial time.
As an illustration, consider the class of graphs where each cycle has total weight either positive or less than $-W/2$. In this case, our algorithm runs in polynomial time.
{\em All known worst-case instances}~\cite{GurvichKK88,Beffara2001,ZwickP96,FriedmannHZSODA11} of previous algorithms fall in this class of graphs.
In particular, we observe that, for this class of graphs, the following algorithms need at least subexponential time (while our algorithm runs in polynomial time): the algorithm by Gurvich et al.~\cite{GurvichKK88}, the algorithm by Brim et al.~\cite{Brim11}, the algorithm by Zwick and Paterson~\cite{ZwickP96} and the random facet algorithm by Matoušek et al.~\cite{MatousekSW96} (the latter two algorithms are used for the decision versions of mean-payoff and parity games, respectively).\footnote{A worst-case instance for the first two algorithms has been developed by Lebedev and is mentioned by Gurvich et al.~\cite{GurvichKK88} and shown by Beffara and Vorobyov~\cite{Beffara2001} (for the second algorithm, we exploit the fact that it is deterministic and there exists a bad ordering in which the nodes are processed). Worst-case instances for the third and the fourth algorithm have been given by Zwick and Paterson~\cite{ZwickP96} and Friedmann et al.~\cite{FriedmannHZSODA11}, respectively. We note that the instances shown by Beffara and Vorobyov~\cite{Beffara2001} and Friedmann et al.~\cite{FriedmannHZSODA11} contain one cycle of small negative weight. One can change the value of this cycle to $ -W $ to make these examples belong to the desired class of graphs without changing the worst-case behaviors of the mentioned algorithms.}

Our result might also be of a practical interest since it solves energy games faster when penalties are high while it runs with the same running time as previous pseudopolynomial-time algorithms \cite{Brim11} in the worst case.

Our second contribution is an algorithm that approximates the minimal energy within some {\em additive error} where the size of the error depends on the penalty.
This result is the main tool in proving Theorem~\ref{thm:main 1} where we show how to use the approximation algorithm to compute the minimal energy {\em exactly}.

\setcounter{TheoremTwo}{\value{theorem}}
\begin{theorem}\label{thm:rounding_procedure_gives_desired_approximation}
Given a graph $(G, w)$ with $ P(G, w) \geq 1 $, an integer $M$, and an integer $c$ such that $n\leq c \leq n P(G, w)$, we can compute an energy function $ \en $ such that
\begin{align*}
\en (v) \leq \en^*_{G, w} (v) \leq \en (v) + c
\end{align*}
for every node $v$ in $O (mnM/c) $ time, provided that for every node $v$, $e_{G, w}^*(v)<\infty$ implies that $e_{G, w}^*(v)\leq M$.
\end{theorem}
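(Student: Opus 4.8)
The plan is to reduce the additive‑approximation problem to a *scaled* energy game that can be solved by a pseudopolynomial‑time algorithm in time $O(mnW')$, where $W'$ is the scaled weight parameter. Concretely, fix the scaling factor $c$ and define a new weight function $w'$ by rounding: for each edge $(u,v)$ we set $w'(u,v) = \lceil w(u,v)\, n / c\rceil$ (or a closely related rounding that keeps a uniform sign of the rounding error). Since $n \le c \le nP(G,w)$, the scaled weights have magnitude at most roughly $nW/c + 1$, and more importantly the *total weight of every cycle* is affected only by an additive term of at most $n$ per cycle (one unit per edge), which by the penalty hypothesis $P(G,w)\ge 1$ is small enough that the *sign* of every cycle's weight in $(G,w')$ matches the sign in $(G, c\cdot w / n)$ — i.e. the scaled game has the same winning regions as the original. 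This is the crucial structural point: the penalty guarantees that cycles are bounded away from the threshold, so coarse rounding does not flip any cycle's sign and hence does not change who wins from each node.

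First I would make the rounding precise and verify the cycle‑sign invariance, using the definition of penalty from Section~\ref{sec:prelim}: a penalty of $P(G,w) \ge 1$ means Bob can force every relevant cycle (against any Alice strategy that does not already secure a finite win) to have mean weight $\le -1$, hence total weight $\le -n$ over a simple cycle, while cycles that are ``good'' for Alice have nonnegative total weight; after dividing weights by $c/n \le P(G,w)$ and rounding, a negative cycle of total weight $\le -n$ becomes a cycle of negative total weight in $(G,w')$, and a nonnegative cycle stays nonnegative. Second, I would run the Brim et al.~\cite{Brim11} algorithm on $(G, w')$, which computes the exact minimal energy function $\en'$ for the scaled game in time $O(mnW')$; since $W' = O(nW/c) = O(M/c)$ (using $M \le nW$, or directly the bound $M$ on finite energies), this is the claimed $O(mnM/c)$ running time, after truncating the computation at the bound $M' = \lceil Mn/c\rceil$ which is legitimate because $\en^*_{G,w}(v)<\infty \implies \en^*_{G,w}(v)\le M$ carries over to a finite bound on $\en'$. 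Third, I would set $\en(v) := \lfloor \en'(v)\cdot c/n \rfloor$ and prove the two‑sided bound $\en(v) \le \en^*_{G,w}(v) \le \en(v) + c$ by relating optimal strategies in the two games: a strategy that is winning for Alice in $(G,w')$ from energy $\en'(v)$ translates to a strategy winning in $(G,w)$ from energy at most $\en'(v)\cdot c/n + (\text{accumulated rounding slack})$, where along any finite play before a cycle closes the rounding slack is at most $c$ (one unit of $w'$ is worth $c/n$ units of $w$, and at most $n$ such units accumulate before repetition); symmetrically for Bob. One must be careful that the error is controlled *along plays*, not just globally — this is where bounding the length of ``fresh'' play segments by $n$ (before a vertex repeats and a cycle of known sign is traversed) does the work.

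The main obstacle is the third step: turning the exact equality $\en'$ for the scaled game into a genuine additive $\pm c$ guarantee for the original game. The subtlety is that energy games are about prefixes of infinite plays, so a bound on the *final* or *asymptotic* discrepancy between $w$ and $(c/n)w'$ is not enough — Alice could run out of energy on a prefix even if the asymptotic behaviour is fine. The fix is to argue with the *value/energy function* characterization (a function $f$ is a valid energy assignment iff $f(u) \ge f(v) - w(u,v)$ for Alice's nodes with the max over out‑edges replaced by min‑existence, and $\ge$ for all out‑edges at Bob's nodes, with truncation at $0$): one shows that $\en'$ scaled and shifted by $c$ satisfies the corresponding inequalities for $(G,w)$, using only the per‑edge rounding error bound $|w(u,v) - (c/n)w'(u,v)| \le c/n$ and the fact that the shift by $c = n\cdot(c/n)$ absorbs $n$ such errors — and $n$ is an upper bound on the relevant propagation depth because beyond depth $n$ a cycle is closed and the penalty assumption already pins down its sign. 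Handling the two directions of the inequality and the truncation‑at‑zero interaction cleanly is the part that needs care; the rest is bookkeeping.
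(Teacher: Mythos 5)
Your proposal follows essentially the same route as the paper: round each weight up by (strictly less than) one unit of $B\approx c/n$, use the penalty hypothesis to show that every cycle Bob can force to be negative has mean weight at most $-B$ and hence stays negative after rounding, solve the rounded game exactly by value iteration restricted to the $O(M/B)$ multiples of $B$ that can occur as energies, and charge the additive error to at most $n$ per-edge rounding errors along a simple path. The only cosmetic difference is that you divide the weights by $c/n$ instead of keeping them as multiples of $B$, which introduces a final floor operation and a small off-by-one to absorb (using that simple paths have at most $n-1$ edges and $c/n\ge 1$), but this is bookkeeping rather than a gap.
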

The main technique in proving Theorem~\ref{thm:rounding_procedure_gives_desired_approximation} is rounding weights appropriately. We note that a similar idea of approximation has been explored earlier in the case of mean-payoff games~\cite{BorosEFGMM11}. Roth et al. \cite{RothBKM10} show an \emph{additive} FPTAS for rational weights in $ [-1, 1] $. This implies an additive error of $\epsilon W$ for any $\epsilon>0$ in our setting. This does not help in general since the error depends on $W$. Boros et al.~\cite{BorosEFGMM11} later achieved a {\em multiplicative} error of $(1+\epsilon)$. This result holds, however, only when the edge weights are non-negative integers. In fact, it is shown that if one can approximate the mean-payoff within a small multiplicative error in the general case, then the exact mean-payoff can be found~\cite{Gentilini14}. Despite several results for mean-payoff games, there is currently {\em no} approximation algorithm for general energy games. 
Our algorithm is the first non-trivial approximation algorithm for the energy game.

Our third contribution is a variant of the {\em Value Iteration Algorithm} by Brim et al.~\cite{Brim11} which runs faster in many cases. The running time of the algorithm depends on a concept that we call {\em admissible list} (defined in Section~\ref{sec:value_iteration_algorithm}) which uses the weight structure. One consequence of this result is used to prove Theorem~\ref{thm:rounding_procedure_gives_desired_approximation}. The other consequence is an algorithm for what we call the {\em fixed-window} case.

\setcounter{TheoremThree}{\value{theorem}}
\begin{theorem}\label{thm:main 3}\label{thm:algorithm_windowed_game}
If there are $ d $ values $ w_1, \ldots, w_d $ and a window size $ \delta $ such that for every edge $ (u, v) \in G $ we have $ w(u, v) \in \{ w_i - \delta, \ldots, w_i + \delta \} $ for some $ 1 \leq i \leq d $, then the minimal energies can be computed in $ O (m \delta n^{d+1}) $ time.
\end{theorem}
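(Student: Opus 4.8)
The plan is to reduce the theorem to the value-iteration machinery with \emph{admissible lists} developed in Section~\ref{sec:value_iteration_algorithm}: it suffices to exhibit, for the fixed-window case, an admissible list of size $O(\delta n^{d+1})$. Recall that the value-iteration algorithm of Brim et al.~\cite{Brim11} maintains an under-approximation $f\colon V\to\mathbb{Z}^{\ge 0}\cup\{\infty\}$ of $e^*_{G,w}$, starting from $f\equiv 0$ and repeatedly lifting ``inconsistent'' nodes via $f(v)\leftarrow\max\{0,\mathrm{opt}_{(v,u)\in E}(f(u)-w(v,u))\}$ (with $\mathrm{opt}=\min$ at Alice's nodes and $\mathrm{opt}=\max$ at Bob's), and that its running time is $O(mK)$ where $K$ bounds the number of distinct values the estimate takes at any node (in general $K=(n-1)W+1$, giving $O(mnW)$). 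The variant rounds every lifted value up to the next entry of a precomputed, sorted list $\mathcal{L}\cup\{\infty\}$; as established in Section~\ref{sec:value_iteration_algorithm}, this still converges to $e^*_{G,w}$ provided $0\in\mathcal{L}$ and $e^*_{G,w}(v)\in\mathcal{L}\cup\{\infty\}$ for every node $v$, and it then runs in time $O(m|\mathcal{L}|)$ up to lower-order terms (generating and sorting $\mathcal{L}$ and maintaining a per-node monotone pointer into it cost $O(n|\mathcal{L}|+|\mathcal{L}|\log|\mathcal{L}|)$, which is absorbed since $m\ge n-1$).

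For the fixed-window case I would use the admissible list
\[
  \mathcal{L}\;=\;\Bigl\{\,-\textstyle\sum_{i=1}^{d} c_i w_i-E\ :\ c_i\in\mathbb{Z}^{\ge 0},\ \textstyle\sum_{i=1}^{d}c_i\le n-1,\ E\in\mathbb{Z},\ |E|\le\delta\textstyle\sum_{i=1}^{d}c_i\,\Bigr\}\cap\mathbb{Z}^{\ge 0}.
\]
Informally, an element of $\mathcal{L}$ is the negation of the total weight of a walk using at most $n-1$ edges: decomposing each traversed weight as $w_i+\varepsilon$ with $|\varepsilon|\le\delta$, the coefficient $c_i$ counts the edges whose weight is clustered at $w_i$, and $E$ is the sum of the offsets. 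Clearly $0\in\mathcal{L}$ (all $c_i=0$, $E=0$). The step I expect to be the crux is verifying $e^*_{G,w}(v)\in\mathcal{L}\cup\{\infty\}$. For this I would invoke memoryless determinacy of energy games: fixing positional optimal strategies for both players, the induced play from $v$ reaches a cycle within the first $n$ steps, i.e.\ it is a path followed by a cycle together visiting at most $n$ distinct vertices; if $e^*_{G,w}(v)<\infty$ the cycle has nonnegative total weight, so the partial sums along the play do not decrease once the cycle is entered, and hence $e^*_{G,w}(v)$ equals the negation of the minimum of the partial sums over the first at most $n-1$ edges. That minimum is the total weight of a walk of at most $n-1$ edges, which by the cluster decomposition has exactly the form $\sum_i c_i w_i+E$ with $\sum_i c_i\le n-1$ and $|E|\le\delta\sum_i c_i$; negating it and using that the result is nonnegative places it in $\mathcal{L}$. (One also has to confirm that rounding lifted values up into $\mathcal{L}\cup\{\infty\}$ never overshoots $e^*_{G,w}$, which is part of the correctness of the admissible-list variant: the rounded lift operator is monotone, so each of its fixed points is a pre-fixed point of the ordinary lift operator and therefore at least the least fixed point $e^*_{G,w}$, while $e^*_{G,w}$ — lying in $\mathcal{L}\cup\{\infty\}$ — is itself a fixed point of the rounded operator.)

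Finally I would bound $|\mathcal{L}|$: the number of tuples $(c_1,\dots,c_d)\in(\mathbb{Z}^{\ge 0})^d$ with $\sum_i c_i\le n-1$ is $\binom{n-1+d}{d}=O(n^{d})$, and for each such tuple the offset total $E$ ranges over at most $2\delta(n-1)+1=O(\delta n)$ integers, so $|\mathcal{L}|=O(\delta n^{d+1})$. Substituting this into the $O(m|\mathcal{L}|)$ running time of the admissible-list value iteration yields the claimed bound $O(m\delta n^{d+1})$. The main obstacle is thus the structural fact linking $e^*_{G,w}(v)$ to a bounded-length negated partial sum (and hence to $\mathcal{L}$); the rest is the combinatorial count of $\mathcal{L}$ and a routine appeal to the Section~\ref{sec:value_iteration_algorithm} machinery.
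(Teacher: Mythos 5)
Your proposal is correct and follows essentially the same route as the paper: you build the same admissible list (the paper's $A'$, indexed by a coefficient vector with $\sum_i c_i\le n$ plus an offset of magnitude $O(n\delta)$), justify admissibility via the same lasso/simple-path characterization of the minimal energy (the paper's Lemma~\ref{lem:universal_admissible_list}), count $O(n^d)$ coefficient tuples times $O(\delta n)$ offsets, and feed the list into the admissible-list value iteration of Proposition~\ref{pro:value_iteration}. One small imprecision worth noting: within a nonnegative-weight cycle the partial sums \emph{can} still dip below their value at the cycle's entry point (e.g.\ weights $-5,+5$), so the correct statement is that any prefix containing $k\ge 1$ full cycle traversals weighs at least as much as the corresponding prefix with those traversals deleted, which still confines the minimizing prefix to a simple path of at most $n$ edges exactly as in the paper's Lemma~\ref{lem:universal_admissible_list}.
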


The fixed-window case, besides its theoretical attractiveness, is also interesting from a practical point of view. Energy and mean-payoff games have many applications in the area of verification, mainly in the synthesis of reactive systems with resource constraints~\cite{BCHJ09} and performance aware program synthesis~\cite{CCHRS11}. In most applications related to synthesis, the resource consumption is through only a few common operations, and each operation depending on the current state of the system consumes a related amount of resources. In other words, in these applications there are $ d $ groups of weights (one for each operation) where in each group the weights differ by at most $\delta$ (i.e, $\delta$ denotes the small variation in resource consumption for an operation depending on the current state), and $d$ and $\delta$ are typically constant. Theorem~\ref{thm:main 3} implies a polynomial-time algorithm for this case.

We also show that the energy game problem is still as hard as the general case even when the clique-width is bounded or the graph is strongly ergodic (see Section~\ref{sec:discussion}). This suggests that restricting the graph structures might not help in solving the problem, which is in sharp contrast to the fact that parity games can be solved in polynomial time in these cases~\cite{Obdrzalek07,Lebedev05}.
\begin{theorem}
The energy game problem on arbitrary graphs is polynomial-time equivalent to the energy game problem on graphs that have bounded clique-width as well as to the energy game problem on graphs that are strongly ergodic.
\end{theorem}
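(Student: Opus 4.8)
Two of the implications are trivial: a graph of bounded clique-width, and likewise a strongly ergodic graph, is in particular a graph, so each restricted version of the problem reduces to the general one by the identity map. The work is in the other direction, and my plan is the same for both targets: given an arbitrary instance $(G,w)$ and a start node $s$, I will build in polynomial time an instance $(G',w')$ whose underlying graph lies in the desired restricted class and from which $\en^*_{G,w}(s)$ can be read off (making, if necessary, polynomially many such calls --- one per start node, and one per threshold using the usual weight shift). Since every reduction I describe produces an instance of polynomially bounded size and polynomially bounded weights, this gives the claimed polynomial-time equivalence.

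For the strongly ergodic target, the idea is to graft onto $(G,w)$ a ``reset'' mechanism that lets \emph{both} players force the play back to a fixed node $s$, which forces the value to be the same from every start node (that is, ergodic). Concretely I would replace each node $v$ by a gadget with an \Alice-gate and a \Bob-gate; from either gate the owner may either continue the original game out of $v$ or, at cost $0$, redirect the token to $s$. Giving both players this option makes the value at every node equal to the value of the $s$-game, so the resulting game is ergodic, and --- after throwing in a few cheap edges to make the graph strongly connected --- strongly ergodic; a winning configuration at $s$ in $(G,w)$ corresponds to a uniform win in $(G',w')$. The only thing to verify is that the new $0$-weight edges create no cycle that changes the value at $s$, which holds because any play using a reset edge infinitely often has the same long-run average as the corresponding play from $s$.

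For the bounded clique-width target, the plan is to move \emph{all} the combinatorial content of $(G,w)$ into the weights and make the underlying graph structurally trivial: a graph assembled from a constant number of complete bipartite ``layers'' (equivalently, a bounded-size template blown up by independent sets), whose clique-width is $O(1)$ because clique-width is bounded on cographs and is preserved under disjoint union, complete join, and blow-ups of a fixed graph. In this $G'$ every node is joined to every node of the next layer; the weight of an edge records whether the move it represents is legal in $(G,w)$ --- in which case the edge carries $w(u,v)$ --- or forbidden --- in which case the edge must deliver the token into a trap, namely a negative self-loop on which \Alice loses if the tail of the edge is \Alice's, and a non-negative self-loop on which \Bob loses if the tail is \Bob's. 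If the traps are set up so that a forbidden move is never advantageous to the player who would make it, then optimal play in $(G',w')$ stays inside the legal sub-graph, which is a faithful copy of $(G,w)$, so the minimal energies agree; the universal bound $M=nW$ survives because the trap weights can be kept of magnitude $\Theta(nW)$.

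The hard part will be this last point --- designing the forbidden-move traps while keeping the clique-width bounded. It is tempting to simply put a huge negative weight on a forbidden edge leaving an \Alice-node and a huge positive weight on one leaving a \Bob-node, but this fails: a very negative edge is a present for \Bob and a very positive one a present for \Alice, so a player with no good legal move may actually want to take a forbidden one if its head is favourable. Hence each forbidden edge must actually transfer \emph{control} into the appropriate trap rather than merely carry a punishing weight, and doing this routing while still expressing $G'$ with a constant number of complete bipartite layers --- so that the clique-width, not merely the degree or the genus, stays $O(1)$ --- is the delicate part, and I expect it to require a careful case analysis on the owners of the two endpoints of each forbidden edge.
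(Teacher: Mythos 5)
There are genuine gaps in both halves of your plan. For the ergodicity half, the zero-cost reset edges change the answer: if \Bob may return the token to $s$ for free, he wins whenever he can force a negative-weight \emph{excursion} from $s$ back to $s$, even if every genuine cycle reachable from $s$ is non-negative. Concretely, take a single cycle through $s$ of total weight $+1$ whose prefix up to some intermediate node has weight $-5$; \Alice wins at $s$ with energy $5$ in $(G,w)$, but \Bob resets at the intermediate node and closes a cycle of weight $-5$ in your $(G',w')$. Your justification (``the same long-run average as the corresponding play from $s$'') is false, because an excursion truncated by a reset need not have the mean of any play of the original game. The paper's version of exactly this gadget fixes the problem by charging the resets: \Alice's reset edge has weight $-nW$ and \Bob's has weight $+nW$, so a reset by either player always closes a cycle whose sign is unfavourable to that player, and the winner at $s$ is preserved. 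Separately, ``strongly ergodic'' in the sense used here (Lebedev's) is a \emph{structural} condition quantified over all induced subgraphs with out-degree at least one; it is not implied by strong connectivity, and it is not the same as ``the value is uniform over start nodes.'' An induced subgraph of your $G'$ that omits $s$ and the gates is just an induced subgraph of the arbitrary input graph, which can have a non-trivial ergodic partition. So ``throwing in a few cheap edges to make the graph strongly connected'' does not establish the property you need.

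For the clique-width half, you have correctly identified the obstruction (a heavily weighted forbidden edge can still be a profitable ``escape'' for the player who would otherwise lose at its tail), but the construction you sketch to evade it is internally inconsistent: an edge cannot simultaneously go to every node of the next layer (which is what makes the graph a union of complete bipartite layers) and be rerouted into a trap self-loop (which is what would make it harmless), and you explicitly leave this --- the entire content of the reduction --- unresolved. The paper's resolution is the missing idea, and it also unifies the two targets: \emph{first} apply the reset gadget (with the $\pm nW$ weights) to reduce to instances in which one player wins at \emph{every} node; once that holds, there is no favourable node to escape to, so the naive padding works --- subdivide same-owner edges to make the graph bipartite, then add every missing \Alice-to-\Bob edge with weight $-nW$ and every missing \Bob-to-\Alice edge with weight $n^{2}W$. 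The result is a genuinely complete bipartite graph, which has clique-width two \emph{and} is strongly ergodic, so a single reduction (for the decision problem, combined with the standard decision-to-value equivalence) proves both statements at once. Without the win-everywhere preprocessing step, neither your trap idea nor any choice of weights on padding edges into real nodes can be made to work.
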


\section{Preliminaries}\label{sec:definition}\label{sec:prelim}
Figure~\ref{table:notations} summarizes the notation introduced in this section.
\begin{figure}[h]
\centering
\fbox{
\begin{tabular}{l p{9cm}}
$ G = (V, E) $ & Directed graph with nodes $ V $ and edges $ E $ in which every node has out-degree $ \geq 1 $ \\
$ V_A $ ($V_B$) & Set of nodes controlled by Alice (Bob). $ V_A \cup V_B = V $ and $ V_A \cap V_B = \emptyset $ \\
$ n $ & Number of nodes in $ G $, i.e., $ n = |V| $ \\
$ m $ & Number of edges in $ G $, i.e., $ m = |E| $ \\
$ w(u, v) $ & Weight of edge $ (u, v) $ \\
$ w(P) $ & Total weight of a finite path $ P $, sum of all edge weights on $ P $ \\
$ W $ & Maximum absolute edge weight, $ W = \max_{(u, v) \in E} | w(u, v) | $ \\
$ \en^*_{G, w} (v) $ & Minimal energy at node $ v $ in weighted graph $ (G, w) $ \\
$ P (G, w) $ & Penalty of weighted graph $ (G, w) $ \\
$ \sigma $ ($ \tau $) & A strategy of Alice (Bob), i.e., a function that maps every node $ u \in V_A $ ($ u \in V_B $) to a neighboring node $ v $ such that $ (u, v) \in E $ \\
$ (\sigma, \tau) $ & A pair of strategies where $ \sigma $ is a strategy of Alice and $ \tau $ is a strategy of Bob \\
$ \sigma^* $ ($\tau^*$) & An optimal strategy of Alice (Bob) \\
$ G(\sigma, \tau) $ & Restriction of $ G $ to pair of strategies $ (\sigma, \tau) $
\end{tabular}
}
\caption{Overview of notation defined in Section~\ref{sec:prelim}}\label{table:notations}
\end{figure}

\paragraph{Energy Games.} An energy game is played by two players, Alice and Bob. Its input instance consists of a finite weighted directed graph $ (G, w) $ where all nodes have out-degree at least one\footnote{$(G, w)$ is usually called a ``game graph'' in the literature. We will simply say ``graph''.}.
The set of nodes $ V $ is partitioned into $ V_A $ and $ V_B $, which belong to Alice and Bob respectively, and every edge $ (u, v) \in E $ has an integer weight $ w(u, v) \in \{ -W, \ldots, W \} $.
It can be assumed without loss of generality that there are no self-loops.\footnote{If some node $ v $ has a self-loop $ (v, v) $ of weight $ w(v, v) $, we can replace the self-loop as follows: we add an artificial node $ v' $ and two edges $ (v, v') $ and $ (v',v ) $. The edges $ (v, v') $ and $ (v',v ) $ both get the weight $ w(v, v) $. This does not change the energy values nor the average weight of any cycle.}
Additionally, we are given a node $ s $ and an initial energy $\en_0$.
To formally define energy games, we need the notion of {\em strategies}. While general strategies can depend on the history of the game, it has been shown that we can assume that if a player wins a game, a {\em positional strategy} suffices to win~\cite{CdAHS03,Bouyer08}.\footnote{Positional strategies are both \emph{pure} and \emph{memoryless}, i.e., they are deterministic and do not depend on the history of the game. The existence of optimal positional strategies in energy games follows immediately from the existence of optimal positional strategies in mean-payoff games~\cite{EM79} and the reduction of energy games to mean-payoff games~\cite{Bouyer08}.} Therefore we only consider positional strategies. A positional strategy $\sigma$ of Alice is a mapping from each node in $V_A$ to one of its out-neighbors, i.e., for any $u\in V_A$, $\sigma(u)=v$ for some $(u,v)\in E$. This means that Alice sends the car to $v$ every time it is at $u$. We define a positional strategy $\tau$ of Bob similarly.
We simply use ``strategy'' instead of ``positional strategy'' in the rest of the paper.

A \emph{pair of strategies} $ (\sigma, \tau) $ consists of a strategy $ \sigma $ of Alice and $ \tau $ of Bob.
For any pair of strategies $ (\sigma, \tau) $, we define $G(\sigma, \tau)$ to be the subgraph of $G$ having only edges corresponding to the strategies~$ \sigma $ and~$ \tau $; i.e.,
\[G(\sigma, \tau)=(V, E') ~~~\mbox{where}~~~ E'=\{(u, \sigma(u))\mid u\in V_A\}\cup \{(u, \tau(u))\mid u\in V_B\}.\]
In $ G (\sigma, \tau) $ every node has a unique out-edge.

\label{sec:definition_energy}

Now, consider an energy game played by Alice and Bob starting at node $s$ with initial energy $\en_0$ using strategies $\sigma$ and $\tau$, respectively. We use $G(\sigma, \tau)$ to determine who wins the game as follows. For any $i$, let $P_i$ be the (unique) directed path of length $ i $ in $G(\sigma, \tau)$ originating at $s$. Observe that $P_i$ is exactly the path that the car will be moved along for $i$ rounds, and the energy of the car after~$i$ rounds is $\en_i=\en_0+w(P_i)$ where $w(P_i)$ is the sum of the edge weights in $P_i$. We say that {\em Bob wins} the game if there exists $i$ such that $\en_0+w(P_i)<0$ and {\em Alice wins} otherwise. Equivalently, we can determine who wins as follows. Let $C$ be the (unique) cycle reachable by $s$ in $G(\sigma, \tau)$, and let $w(C)$ be the sum of the edge weights in $C$. If $w(C)<0$, then Bob wins; otherwise, Bob wins if and only if there exists a {\em simple} path $P_i$ of some length $ i $ such that $\en_0+w(P_i)<0$.

This leads to the following definition of the {\em minimal sufficient energy} at node~$s$ \emph{corresponding to strategies} $\sigma$ and $\tau$, denoted by $\en^*_{G(\sigma, \tau), w}(s)$: If $w(C)<0$, then $\en^*_{G(\sigma, \tau), w}(s)=\infty$; otherwise, $\en^*_{G(\sigma, \tau), w}(s)=\max\{0, -\min w(P_i)\}$ where the minimization is over all {\em simple} paths $P_i$ in $G(\sigma, \tau)$ originating at $s$.
We then define the {\em minimal sufficient energy} at node $s$ to be
\begin{equation}
\en^*_{G, w}(s) = \min_\sigma \max_\tau \en^*_{G(\sigma, \tau), w}(s)\label{eq:energy definition}
\end{equation}
where the minimization and the maximization are over all positional strategies $\sigma$ of Alice and $\tau$ of Bob, respectively. We note that it follows from Martin's determinacy theorem~\cite{Mar75} that $\min_\sigma \max_\tau \en^*_{G(\sigma, \tau), w}(s) = \max_\tau\min_\sigma \en^*_{G(\sigma, \tau), w}(s)$, and thus it does not matter which player picks the strategy first. We say that a strategy $\sigma^*$ of Alice is an {\em optimal strategy} if for any strategy $\tau$ of Bob, $\en^*_{G(\sigma^*, \tau), w}(s)\leq \en^*_{G, w}(s)$. Similarly, $\tau^*$ is an optimal strategy of Bob if for any strategy $\sigma$ of Alice, $\en^*_{G(\sigma, \tau^*), w}(s)\geq \en^*_{G, w}(s)$.

We call any $\en: V\rightarrow \mathbb{Z}^{\geq 0} \cup \{\infty\}$ an {\em energy function}. We call $\en_{G, w}^*$ in Eq.~\eqref{eq:energy definition} a {\em minimal sufficient energy function} or simply a {\em minimal energy function}.
By this definition the minimal energy function is unique.
If $\en(s)\geq \en^*_{G, w}(s)$ for all $s$, then we say that $\en$ is a {\em sufficient} energy function.
The goal of the energy game problem is to compute $e^*_{G, w}$.

We say that a natural number $ M $ is an {\em upper bound on the finite minimal energy} if for every node $ v $ either $ \en^*_{G, w} (v) = \infty $ or $ \en^*_{G, w} (v) \leq M $. This means that every finite minimal energy is bounded from above by $ M $.
A universal upper bound is $ M = nW $~\cite{Brim11}.

\paragraph{Penalty.} Let $(G, w)$ be a weighted graph. For any node $s$ and real $D\geq 0$, we say that $s$ has a {\em penalty of at least $D$} if there exists an optimal strategy $\tau^*$ of Bob such that for any strategy $\sigma$ of Alice, the following condition holds for the (unique) cycle $ C $ reachable by $s$ in $G(\sigma, \tau^*)$: if $ w(C) < 0 $, then the average weight on $ C $ is at most $ -D $, i.e. $ \sum_{(u, v) \in C} w (u, v) / |C| \leq -D $. See Figure~\ref{fig:example_penalty} for an example.

\begin{figure}
\centering

\begin{tikzpicture}
\SetGraphUnit{1.5}
\tikzset{VertexStyle/.style = {minimum size=15pt,inner sep=0pt,outer sep=0pt,draw}}
\tikzset{EdgeStyle/.style = {->}}
\tikzset{LabelStyle/.style= {draw=none,inner sep=3pt,outer sep=0pt,fill=lightgray!50}}

\Vertex[NoLabel,style={shape=circle}]{a}
\NOWE[NoLabel,style={shape=rectangle}](a){b}
\NOEA[NoLabel,style={shape=rectangle}](a){c}

\Edge[label=7](a)(b)
\Edge[label=2](a)(c)
\Edge[label=4](b)(c)
\Edge[label=-8, style={->,bend left=60}](c)(a)

\node at (0, 0) {\includegraphics[width=10pt]{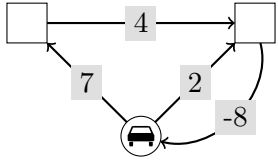}};
\end{tikzpicture}

\caption{
The graph of this picture is the modification of the graph in Figure~\ref{fig:energy_games_example} where we have fixed Bob's optimal strategy.
The round node belongs to Alice and the rectangular nodes belong to Bob.
Alice has two strategies at the bottom node.
If she chooses to go left, then the car runs into a cycle of total weight $ 3 $ and average weight $ 1 $.
If she goes right, the car runs into a cycle of total weight $ -6 $ and average weight $ -3 $.
Therefore the graph has penalty $ 3 $.
}\label{fig:example_penalty}
\end{figure}

Intuitively, this means that either Alice wins the game using a finite initial energy, or she loses {\em significantly}, i.e., even if she would constantly receive an extra energy of a little less than $D$ per round, she still needs an infinite initial energy in order to win the game. We note that $ \sum_{(u, v) \in C} w (u, v) / |C|$ is known in the literature as the {\em mean-payoff} of $s$ when Alice and Bob play according to $ \sigma $ and $ \tau^*$, respectively. Thus, the condition above is equivalent to saying that either the mean-payoff of $s$ (when ($\sigma, \tau^*$) is played) is non-negative or otherwise it is at most $-D$. 

We define the penalty of $s$, denoted by $P_{G, w}(s)$, as the supremum\footnote{We need to take the supremum here to include the case that $ s $ has penalty of at least $ D $ for every real $ D $. In this case, $P_{G, w}(s) =\infty $ ($P_{G, w}(s)$ will not be well-defined if we use maximum instead of supremum).} of all $D$ such that $s$ has a penalty of at least $D$.
We say that the graph $(G, w)$ has a penalty of at least $D$ if every node $s$ has a penalty of at least $D$, and define $P(G, w)=\min_{s\in G} P_{G, w}(s)$. Note that for any graph $(G, w)$, $P(G, w)\geq 1/n$ since for any cycle $C$, $\sum_{(u, v) \in C} w (u, v) / |C|$ is either non-negative or at most $-1/n$.

\section{Value Iteration Algorithm with Admissible List}\label{sec:value_iteration_algorithm}

In this section we present a variant of the {\em Value Iteration Algorithm} for computing the minimal energies of Brim et al.~\cite{Brim11}. In addition to the graph $(G, w)$, our algorithm uses one more parameter $ A $ which is a sorted list containing all possible minimal energy values. That is, the algorithm is promised that $ \en^*_{G, w} (v) \in A $ for every node $ v $.
We call any sorted list $A$ such that $ e^*_{G, w} (v) \in A $ for every node $ v $ an {\em admissible list}.
We show the following proposition.

\setcounter{ValueIterationCounter}{\value{theorem}}
\begin{proposition}\label{pro:value_iteration}
There is an algorithm that, given a (sorted) admissible list $A$, computes the minimal energies of all nodes in $ (G, w) $ in $ O (m|A|) $ time.
\end{proposition}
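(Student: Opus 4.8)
The plan is to modify the value iteration algorithm of Brim et al.~\cite{Brim11} so that, instead of raising each node's energy estimate in increments of one, it jumps directly to the next value stored in the admissible list $A$. Recall that Brim et al.\ characterize $\en^*_{G,w}$ as the least energy function $f$ with $\Lambda(f) \le f$ (componentwise), where the monotone \emph{lift operator} $\Lambda$ is defined, with the convention $\infty - w = \infty$, by
\[
  \Lambda(f)(v) = \begin{cases}
    \min_{(v,u) \in E} \max\{0,\, f(u) - w(v,u)\}, & v \in V_A, \\
    \max_{(v,u) \in E} \max\{0,\, f(u) - w(v,u)\}, & v \in V_B,
  \end{cases}
\]
and that this least fixed point is computed by a worklist iteration: start from $f \equiv 0$, and while some node $v$ is \emph{inconsistent}, i.e.\ $f(v) < \Lambda(f)(v)$, set $f(v) \leftarrow \Lambda(f)(v)$; stop when no inconsistent node remains. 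To detect inconsistency efficiently, Brim et al.\ keep for each $v$ a counter $\counter(v)$ equal to the number of out-edges $(v,u)$ with $\max\{0, f(u)-w(v,u)\} > f(v)$; then $v \in V_A$ is inconsistent iff $\counter(v) = \degree^+(v)$ and $v \in V_B$ is inconsistent iff $\counter(v) \ge 1$, and raising $f$ at a single node affects, in $O(1)$ time each, only the counters of its in-neighbors.

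My modification, and its correctness: without loss of generality $0 \in A$ (prepend it if needed; this keeps $A$ sorted and admissible and changes $|A|$ by at most one). Represent each value $f(v)$ by a pointer into $A$; when a node $v$ is processed, set $f(v) \leftarrow \lceil \Lambda(f)(v) \rceil_A$, the least element of $A$ that is $\ge \Lambda(f)(v)$, advancing $v$'s pointer accordingly. I would establish the invariant $f(v) \le \en^*_{G,w}(v)$ for every $v$ throughout the run: it holds at initialization, and when $v$ is processed, monotonicity of $\Lambda$ together with $\Lambda(\en^*_{G,w}) \le \en^*_{G,w}$ gives $\Lambda(f)(v) \le \Lambda(\en^*_{G,w})(v) \le \en^*_{G,w}(v)$, and since $\en^*_{G,w}(v) \in A$ the rounding up does not overshoot, i.e.\ $\lceil \Lambda(f)(v)\rceil_A \le \en^*_{G,w}(v)$. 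Since processing an inconsistent node strictly increases its $f$-value (from $f(v) < \Lambda(f)(v) \le \lceil\Lambda(f)(v)\rceil_A$, and $\Lambda(f)(v)$ does not depend on $f(v)$), every pointer is monotone and advances at most $|A|$ times, so the algorithm terminates. At termination no node is inconsistent, so $\Lambda(f) \le f$; as $\en^*_{G,w}$ is the \emph{least} such function and $f \le \en^*_{G,w}$, we conclude $f = \en^*_{G,w}$.

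For the running time I would charge all work to pointer advances and counter updates. Maintaining the worklist with a membership flag — so a node is (re)inserted only when a counter update pushes it over the inconsistency threshold — guarantees that every popped node is genuinely inconsistent, hence each node is popped $O(|A|)$ times. One processing of $v$ costs $O(\degree^+(v))$ to recompute $\Lambda(f)(v)$ and reset $\counter(v)$, $O(1)$ amortized to advance $v$'s pointer up to $\lceil\Lambda(f)(v)\rceil_A$ (the total number of advances of $v$'s pointer is $\le |A|$), and $O(\degree^-(v))$ to update the counters of $v$'s in-neighbors and enqueue any that become inconsistent. Summing over the $O(|A|)$ processings of each node gives $\sum_v |A|\,(\degree^+(v) + \degree^-(v)) = O(|A|\,m)$, and initialization (reading $A$, computing all initial counters, seeding the worklist) costs $O(m + n + |A|) = O(m|A|)$ since $m \ge n$. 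This yields the claimed $O(m|A|)$ bound.

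The step I expect to be the main obstacle is the correctness of the rounding: showing that jumping directly to the next admissible value never overshoots $\en^*_{G,w}$. This is precisely where the monotonicity of $\Lambda$, the fixed-point property of $\en^*_{G,w}$, and the admissibility condition $\en^*_{G,w}(v) \in A$ must all be combined. The remainder is bookkeeping — chiefly, adapting Brim et al.'s counter argument so that the amortized cost is $O(\degree^+(v) + \degree^-(v))$ per \emph{pointer advance} rather than per unit increment of $f(v)$, and checking that $m \ge n$ absorbs the initialization and pointer-advance overhead.
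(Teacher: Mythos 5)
Your proposal is correct and follows essentially the same route as the paper: it modifies Brim et al.'s value iteration so that each update rounds up to the next value in $A$, proves the invariant $f(v)\le \en^*_{G,w}(v)$ (using admissibility to show the rounding never overshoots, which is exactly the paper's key step), and reuses the counter-based speed-up to charge $O(\degree^+(v)+\degree^-(v))$ per update for the $O(m|A|)$ bound. The only cosmetic differences are that you phrase the characterization via the lift operator and its least pre-fixed point rather than the paper's three-condition lemma, and you initialize at $0$ (prepending it to $A$) rather than at $\min A$; neither affects the argument.
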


In general, the simplest choice of an admissible list is $ A = \{0, 1, \ldots, nW, \infty \} $. In this case the algorithm works like the current fastest pseudopolynomial algorithm by Brim et al.~\cite{Brim11} and has a running time of $O(mnW)$. However, for some natural cases, we can give smaller admissible lists. Our first example are graphs where every edge weight is a multiple of an integer $ B > 0 $, as shown in the following corollary.
This corollary will be used later in this paper.

\begin{corollary}\label{cor:algorithm_rounded_game}
Let $ (G, w) $ be a graph for which there is an integer $ B > 0 $ such that the weight of every edge $ (u, v) \in G $ is of the form $ w(u,v) = i B $ for some integer $ i $, and $M$ is an upper bound on the finite minimal energy (i.e., for any node $v$, if $\en^*_{G, w}(v)<\infty$, then $\en^*_{G, w}(v)\leq M$).
There is an admissible list of size $O(M/B)$ which can be computed in $O(M/B)$ time. Thus there is an algorithm that computes the minimal energies of $ (G, w) $ in $ O (m M/B ) $ time.
\end{corollary}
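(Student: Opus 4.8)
The plan is to exploit that, when every edge weight is an integer multiple of $B$, every finite value of $\en^*_{G, w}$ is itself a nonnegative integer multiple of $B$. Granting this for a moment, set
\[
A = \{\, iB \mid 0 \le i \le \lfloor M/B \rfloor \,\} \cup \{\infty\}.
\]
For every node $v$ we then have $\en^*_{G, w}(v) \in A$: if $\en^*_{G, w}(v) = \infty$ this is clear, and otherwise the hypothesis gives $\en^*_{G, w}(v) \le M$, while the claim gives $\en^*_{G, w}(v) = iB$ for some $i \in \mathbb{Z}^{\geq 0}$, so $0 \le i \le \lfloor M/B\rfloor$. Hence $A$ is an admissible list. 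It is already sorted, has $\lfloor M/B\rfloor + 2 = O(M/B)$ entries, and can be written down in $O(M/B)$ time. Plugging $A$ into the algorithm of Proposition~\ref{pro:value_iteration} computes all minimal energies in $O(m|A|) = O(mM/B)$ time, which is the statement.

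It therefore suffices to prove the claim. First I would note that, for a fixed pair of strategies $(\sigma, \tau)$, the unique reachable cycle $C$ and every simple path $P_i$ from $s$ in $G(\sigma, \tau)$ consist of edges whose weights are multiples of $B$, so their total weights $w(C)$ and $w(P_i)$ are multiples of $B$. By the definition of $\en^*_{G(\sigma, \tau), w}(s)$ from Section~\ref{sec:prelim}, this value is either $\infty$ (when $w(C) < 0$) or $\max\{0, -\min_i w(P_i)\}$; in the latter case it is the maximum of $0 = 0 \cdot B$ and a multiple of $B$, hence a nonnegative multiple of $B$. Thus $\en^*_{G(\sigma, \tau), w}(s) \in \{\, iB \mid i \in \mathbb{Z}^{\geq 0}\,\} \cup \{\infty\}$ for every $(\sigma, \tau)$. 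Since $\en^*_{G, w}(s) = \min_\sigma \max_\tau \en^*_{G(\sigma, \tau), w}(s)$ is a finite min--max over values from this set, and the set is closed under finite minima and maxima, $\en^*_{G, w}(s)$ lies in it too. (Equivalently, one may invoke an optimal positional strategy pair $(\sigma^*, \tau^*)$, for which $\en^*_{G, w}(s) = \en^*_{G(\sigma^*, \tau^*), w}(s)$ by definition of optimality, and then apply the path argument to this single pair.)

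I do not expect any genuine obstacle: the argument is a direct unwinding of the definitions together with an invocation of Proposition~\ref{pro:value_iteration}. The only points that need a little care are including the value $0$ (which is automatic, since $0$ is a multiple of $B$) and keeping the $\infty$ case separate throughout; the degenerate case $B > M$ is harmless, since then $A = \{0, \infty\}$ and the size bound is read as $O(1 + M/B)$.
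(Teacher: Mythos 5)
Your proof is correct and follows essentially the same route as the paper: the paper also takes $A=\{iB \mid 0\le i\le \lceil M/B\rceil\}\cup\{\infty\}$ and justifies admissibility by observing that, under an optimal strategy pair, the finite minimal energy is $\max\{0,-w(P)\}$ for a simple path $P$, hence a nonnegative multiple of $B$. The only cosmetic difference is that the paper factors this through a general lemma about the set $C_{G,w}$ of negated sums of at most $n$ edge weights (reused later for the fixed-window case), whereas you argue the multiple-of-$B$ property directly.
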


Our second example are graphs in which we have a (small) set of values $ \{ w_1, \ldots, w_d \} $ of size~$ d $ and a window size~$ \delta $ such that \emph{every} weight lies in $ \{ w_i - \delta, \ldots, w_i+\delta \} $ for one of the values $ w_i $.
This is exactly the situation described in Theorem~\ref{thm:algorithm_windowed_game}. Since we prove this theorem in this section, we restate it here. 

\setcounter{theoremtmp}{\value{theorem}}
\setcounter{theorem}{\value{TheoremThree}}
\begin{theorem}[Restated]
If there are $ d $ values $ w_1, \ldots, w_d $ and a window size $ \delta $ such that for every edge $ (u, v) \in G $ we have $ w(u, v) \in \{ w_i - \delta, \ldots, w_i + \delta \} $ for some $ 1 \leq i \leq d $, then the minimal energies can be computed in $ O (m \delta n^{d+1}) $ time.
\end{theorem}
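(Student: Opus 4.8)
The plan is to deduce Theorem~\ref{thm:algorithm_windowed_game} from Proposition~\ref{pro:value_iteration} by constructing, directly from the weights, a sorted admissible list of size $O(\delta n^{d+1})$.

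First I would record the structural fact that, for every node $v$, the value $\en^*_{G, w}(v)$ — when finite and nonzero — is the negated weight of some \emph{simple} path of $G$. Indeed, by~\eqref{eq:energy definition}, $\en^*_{G, w}(v)=\min_\sigma\max_\tau \en^*_{G(\sigma, \tau), w}(v)$ is a $\min$–$\max$ over the two finite sets of positional strategies, so it is attained by a concrete pair $(\sigma_0, \tau_0)$; and by the definition of $\en^*_{G(\sigma_0, \tau_0), w}(v)$ this quantity is either $\infty$, or $0$, or $-w(P)$ for a simple path $P$ of $G(\sigma_0, \tau_0)$ starting at $v$, which is in particular a simple path of $G$ with at most $n-1$ edges. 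Hence
\[
\en^*_{G, w}(v)\in\{0,\infty\}\cup\bigl\{-w(P)\ :\ P\text{ a simple path of }G,\ |P|\le n-1\bigr\}\qquad\text{for every }v .
\]
Note that no uniformly optimal strategy is needed: the inclusion holds node by node, and we will only need the union over all $v$.

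Second I would use the window structure to bound the right-hand set. For a simple path $P$ with $\ell\le n-1$ edges, pick for each edge $e\in P$ an index $i(e)\in\{1,\dots,d\}$ with $w(e)\in\{w_{i(e)}-\delta,\dots,w_{i(e)}+\delta\}$, and set $k_i=\lvert\{e\in P : i(e)=i\}\rvert$ and $r=\sum_{e\in P}\bigl(w(e)-w_{i(e)}\bigr)$. Then $k_i\in\mathbb{Z}^{\ge 0}$, $\sum_{i=1}^d k_i=\ell\le n-1$, $\lvert r\rvert\le\delta\ell\le\delta(n-1)$, and $w(P)=\sum_{i=1}^d k_i w_i+r$. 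Consequently the list
\[
A=\{0,\infty\}\cup\Bigl\{-\bigl(\textstyle\sum_{i=1}^d k_i w_i+r\bigr)\ :\ k_i\in\mathbb{Z}^{\ge 0},\ \textstyle\sum_{i=1}^d k_i\le n-1,\ r\in\mathbb{Z},\ \lvert r\rvert\le\delta(n-1)\Bigr\}
\]
is admissible. The number of tuples $(k_1,\dots,k_d)$ with $\sum_i k_i\le n-1$ is $\binom{n-1+d}{d}=O(n^d)$ for constant $d$, and there are $2\delta(n-1)+1=O(\delta n)$ admissible values of $r$, so $\lvert A\rvert=O(\delta n^{d+1})$.

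Third I would produce $A$ already sorted and invoke the value-iteration algorithm. Enumerating the $O(n^d)$ tuples, each tuple contributes the contiguous integer interval $\sum_i k_i w_i+\{-\delta(n-1),\dots,\delta(n-1)\}$ of candidate values; sorting these $O(n^d)$ intervals by left endpoint ($O(n^d\log n)$ time) and merging them expands $A$ into a sorted list in $O(n^d\log n+\delta n^{d+1})=O(\delta n^{d+1})$ time. Feeding $A$ to Proposition~\ref{pro:value_iteration} then computes the minimal energies of all nodes in $O(m\lvert A\rvert)=O(m\delta n^{d+1})$ time, which is the claimed bound.

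\textbf{Main obstacle.} The only step with genuine content is the first one — verifying that a finite minimal energy is always realized as $-w(P)$ for a simple path of $G$ — but this is immediate from the $\min$–$\max$ definition of $\en^*_{G, w}$ together with the path characterization of $\en^*_{G(\sigma, \tau), w}$ given in Section~\ref{sec:prelim}; the remainder is counting and a routine merge to obtain a sorted list.
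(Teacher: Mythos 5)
Your proposal is correct and follows essentially the same route as the paper: the paper's Lemma~\ref{lem:universal_admissible_list} is your first step (finite nonzero minimal energies are negated weights of simple paths), and your admissible list $A$ is the paper's $A'=\bigl\{x-\sum_{j=1}^k w_{i_j}\bigr\}\cup\{\infty\}$ with the same $O(n^d)\times O(\delta n)$ size bound, the same sort-and-merge construction of a sorted list, and the same invocation of Proposition~\ref{pro:value_iteration}. The only (immaterial) difference is that you argue the simple-path representation node by node from the $\min$--$\max$ definition rather than via a single pair of optimal strategies as the paper does.
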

\setcounter{theorem}{\value{theoremtmp}}

As noted in Section~\ref{sec:intro}, in some applications $ d $ is a constant and $ \delta $ is polynomial in $ n $. In this case Theorem~\ref{thm:algorithm_windowed_game} implies a polynomial-time algorithm.

In the rest of this section we first give a proof of Proposition~\ref{pro:value_iteration} (cf. Section~\ref{sec:proofs_value_iteration}).
We subsequently use it to prove Corollary~\ref{cor:algorithm_rounded_game}, and Theorem~\ref{thm:algorithm_windowed_game} (cf. Section~\ref{sec:proofs_admissible_list}).
In both cases, we first highlight the main ideas before giving the full proofs.

\subsection{Proof of Proposition~\ref{pro:value_iteration}}\label{sec:proofs_value_iteration}

In the following we describe the modified value iteration algorithm for computing minimal energies and prove its correctness and running time as stated in Proposition~\ref{pro:value_iteration}.
The value iteration algorithm relies on the following characterization of the minimal energy.

\begin{lemma}[Minimal Energy Characterization \cite{Brim11}]\label{lem:minimal characterization}
An energy function $ \en $ is the minimal energy function of a weighted graph $ (G, w) $ if and only if it fulfills the following three conditions:
\begin{enumerate}
\item \label{item:one} For every node $ u \in V_A $, $ \en (u) + w (u, v) \geq \en (v) $ for {\em some} edge $ (u, v) \in E $.
\item \label{item:two} For every node $ u \in V_B $, $ \en (u) + w (u, v) \geq \en (v) $ for {\em every} edge $ (u, v) \in E $.
\item For every energy function $ \en' $ that fulfills conditions \ref{item:one} and \ref{item:two} we have $ \en(v) \leq \en'(v) $ for every node $ v \in V $.
\end{enumerate}
\end{lemma}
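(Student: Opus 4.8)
The plan is to recast the statement so that it becomes two clean assertions. Let $\mathcal{F}$ be the set of all energy functions $\en\colon V\to\mathbb{Z}^{\ge 0}\cup\{\infty\}$ that satisfy conditions~\ref{item:one} and~\ref{item:two}. I claim the lemma is equivalent to: (a) \emph{completeness}, $\en^*_{G,w}\in\mathcal{F}$; and (b) \emph{soundness}, $\en^*_{G,w}(v)\le\en(v)$ for every $v\in V$ and every $\en\in\mathcal{F}$. Indeed, (a) and (b) together say that $\en^*_{G,w}$ is the (necessarily unique) pointwise-least element of $\mathcal{F}$, and an energy function satisfies conditions 1, 2, and 3 precisely when it lies in $\mathcal{F}$ and is $\le$ every member of $\mathcal{F}$, i.e.\ precisely when it equals $\en^*_{G,w}$. (One direction: if $\en$ satisfies 1--3 then $\en\in\mathcal F$ and $\en\le\en^*_{G,w}$ by (a) and condition 3, while $\en^*_{G,w}\le\en$ by (b), so $\en=\en^*_{G,w}$; conversely $\en^*_{G,w}$ satisfies 1, 2 by (a) and 3 by (b).) So it suffices to establish (a) and (b).

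For soundness (b), I would fix $\en\in\mathcal{F}$ and read off a positional strategy $\sigma$ for Alice: for each $u\in V_A$ let $\sigma(u)$ be an out-neighbour $v$ witnessing condition~\ref{item:one}, i.e.\ with $\en(u)+w(u,v)\ge\en(v)$ (if $\en(u)=\infty$, any out-edge works). Given any $v_0$ with $\en(v_0)<\infty$, any Bob strategy $\tau$, and the path $v_0,v_1,\dots$ from $v_0$ in $G(\sigma,\tau)$, the construction of $\sigma$ (condition~\ref{item:one} at Alice's nodes) together with condition~\ref{item:two} (at Bob's nodes) gives $\en(v_{i+1})\le\en(v_i)+w(v_i,v_{i+1})$ for all $i$. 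An easy induction keeps every $\en(v_i)$ finite, and telescoping with $\en(v_k)\ge 0$ yields $w(P_k)\ge-\en(v_0)$ for every prefix $P_k$; hence starting at $v_0$ with energy $\en(v_0)$ and playing $\sigma$, the energy never drops below zero, against any $\tau$. This forces $\en^*_{G(\sigma,\tau),w}(v_0)\le\en(v_0)$ for all $\tau$ (in particular no negative cycle is reachable), so $\en^*_{G,w}(v_0)\le\en(v_0)$; the case $\en(v_0)=\infty$ is trivial.

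For completeness (a), the key lemma I would isolate is: \emph{if a positional strategy $\sigma$ of Alice lets her survive from $u$ with energy $E$ against every Bob strategy, and $(u,v)\in E$ is an edge with either $u\in V_B$, or $u\in V_A$ and $\sigma(u)=v$, then $\sigma$ lets her survive from $v$ with energy $E+w(u,v)$ against every Bob strategy.} Granting this, condition~\ref{item:one} follows: for $u\in V_A$ with $\en^*_{G,w}(u)<\infty$, take a positional $\sigma^*$ attaining the minimum in~\eqref{eq:energy definition} at $u$ (such a strategy exists since there are only finitely many positional strategies), so $\sigma^*$ lets Alice survive from $u$ with $\en^*_{G,w}(u)$, and put $v=\sigma^*(u)$; the key lemma gives $\en^*_{G,w}(v)\le\en^*_{G,w}(u)+w(u,v)$. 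Condition~\ref{item:two} follows similarly: for $u\in V_B$ with $\en^*_{G,w}(u)<\infty$ take such a $\sigma^*$ and run the key lemma over \emph{every} out-edge $(u,v)$. The case $\en^*_{G,w}(u)=\infty$ is vacuous in both. To prove the key lemma, fix a Bob strategy $\tau$ and modify it to $\hat\tau$ agreeing with $\tau$ except $\hat\tau(u)=v$ (this only matters when $u\in V_B$; when $u\in V_A$, $\sigma(u)=v$ already forces the move). The play from $u$ in $G(\sigma,\hat\tau)$ then begins with the edge $(u,v)$ and afterwards follows the play from $v$ until, if ever, it first returns to $u$. Comparing weights of prefixes of the $v$-play with the corresponding prefixes of the $u$-play (all of weight $\ge -E$), and — in the case where the $v$-play does return to $u$ — using that the resulting cycle through $u$ must have weight $\ge 0$ (else Alice could not survive from $u$), one checks that starting at $v$ with energy $E+w(u,v)$ the energy stays non-negative along the whole $v$-play; hence $\en^*_{G(\sigma,\tau),w}(v)\le E+w(u,v)$.

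The hard part is the key lemma in (a), and within it the subcase where the play from $v$ in the modified graph loops back through $u$: there the $v$-play diverges from the original play past $u$, so one cannot simply call the $v$-play a suffix of the $u$-play. The point that rescues the argument is that Alice's surviving strategy from $u$ rules out any negative cycle reachable from $u$, which forces the cycle $u\to v\to\cdots\to u$ to be non-negative and lets one split an arbitrarily long prefix of the $v$-play into a non-negative ``cyclic'' part plus a tail that is itself a prefix of the $u$-play. Besides this, the routine care is in tracking the $\infty$ values and the non-negative-reachable-cycle conditions consistently throughout both (a) and (b).
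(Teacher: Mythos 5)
The paper does not actually prove this lemma: it imports it verbatim from Brim et al.\ \cite{Brim11} and offers only an informal interpretation of the first two conditions, so there is no in-paper proof to compare against. Your argument is a correct, self-contained proof. The reformulation into completeness ($\en^*_{G,w}$ satisfies conditions~1 and~2) and soundness ($\en^*_{G,w}$ is pointwise below every function satisfying them) is the right decomposition, and both halves go through: soundness is the standard potential/telescoping argument along the play induced by the strategy read off from condition~1, and completeness via your one-step shift lemma is sound, including the delicate subcase where the play from $v$ under the modified Bob strategy $\hat\tau$ loops back through $u$ --- there the prefix splits into the cycle $u\to v\to\cdots\to u$ (non-negative because Alice survives from $u$ against $\hat\tau$) followed by a prefix of the $u$-play against the original $\tau$, exactly as you indicate. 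Two small points deserve an explicit line in a full write-up. First, you repeatedly pass between ``all prefix sums of the play are $\geq -E$'' and ``$\en^*_{G(\sigma,\tau),w}(v)\leq E$''; since the paper defines $\en^*_{G(\sigma,\tau),w}$ via \emph{simple} paths plus the sign of the reachable cycle, one should note that when the reachable cycle is non-negative the minimum over all prefixes is attained on a simple prefix (decompose a long prefix as simple tail plus whole cycles plus a partial cycle), so the two formulations coincide. Second, in the key lemma one should observe that $E+w(u,v)\geq 0$ automatically (it is the energy after the first move of a surviving play), so it is a legitimate initial energy. Neither is a gap, just bookkeeping.
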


Note that the first two conditions of this lemma are trivially satisfied for a node $ u $ if we set $ \en (u) = \infty $.
An intuitive interpretation of the first two conditions is this: Consider any node $u$ of Alice. 
If we believe that $\en(v)$ is sufficient for all neighbors $v$ of $u$, then $\en(u)$ should be sufficient if, when the car has energy $\en(u)$ at $u$, she can move the car to some neighboring node $v$ to make sure that the energy of the car is still sufficient, i.e., $\en(u)+w(u, v)\geq \en(v)$. Similarly, if $u$ is Bob's node and we believe that $\en(v)$ is sufficient for all neighbors $v$ of $u$, then $\en(u)$ should be sufficient if, when the car has energy $\en(u)$ at $u$, it can be guaranteed that the energy is still sufficient for any neighbor $v$ the car is moved to, i.e., $\en(u)+w(u, v)\geq \en(v)$ for all $v$.

The first two conditions give a sufficient condition for an energy function to be sufficient. It can be shown that these conditions are {\em not} necessary (i.e., some sufficient energy functions do not satisfy these conditions). However, an interesting property of these conditions is that it {\em is} necessary for an energy to be minimal.
Since there could be non-minimal energy functions that satisfy the first two conditions, we have to add the third condition: at {\em all} nodes, the minimal energy function has to be smaller than {\em all} other functions that satisfy the first two conditions.
All three conditions together characterize the (unique) minimal energy function. 

We will first give a general algorithm based on value iteration, called Algorithm~\ref{alg:modified_value_iteration_algorithm}, in which nodes are ``updated'' in an arbitrary order.
We will prove the correctness of this algorithm.
Then we will present a second, faster algorithm, called Algorithm~\ref{alg:modified_value_iteration_algorithm_speed_up}, that processes the nodes in a specific order and that uses a simple data structure.
We will argue that this algorithm gives the desired running time.

The basic idea of Algorithm~\ref{alg:modified_value_iteration_algorithm} is as follows. The algorithm starts with an energy function $ \en(v) = \min A$ for every node $ v $ and keeps increasing $ \en $ slightly in an attempt to satisfy the first two conditions in Lemma~\ref{lem:minimal characterization}.
That is, as long as these conditions are not fulfilled for some node $ u $, it increases $ \en(u) $ to the next value in $A$, which could also be $\infty$. This updating process is repeated until $ \en$ satisfies the conditions (which will eventually happen at least when all $\en(u)$ become $\infty$).
This updating process of the algorithm is the same as in the algorithm of Brim et al. except that $ e(u) $ always increases to the next value in $ A $ and not only to the value given by Lemma~\ref{lem:minimal characterization}.

\paragraph{Correctness.}
Algorithm~\ref{alg:modified_value_iteration_algorithm} shows a simplified version of the algorithm.
We adapt the correctness proof of Brim et al.~\cite{Brim11} to our notation.
It turns out that our modification of the algorithm using a list of admissible values does not disturb the overall correctness argument.

\begin{algorithm}
\caption{Modified value iteration algorithm}
\label{alg:modified_value_iteration_algorithm}

\DontPrintSemicolon
\LinesNotNumbered

\KwIn{A weighted graph $ (G, w) $, a sorted list $ A $ of admissible values for the minimal energies}
\KwOut{The minimal energy of $ (G,w) $}

\BlankLine
\nl $ \en (u) \gets \min A  $ for every $ u \in V $ \tcp*[r]{Initialization}
\tcp{Repeat as long as some node $ u $ violates the first two conditions of Lemma~\ref{lem:minimal characterization}}
\nl \While{there is a node $ u \in V $ such that $ u \in V_A $ and \\
	\mbox{}\phantom{\textbf{while} } $( u \in V_A $ and $ \forall (u, v) \in E: \en(u) + w(u, v) < \en(v) )$ or \\
	\mbox{}\phantom{\textbf{while} } $( u \in V_B $ and $ \exists (u, v) \in E: \en(u) + w(u, v) < \en(v) )$}{
	\BlankLine
	\tcp{Update $ \en(u) $}
\nl 	\uIf{$ u \in V_A $}{
\nl 		$ \en (u) \gets \min_{(u, v) \in E} (\en(v) - w(u, v)) $\;
 	}
\nl 	\ElseIf{$ u \in V_B $}{
\nl 		$ \en (u) \gets \max_{(u, v) \in E} (\en(v) - w(u, v)) $\;
 	}
	\BlankLine
\nl 	\tcp{Increase $ \en (u) $ to next admissible value}
\nl 	$ \en (u) \gets \min \{r \in A \mid r \geq \en(u) \} $\label{lin:update_to_admissible_value}
}
\nl \KwRet{$ \en $}
\end{algorithm}

We first prove the following invariant: after every iteration of the algorithm we have $ \en (x) \leq \en^*_{G, w} (x) $ for every node $ x $.
The statement is certainly true before the first iteration: Since $ \en^*_{G, w} (x) \in A $ we have $ \min A \leq \en^*_{G, w} (x) $.

Now assume that $ \en (x) \leq \en^*_{G, w} (x) $ for every node $ x $ at the beginning of the current iteration.
Let $ u $ be the node that is updated in the current iteration.
For every node $ x \neq u $ the value of $ \en (x) $ does not change in the current iteration.
Let $ \en' (u) $ be the value \emph{before} the energy of $ u $ is increased to the next admissible value in Line~\ref{lin:update_to_admissible_value} and let $ \en'' (u) $ be the value after this operation.
Since $ \en'' (u) = \min \{ r \in A \mid r \geq \en' (u) \} $, it is sufficient to show that $ \en' (u) \leq \en^*_{G, w} (u) $.

Consider first the case that $ u \in V_A $.
In this case we have $ \en (u) + w (u, y) < \en (y) $ for every edge $ (u, y) $ because otherwise the algorithm would not update $ u $.
After the update (and before the execution of Line~\ref{lin:update_to_admissible_value}) we still have $ \en' (u) + w (u, y) \leq \en (y) $ for every edge $ (u, y) $.
Since $ \en^*_{G, w} $ is the minimal energy function we have $ \en^*_{G, w} (u) + w(u, v) \geq \en^*_{G, w} (v) $ for some edge $ (u, v) $ by Lemma~\ref{lem:minimal characterization}.
By the induction hypothesis we have $ \en (v) \leq \en^*_{G, w} (v) $.\footnote{Remember that we assume that there are no self-loops and therefore $ v \neq u $.}
Therefore we get
\[
\en' (u) + w (u, v) \leq \en (v) \leq \en^*_{G, w} (v) \leq \en^*_{G, w} (u) + w (u, v)
\]
and it follows that $ \en' (u) \leq \en^*_{G, w} (u) $.

Consider now the case that $ u \in V_B $.
In this case we have $ \en (u) + w (u, v) < \en (v) $ for at least one edge $ (u, v) $.
After the update (and before the execution of Line~\ref{lin:update_to_admissible_value}) we still have $ \en' (u) + w (u, v) = \en (v) $ for at least one edge $ (u, v) $.
Since $ \en^*_{G, w} $ is the minimal energy function we have $ \en^*_{G, w} (u) + w(u, y) \geq \en^*_{G, w} (y) $ for every edge $ (u, y) $ by Lemma~\ref{lem:minimal characterization}.
In particular this holds for the edge $ (u, v) $.
By the induction hypothesis we have $ \en' (v) = \en (v) \leq \en^*_{G, w} (v) $.
In total we get
\[
\en' (u) + w (u, v) = \en (v) \leq \en^*_{G, w} (v) \leq \en^*_{G, w} (u) + w (u, v)
\]
and it follows that $ \en' (u) \leq \en^*_{G, w} (u) $.

This concludes the proof that for the energy function $ \en $ returned by our algorithm we have $ \en (x) \leq \en^*_{G, w} (x) $ for every node $ x $.
Clearly, the energy function returned by our algorithm fulfills the first two conditions of Lemma~\ref{lem:minimal characterization} because otherwise it would not have terminated.
Thus, our algorithm returns the minimal energies, i.e., $ \en (v) = \en_{G, w} (v) $ for every node $ v $.
We remark that the order in which the nodes are processed in the while loop is irrelevant for the correctness proof.
We will use this fact in the following improved algorithm, Algorithm~\ref{alg:modified_value_iteration_algorithm_speed_up}.

\paragraph{Running Time.}
A running time of $ O(m n |A|) $ for Algorithm~\ref{alg:modified_value_iteration_algorithm} is immediate as every node has to be updated at most $ |A| $ times and both updating a node and checking whether it has to be updated takes time proportional to its out-degree.
The speed-up technique of Brim et al.~\cite{Brim11} also works for our modification and gives a running time of $ O(m |A|) $.
The idea is to maintain a counter for Alice's nodes that keeps track of the number of outgoing edges which fulfill the first condition of Lemma~\ref{lem:minimal characterization}.
The energy only has to be updated if the counter reaches $ 0 $.
Algorithm~\ref{alg:modified_value_iteration_algorithm_speed_up} is the full algorithm which we show for the sake of completeness.

\begin{algorithm}
\caption{Modified value iteration algorithm with speed-up technique}
\label{alg:modified_value_iteration_algorithm_speed_up}

\DontPrintSemicolon

\KwIn{A weighted graph $ (G, w) $, a sorted list $ A $ of admissible values for the minimal energies}
\KwOut{The minimal energy of $ (G,w) $}

\BlankLine
\tcp{Initialization}
$ L \gets \{ u \in V_A \mid \forall (u, v) \in E: \en(u) + w(u, v) < \en(v) \} $\label{lin:initialization_start}\;
$ L \gets \{ u \in V_B \mid \exists (u, v) \in E: \en(u) + w(u, v) < \en(v) \} \cup L $\;
$ \en (u) \gets \min A  $ for every $ u \in V $\;
$ \counter (u) \gets 0 $ for every $ u \in V_A \cap L $\;
$ \counter (u) \gets | \{ v \in V \mid (u, v) \in E, \en(u) + w(u, v) \geq \en(v) \} | $ for every $ u \in V_A \setminus L $\label{lin:initialization_end}\;

\BlankLine
\tcp{Repeat as long as some node $ u $ violates the first two conditions of Lemma~\ref{lem:minimal characterization}}
\While{$ L \neq \emptyset $}{
	Pick $ u \in L $\;
	$ L \gets L \setminus \{ u \} $\;
	$ e_\text{old} \gets \en (u) $\;

	\BlankLine
	\tcp{Update node $ u $}
	\uIf{$ u \in V_A $}{\label{lin:start_update_mechanism}
		$ \en (u) \gets \min_{(u, v) \in E} (\en(v) - w(u, v)) $\;
	}
	\ElseIf{$ u \in V_B $}{
		$ \en (u) \gets \max_{(u, v) \in E} (\en(v) - w(u, v)) $\;
	}
	$ \en (u) \gets \min \{r \in A \mid r \geq \en(u) \} $\label{lin:end_update_mechanism}\;
	\If{$ u \in V_A $}{
		$ \counter (u) \gets | \{ v \in V \mid (u, v) \in E, \en(u) + w(u, v) \geq \en(v) \} | $\;
	}
	
	\BlankLine
	\tcp{Check whether neighbors of $ u $ have to be updated}
	\ForEach{$ t \in V $ such that $ (t, u) \in E $ and $ \en(t) + w(t, u) < \en(u) $}{
		\uIf{$ t \in V_A $}{
			\If{$ \en(t) + w(t, u) \geq e_\text{old} $}{
				$ \counter (t) \gets \counter (t) - 1 $\;
			}
			\If{$ \counter (t) \leq 0 $}{
				$ L \gets L \cup \{ t \} $\;
			}
		}
		\ElseIf{$ t \in V_B $}{
			$ L \gets L \cup \{ t \} $\;
		}
	}
}
\KwRet{$ \en $}
\end{algorithm}

To show the correctness of this algorithm the following two invariants are needed:
\begin{enumerate}
\item For every node $ u \in V \setminus L $ the following holds:
\begin{itemize}
\item If $ u \in V_A $, then there is an edge $ (u, v) $ such that $ \en (u) + w (u, v) \geq \en (v) $
\item If $ u \in V_B $, then for every edge $ (u, v) $ we have $ \en (u) + w (u, v) \geq \en (v) $
\end{itemize}
\item If $ u \in V_A \setminus L $, then $ \counter (u) = | \{ v \in V \mid (u, v) \in E, \en(u) + w(u, v) \geq \en(v) \} | $
\end{enumerate}
The proof of these invariants does not differ from the one given by Brim et al.~\cite{Brim11} which is why we omit it here.
The update mechanism in Lines~\ref{lin:start_update_mechanism} to~\ref{lin:end_update_mechanism} is the same as in Algorithm~\ref{alg:modified_value_iteration_algorithm} which we already proved to be correct.

We now obtain the desired running time of Algorithm~\ref{alg:modified_value_iteration_algorithm} as follows.
For every node $ u $, we let $ \degree^+ (u) $ and $ \degree^- (u) $ denote its out-degree and in-degree, respectively.
The initialization steps in Lines~\ref{lin:initialization_start} to \ref{lin:initialization_end} of the algorithm need time $ O(\degree^+ (u)) $ for every node $ u $.
Thus, the total initialization cost is $ O (\sum_{u \in U} \degree^+ (u)) = O(m) $.
Each iteration of the while loop in which we update a node $ u $ needs time $ O (\degree^+ (u) + \degree^- (u)) $.
Since the energy of every node can increase at most $ |A| $ times, the total running time of this Algorithm~\ref{alg:modified_value_iteration_algorithm_speed_up} is
\[
O \left( \sum_{u \in V} (\degree^+ (u) + \degree^- (u)) \cdot |A| \right) = O (m |A|)\,.
\]
This completes the proof of Proposition~\ref{pro:value_iteration}. 

\subsection{Proofs of Corollary~\ref{cor:algorithm_rounded_game} and Theorem~\ref{thm:algorithm_windowed_game}}\label{sec:proofs_admissible_list}

We now prove that in the two special cases described in Corollary~\ref{cor:algorithm_rounded_game} and Theorem~\ref{thm:algorithm_windowed_game}, we can give explicit formulations of admissible lists.
For both proofs we first characterize what values the minimal energy can assume, dependent on the set of edge weights and an upper bound $ M $ on the finite minimal energy.
Specifically, we define
\begin{equation}\label{eq:U M}
U_M = \{ 0, \ldots, M, \infty\} \, .
\end{equation}
We denote the set of different weights of a graph $ (G, w) $ by
\begin{equation}\label{eq:R G w}
R_{G, w} = \{ w(u, v) \mid (u, v) \in E \} .
\end{equation}
The set of all (negated) combinations of edge weights is defined as
\begin{equation}\label{eq:C G w}
C_{G, w} = \left\{ - \sum_{i=1}^k x_i \mid x_i \in R_{G, w} \text{ for all $ i $}, 0 \leq k \leq n \right\} \cup \{\infty\}\,.
\end{equation}
Our key observation is the following lemma.
\begin{lemma}\label{lem:universal_admissible_list}
For every graph $ (G, w) $ with an upper bound $ M $ on the finite minimal energy we have $ \en^*_{G, w} (v) \in C_{G, w}\cap U_M $ for every node $ v \in V $.
\end{lemma}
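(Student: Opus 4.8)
The plan is to show that the minimal energy $\en^*_{G,w}(v)$, when finite, is always the negation of a sum of at most $n$ edge weights along some simple path, and that it is trivially in $U_M$ by the definition of an upper bound on the finite minimal energy. Combining these two facts gives membership in $C_{G,w}\cap U_M$. First I would recall the definition of the minimal energy through strategies: by Eq.~\eqref{eq:energy definition} and the determinacy remark, we may fix an optimal pair of strategies $(\sigma^*, \tau^*)$ realizing the min-max, so that $\en^*_{G,w}(v) = \en^*_{G(\sigma^*,\tau^*), w}(v)$. In the single-out-edge graph $G(\sigma^*,\tau^*)$, the definition of $\en^*_{G(\sigma,\tau),w}(s)$ says that either the unique cycle reachable from $v$ has negative weight, in which case $\en^*_{G,w}(v) = \infty \in C_{G,w}$, or else $\en^*_{G(\sigma^*,\tau^*),w}(v) = \max\{0, -\min_{P} w(P)\}$ where $P$ ranges over simple paths originating at $v$.

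Next I would handle the finite case. If $\en^*_{G,w}(v) = 0$, then $0 \in C_{G,w}$ by taking $k=0$ in \eqref{eq:C G w}. Otherwise $\en^*_{G,w}(v) = -w(P)$ for some simple path $P$ from $v$, and since $P$ is simple it has at most $n-1$ edges, hence at most $n$; writing $w(P) = \sum_{i=1}^k x_i$ with each $x_i \in R_{G,w}$ and $k \le n$ shows $-w(P) = -\sum_{i=1}^k x_i \in C_{G,w}$. In all cases $\en^*_{G,w}(v) \in C_{G,w}$. Finally, since $M$ is an upper bound on the finite minimal energy, either $\en^*_{G,w}(v) = \infty \in U_M$ or $\en^*_{G,w}(v) \in \{0,\ldots,M\} \subseteq U_M$ (the value is a nonnegative integer by definition of an energy function). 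Intersecting the two memberships yields $\en^*_{G,w}(v) \in C_{G,w}\cap U_M$, as claimed.

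The only slightly delicate point — the main obstacle, such as it is — is making sure the reduction to a fixed optimal strategy pair is clean: one must invoke the existence of optimal positional strategies (cited in the preliminaries via the reduction from mean-payoff games) together with Martin determinacy so that a single $(\sigma^*,\tau^*)$ simultaneously witnesses $\en^*_{G,w}(v)$ for the node $v$ under consideration, and then apply the per-strategy definition $\en^*_{G(\sigma^*,\tau^*),w}(v) = \max\{0,-\min_P w(P)\}$ which is already spelled out in Section~\ref{sec:definition_energy}. Everything else is a direct unwinding of definitions, so no substantial calculation is required.
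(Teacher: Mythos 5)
Your proposal is correct and follows essentially the same route as the paper's proof: fix an optimal pair of positional strategies, reduce to the per-strategy formula $\max\{0,-\min_P w(P)\}$ over simple paths of length at most $n$, and observe that membership in $U_M$ is immediate from the definition of the upper bound. The only cosmetic difference is that the paper disposes of the $\infty$ case before fixing strategies rather than via the negative cycle in $G(\sigma^*,\tau^*)$, which changes nothing of substance.
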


\begin{proof}
If $ \en^*_{G, w} (v) = \infty $ then we clearly have $ \en^*_{G, w} (v) \in C_{G, w}\cap U_M $.
If $ \en^*_{G, w} (v) < \infty $ we have $ \en^*_{G, w} (v) \in U_M $ since $ M $ is an upper bound on the finite minimal energy.
We still have to show that $ \en^*_{G, w} (v) \in C_{G, w} $.

Let $ (\sigma^*, \tau^*) $ be a pair of optimal strategies.
Since $ \sigma^* $ and $ \tau^* $ are optimal we have $ \en^*_{G, w} (v) = \en^*_{G(\sigma^*, \tau^*), w} (v) < \infty $.
By the definition of the minimal energy (see Section~\ref{sec:definition_energy}) we have
\begin{equation*}
\en^*_{G(\sigma^*, \tau^*), w} (v) = \max \{0, - \min_{P} w(P)\}
\end{equation*}
where the minimization is over all simple paths in $ G(\sigma^*, \tau^*) $ originating at $ v $ and $ w(P) $ denotes the sum of the edge weights of the path $ P $.
If $ \en^*_{G, w} (v) = 0 $ we have $ \en^*_{G, w} (v) \in C_{G, w} $ by setting $ k = 0 $ (the empty sum has value $ 0 $).
Otherwise we have
\begin{equation*}
\en^*_{G, w} (v) = - \sum_{(x, y) \in P} w(x, y)
\end{equation*}
for some simple path $ P $ in $ G(\sigma^*, \tau^*) $ originating at $ v $.
Since the length of $ P $ is at most $ n $ we have at most $ n $ edges on $ P $  which makes it clear that $ \en^*_{G, w} (v) \in C_{G, w} $.
\end{proof}

\paragraph{Proof of Corollary~\ref{cor:algorithm_rounded_game}.}
We want to use the value iteration algorithm of Proposition~\ref{pro:value_iteration} with the list
\begin{equation*}
A = \left\{ i \cdot B \mid 0 \leq i \leq \left\lceil \frac{M}{B} \right\rceil \right\} \cup \{ \infty \} \, .
\end{equation*}
It is clear that $ A $ has size $ O(M/B) $ and can be generated in $ O(M/B) $ time.
Thus, we only have to show that $ A $ is admissible to apply Proposition~\ref{pro:value_iteration}.

We will now show that  $ C_{G, w} \cap U_M \subseteq A $ where  $ C_{G, w}$ and $U_M$ are as in Lemma~\ref{lem:universal_admissible_list}. Let $ y \in C_{G, w} \cap U_M $.
The set of different edge weights is $ R_{G, w} \subseteq \{ i \cdot B \mid -W/B \leq i \leq  W/B \} $.
Since $ y \in C_{G, w} $ there is some $ k $ ($ 0 \leq k \leq n $) such that
\begin{equation*}
y = - \sum_{j=1}^k x_j
\end{equation*}
where $ x_j \in R_{G, w} $ for every $ 1 \leq j \leq k $.
Therefore there is an integer $ i_j $ for every $ 1 \leq j \leq k $ such that $ x_j = i_j B $ and we get
\begin{equation*}
y = - \sum_{j=1}^k i_j B = - B \sum_{j=1}^k i_j = - i B
\end{equation*}
for some integer $ i $.
Since $ y \in U_M $ we have $ 0 \leq -i B \leq M $ and therefore $ 0 \leq -i \leq M/B \leq \lceil M/B \rceil $.
Thus, $ y = -i B \in A $ which proves $ C_{G, w} \cap U_M \subseteq A $.
Since $ C_{G, w} \cap U_M $ is admissible by Lemma~\ref{lem:universal_admissible_list} also $ A $ is admissible, i.e., $ e^*_{G, w} (v) \in A $ for every node $ v $. This completes the proof of Corollary~\ref{cor:algorithm_rounded_game}.

\paragraph{Proof of Theorem~\ref{thm:main 3}.}
We want to use the value iteration algorithm of Proposition~\ref{pro:value_iteration} with the list
\begin{equation*}
A' = \left\{ x - \sum_{j=1}^k w_{i_j} \mid 1 \leq i_j \leq d, 0 \leq k \leq n, -n \delta \leq x \leq n \delta \right\} \cup \{\infty\} \, .
\end{equation*}
To show Theorem~\ref{thm:main 3} we have to prove three things:
\begin{enumerate}
\item $ A' $ is an admissible list.
\item $ A' $ has size $ O(\delta n^{d+1}) $.
\item A sorted version of $ A' $ can be computed in $ O(\delta n^{d+1}+dn^{d}\log n) $ time.
\end{enumerate}

We will now show that $ C_{G, w} \subseteq A' $ where $ C_{G, w}$ is as in Lemma~\ref{lem:universal_admissible_list}. Let $ y \in C_{G, w} $.
By the definition of $ C_{G, w} $ there is some $ k $ ($ 0 \leq k \leq n $) such that there are $ k $ edge weights $ x_1, \ldots, x_k \in R_{G, w} $ such that
\begin{align*}
y = - \sum_{j=1}^k x_j \, .
\end{align*}
By the structure of $ R_{G, w} $, the set of all edge weights, we have, for every $ 1 \leq j \leq k $, $ x_j = w_{i_j} + \delta_j $ for some $ i_j $ and $ \delta_j $ such that $ 1 \leq i_j \leq d $ and $ -\delta \leq \delta_j \leq \delta $ which gives
\begin{align*}
y = - \sum_{j=1}^k (w_{i_j} + \delta_j)
\end{align*}
Now observe that
\begin{align*}
- \sum_{j=1}^k (w_{i_j} + \delta_j) = - \sum_{j=1}^k w_{i_j} - \sum_{j=1}^k \delta_j = x - \sum_{j=1}^k w_{i_j} \, .
\end{align*}
for some $ x $ such that $ - n \delta \leq - k \delta \leq x \leq k \delta \leq n \delta $.
Therefore $ y \in A' $ which proves that $ C_{G, w} \subseteq A' $.
Since $ C_{G, w} $ is admissible by Lemma~\ref{lem:universal_admissible_list}, also $ A' $ is admissible.

We now consider the size of $ A' $.
We define
\begin{equation*}
S = \left\{ - \sum_{j=1}^k w_{i_j} \mid 1 \leq i_j \leq d \text{ for all $ j $}, 0 \leq k \leq n, \right\}
\end{equation*}
and get that 
$$ A' = \left\{ y + x \mid y \in S, -n \delta \leq x \leq n \delta \right\}\cup\{\infty\} .$$
We now bound the size of $ S $ as follows.
Each element of $ S $ is a sum of at most $ n $ numbers, each chosen from $ \{w_1, \ldots, w_d \} $.
Therefore, such an element is of the form $ \sum_{i=1}^{d} n_i w_i $ where each $ n_i $ is chosen from $ \{ 0, 1, \ldots, n \} $.
Thus, the size of $ S $ is $ O(n^d) $ and the size of $ A' $ is $ O(\delta n^{d+1}) $.

For the computation of $ A' $ we first compute $ S $.
Sorting $ S $ takes time $ O(|S| \cdot \log{|S|}) $ which is $ O (d n^d \log{n}) $.
We iterate over every element $ y \in S $ and generate every integer $ i $ in $ [ y - n \delta, y + n \delta ] $.
We append $ i $ to the list $ A' $ if it is larger than the current last element of the list.
Since for every $ y \in S $ the interval that we consider has the same ``width'' of $ n \delta $, it can never happen that we generate an integer $ i $ that is smaller than the last element and does not yet occur in the list.
Therefore $ A' $ is always sorted.
This process takes time $ O(|A'|) = O(\delta n^{d+1}) $.
In total it takes time $ O(\delta n^{d+1}+dn^{d}\log n) $ to compute $ A' $.

By Proposition~\ref{pro:value_iteration} it takes time $ O(m |A'|) $ to compute the minimal energies.
When we add the construction time of $ A' $ we get a total running time of $ O(\delta m n^{d+1}+dn^{d}\log n) $.
Note that it is always possible to group the edge weights into $ d = m $ groups such that every group contains only one edge weight.
Therefore we may assume that $ d \leq m $.
In that case the first term dominates the second term which gives a total running time of $ O (\delta m n^{d+1}) $.
This completes the proof of Theorem~\ref{thm:main 3}.

\section{Approximating Minimal Energies for Large Penalties}\label{sec:rounding_procedure}

This section is devoted to proving Theorem~\ref{thm:rounding_procedure_gives_desired_approximation}. We restate it here for convenience.
\setcounter{theoremtmp}{\value{theorem}}
\setcounter{theorem}{\value{TheoremTwo}}
\begin{theorem}[Restated]
Given a graph $(G, w)$ with $ P(G, w) \geq 1 $, an integer $M$, and an integer $c$ such that $n\leq c \leq n P(G, w)$, we can compute an energy function $ \en $ such that
\begin{align*}
\en (v) \leq \en^*_{G, w} (v) \leq \en (v) + c
\end{align*}
for every node $v$ in $O (mnM/c) $ time, provided that for every node $v$, $e_{G, w}^*(v)<\infty$ implies that $e_{G, w}^*(v)\leq M$.
\end{theorem}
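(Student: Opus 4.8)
Here is how I would approach the proof.

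The plan is to round every weight \emph{up} to a multiple of $B := \lfloor c/n \rfloor$, run the algorithm of Corollary~\ref{cor:algorithm_rounded_game} on the rounded graph, and output its answer. Formally, set $w'(u,v) := B \lceil w(u,v)/B \rceil$, so that $w(u,v) \le w'(u,v) \le w(u,v) + (B-1)$ on every edge. From $n \le c \le n P(G,w)$ we get $1 \le B$ and $B \le P(G,w)$ (hence the per-edge perturbation $B-1$ is strictly less than $P(G,w)$) and $Bn \le c$; note that $B$ is computable from $c$ and $n$ without knowing $P(G,w)$. I will output $\en := \en^*_{G, w'}$, and everything reduces to two things: that this is computable in the claimed time, and that $\en^*_{G, w'}(v) \le \en^*_{G, w}(v) \le \en^*_{G, w'}(v) + c$ for every $v$.

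First I would verify that $M$ is still an upper bound on the finite minimal energies of $(G, w')$. Since $w' \ge w$ pointwise, replaying Alice's $w$-optimal strategy under $w'$ never hurts her, which already gives $\en^*_{G, w'}(v) \le \en^*_{G, w}(v)$ for every $v$ (and in particular never turns a finite value into $\infty$). Conversely, if $\en^*_{G, w}(v) = \infty$, fix an optimal Bob strategy $\tau^*$ witnessing the penalty at $v$: every cycle reachable from $v$ in $G(\sigma, \tau^*)$ is negative under $w$, hence, by the penalty, has $w$-average at most $-P(G,w)$; adding at most $B - 1 < P(G,w)$ per edge keeps the average negative, so the cycle is negative under $w'$ as well, whence $\en^*_{G, w'}(v) = \infty$. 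Thus $\en^*_{G, w'}(v) = \infty \iff \en^*_{G, w}(v) = \infty$, $M$ bounds the finite values of $(G, w')$, and Corollary~\ref{cor:algorithm_rounded_game} runs in $O(mM/B)$ time; since $B > c/(2n)$ this is $O(mnM/c)$ (building $w'$ costs an extra $O(m)$).

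The remaining work is the right inequality $\en^*_{G, w}(v) \le \en^*_{G, w'}(v) + c$. For this I would again fix the penalty-witnessing optimal $\tau^*$ and freeze Bob's moves, so that (optimality plus determinacy) $\en^*_{G, w}(v) = \min_\sigma \en^*_{G(\sigma, \tau^*), w}(v)$. Let $\sigma''$ be Alice's \emph{$w'$-optimal} response against $\tau^*$, i.e.\ one minimizing $\en^*_{G(\sigma, \tau^*), w'}(v)$, and let $E' \le \en^*_{G, w'}(v)$ be this minimum (finite in the only case needing an argument). Let $C$ be the cycle reached from $v$ in $G(\sigma'', \tau^*)$; finiteness of $E'$ forces $w'(C) \ge 0$, hence $w(C)/|C| \ge -(B-1)$. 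The decisive step is to rule out $w(C) < 0$: if $w(C) < 0$ then $w(C)/|C| \in [-(B-1), 0)$, which lies strictly above $-P(G,w)$ because $B - 1 < P(G,w)$, contradicting the penalty condition for the strategy pair $(\sigma'', \tau^*)$ at $v$. Given $w(C) \ge 0$, the value $\en^*_{G(\sigma'', \tau^*), w}(v)$ equals $\max\{0, -\min_P w(P)\}$ over simple paths $P$ out of $v$, and since each such $P$ has at most $n$ edges, $-w(P) \le -w'(P) + (B-1)n \le E' + (B-1)n$; using $(B-1)n \le c - n$ this gives $\en^*_{G, w}(v) \le \en^*_{G(\sigma'', \tau^*), w}(v) \le E' + c \le \en^*_{G, w'}(v) + c$, which together with the left inequality yields $\en(v) \le \en^*_{G, w}(v) \le \en(v) + c$.

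I expect the main obstacle to be exactly this step of excluding $w(C) < 0$. The subtlety is that $\sigma''$ is optimal only for the \emph{rounded} weights, so against some Bob response it might \emph{a priori} steer the play into a cycle that is harmless under $w'$ but negative --- and therefore fatal --- under $w$; the penalty is precisely what forbids this, but only for cycles reachable along the single strategy $\tau^*$, which is why the whole argument has to be carried out inside the frozen one-player game $G(\cdot, \tau^*)$ rather than in the full two-player game. The rest (the rounding error bookkeeping, the monotonicity under $w' \ge w$, and the running-time arithmetic) is routine.
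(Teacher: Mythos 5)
Your proof is correct and takes essentially the same route as the paper: round every weight up to the nearest multiple of $B=\lfloor c/n\rfloor$, solve the rounded game via the admissible-list value iteration of Corollary~\ref{cor:algorithm_rounded_game}, get the lower bound from monotonicity under weight increases, and get the upper bound by fixing Bob's penalty-witnessing optimal strategy $\tau^*$, using the penalty to rule out cycles that are nonnegative under $w_B$ but negative under $w$, and charging at most $B$ per edge over a simple path of length at most $n$. The only (cosmetic) difference is that you run the minimax bookkeeping inside the one-player game $G(\cdot,\tau^*)$ with Alice's best response there, whereas the paper routes the same inequalities through the optimal strategy pair of the full rounded game.
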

\setcounter{theorem}{\value{theoremtmp}}
We show that we can approximate the minimal energy of nodes in high-penalty graphs (see Section~\ref{sec:prelim} for the definition of penalty).
The key idea is {\em rounding edge weights}, as follows. For an integer $ B > 0 $ we denote the weight function resulting from rounding up every edge weight to the nearest multiple of $ B $ by $ w_B $. Formally, the function $ w_B $ is given by
\begin{equation*}
w_B (u, v) = \left\lceil \frac{w(u, v)}{B} \right\rceil \cdot B
\end{equation*}
for every edge $ (u, v) \in E $. Our algorithm is as follows. We set $B = \lfloor c/n \rfloor \leq P (G, w) $ (where $c$ is as in Theorem~\ref{thm:rounding_procedure_gives_desired_approximation}). Since weights in $(G, w_B)$ are multiples of $B$, $\en^*_{G, w_B}$ can be found faster than $\en^*_{G, w}$ due to Corollary~\ref{cor:algorithm_rounded_game}: we can compute $\en^*_{G, w_B}$ in time $ O (m M/B)=O(mnM/c) $ provided that $ M $ is an upper bound on the finite minimal energy. This is the running time stated in Theorem~\ref{thm:rounding_procedure_gives_desired_approximation}. We complete the proof of Theorem~\ref{thm:rounding_procedure_gives_desired_approximation} by showing that $\en^*_{G, w_B}$ is a good approximation of $\en^*_{G, w}$ (i.e., it is the desired function $\en$).
Recall that by the definition of $ P (G, w) $ every node $ v $ has penalty $ P_{G, w} (v) \geq P (G, w) $.
\begin{proposition}\label{prop:bound error}
For every node $ v $ with penalty $ P_{G, w} (v) \geq B=\lfloor c/n \rfloor $ (where $ c \geq n $) we have
\begin{align*}
\en^*_{G, w_B} (v) \stackrel{\text{(1)}}{\leq} \en^*_{G, w}  (v) \stackrel{\text{(2)}}{\leq} \en^*_{G, w_B}  (v) + nB \leq \en^*_{G, w_B}  (v) + c \, .
\end{align*}
\end{proposition}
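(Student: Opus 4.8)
The plan is to prove the two nontrivial inequalities, (1) $\en^*_{G, w_B}(v) \le \en^*_{G, w}(v)$ and (2) $\en^*_{G, w}(v) \le \en^*_{G, w_B}(v) + nB$, by first establishing a pointwise comparison of $\en^*_{G(\sigma, \tau), w}(v)$ and $\en^*_{G(\sigma, \tau), w_B}(v)$ over strategy pairs $(\sigma,\tau)$ and then passing to the $\min$--$\max$ definition~\eqref{eq:energy definition}. The one elementary fact I need throughout is that rounding up changes each weight by a bounded nonnegative amount: from $w_B(u, v) = \lceil w(u, v)/B \rceil \cdot B$ we get $0 \le w_B(u, v) - w(u, v) \le B - 1$ for every edge. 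Consequently $w_B \ge w$ edgewise; along any simple path $P$ (which has at most $n-1$ edges) we have $w(P) \le w_B(P) \le w(P) + nB$; and along any cycle $C$ we have $w(C) \le w_B(C) \le w(C) + (B-1)|C|$.

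For inequality (1) I would show the stronger pointwise statement $\en^*_{G(\sigma, \tau), w_B}(v) \le \en^*_{G(\sigma, \tau), w}(v)$ for every pair $(\sigma, \tau)$; inequality (1) then follows by taking $\min_\sigma \max_\tau$ of both sides. If the right-hand side is $\infty$ there is nothing to prove. Otherwise the unique cycle $C$ reachable from $v$ in $G(\sigma, \tau)$ satisfies $w(C) \ge 0$, hence $w_B(C) \ge w(C) \ge 0$, so the left-hand side is finite as well; and since $w_B(P) \ge w(P)$ for every simple path $P$, the formula for $\en^*_{G(\sigma, \tau), \cdot}$ from Section~\ref{sec:definition_energy} gives $\max\{0, -\min_P w_B(P)\} \le \max\{0, -\min_P w(P)\}$, i.e.\ $\en^*_{G(\sigma, \tau), w_B}(v) \le \en^*_{G(\sigma, \tau), w}(v)$.

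Inequality (2) is where the penalty hypothesis is used, and it is the heart of the argument. Since there are only finitely many positional strategies, the supremum in the definition of the penalty is attained, so I can fix a Bob strategy $\tau^*$ that is optimal from $v$ and such that for every Alice strategy $\sigma$ the cycle $C$ reachable from $v$ in $G(\sigma, \tau^*)$ satisfies $w(C) \ge 0$ or $w(C)/|C| \le -P_{G, w}(v) \le -B$. I then claim $\en^*_{G(\sigma, \tau^*), w}(v) \le \en^*_{G(\sigma, \tau^*), w_B}(v) + nB$ for every $\sigma$. If $w(C) \ge 0$, both sides are finite (as in the previous paragraph) and the bound follows from $w_B(P) \le w(P) + nB$ on simple paths together with the trivial inequality $\max\{0, a + nB\} \le \max\{0, a\} + nB$. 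If instead $w(C) < 0$, then $\en^*_{G(\sigma, \tau^*), w}(v) = \infty$, but now the penalty bound $w(C) \le -B|C|$ combined with $w_B(C) \le w(C) + (B-1)|C|$ yields $w_B(C) \le -|C| < 0$, so $\en^*_{G(\sigma, \tau^*), w_B}(v) = \infty$ as well and the claimed inequality again holds. Combining this with the optimality of $\tau^*$ from $v$ and with $\en^*_{G(\sigma, \tau^*), w_B}(v) \le \max_\tau \en^*_{G(\sigma, \tau), w_B}(v)$ for each $\sigma$,
\[
\en^*_{G, w}(v) \;\le\; \min_\sigma \en^*_{G(\sigma, \tau^*), w}(v) \;\le\; \min_\sigma \en^*_{G(\sigma, \tau^*), w_B}(v) + nB \;\le\; \en^*_{G, w_B}(v) + nB ,
\]
and finally $nB = n \lfloor c/n \rfloor \le c$ gives the last inequality of the proposition.

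The step I expect to be the main obstacle is the negative-cycle case of inequality (2): in general, rounding weights up can turn a strictly negative cycle into a nonnegative one, which would break the bound, and it is precisely the hypothesis $P_{G, w}(v) \ge B$ — forcing every relevant negative cycle to have mean at most $-B$ while each edge weight grows by at most $B-1 < B$ — that keeps such cycles strictly negative (indeed $w_B(C) \le -|C|$). Beyond that, the only care needed is the bookkeeping of the rounding increments (strictly less than $B$ per edge, hence strict slack on cycles) and checking that taking $\min$/$\max$ over strategies preserves the pointwise inequalities, both of which are routine.
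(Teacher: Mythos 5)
Your proposal is correct and follows essentially the same route as the paper: inequality (1) via the edgewise monotonicity $w_B \ge w$ applied to fixed strategy pairs and then lifted through the $\min$--$\max$, and inequality (2) by fixing the Bob strategy $\tau^*$ from the penalty definition, observing that a cycle of mean at most $-B$ remains strictly negative after rounding up (the paper's Claim~4.7 / Lemma~\ref{lem:penalty_implies_condition_for_rounding_lemma}), bounding the change on simple paths by $nB$ (Lemma~\ref{lem:energy_rounding_strategies}), and finishing with the same chain of inequalities over optimal strategies. Your explicit remark that the supremum in the penalty definition is attained because there are finitely many positional strategies is a small point the paper glosses over, but otherwise the two arguments coincide.
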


The rest of this section is devoted to proving Proposition~\ref{prop:bound error}.
\footnote{
At this point we remark that energy games are not as resistant to perturbations of weights as mean-payoff games.
In particular, if $ w (u, v) \leq w' (u, v) \leq w (u, v) + x $ for every edge $ (u, v) $ and some positive constant $ x $, then also $ \val (v) \leq \val' (v)  \leq \val (v) + x $, where $ \val (v) $ and $ \val' (v) $ are the values of the mean-payoff games for $ v $ in $ (G, w) $ and $ (G, w') $, respectively.
A similar inequality is not true for the minimal energies.
Consider a cycle of total weight $ 0 $.
By adding $ -1 $ to each edge weight, the weight of this cycle changes from non-negative to negative.
Thus, the minimal energy might change from $ 0 $ to $ \infty $.
}
Let us first give the proof ideas.
The last inequality in the proposition follows immediately from the definition of $B$. The first two inequalities will be proved in Section~\ref{sec:details_rounding_procedure 1} and \ref{sec:details_rounding_procedure 2}. Let us first outline the proofs of these inequalities here. 
Inequality~(1) is quite intuitive: We are doing Alice a favor by {\em increasing} edge weights from $w$ to $w_B$. Thus, Alice should not require more energy in $(G, w_B)$ than she needs in $(G, w)$. As we show in Lemma~\ref{lem:energy_increased_weights} in Section~\ref{sec:details_rounding_procedure 1}, this actually holds for {\em any} increase in edge weights: For any $w'$ such that $w'(u,v)\geq w(u, v)$ for all $(u, v)\in G$, we have $ \en^*_{G, w'} (v) \leq \en^*_{G, w} (v) $. Thus we get the first inequality by setting $w'=w_B$.

For inequality (2) in Proposition~\ref{prop:bound error}, unlike the first inequality, we do not state this result for general increases of the edge weights as the bound depends on
our rounding procedure. At this point we also need the precondition that the graph we consider has penalty at least $B$.
We first show that the inequality holds when the strategies played by both players fulfill a certain condition, formally stated as follows (we prove this lemma in Section~\ref{sec:details_rounding_procedure 2}).
\newcounter{LemmaOne}
\setcounter{LemmaOne}{\value{theorem}}
\begin{lemma}\label{lem:energy_rounding_strategies}
Let $ (\sigma, \tau) $ be a pair of strategies. For any node $v$,
if $ \en^*_{G (\sigma, \tau), w} (v) = \infty $ implies $ \en^*_{G(\sigma, \tau), w_B} (v) = \infty $, then $ \en^*_{G(\sigma, \tau), w} (v) \leq \en^*_{G(\sigma, \tau), w_B} (v) + nB $.
\end{lemma}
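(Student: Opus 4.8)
The plan is to work entirely inside the fixed one-out-edge graph $G(\sigma, \tau)$, so that from every node there is a unique infinite path, and to compare the minimal energies under $w$ and under $w_B$ directly along these paths. Recall from Section~\ref{sec:definition_energy} that if $C$ denotes the unique cycle reachable from $v$ in $G(\sigma, \tau)$, then $\en^*_{G(\sigma,\tau),w}(v) = \infty$ exactly when $w(C) < 0$, and otherwise $\en^*_{G(\sigma,\tau),w}(v) = \max\{0, -\min_P w(P)\}$ where the minimum ranges over simple paths $P$ from $v$. The hypothesis of the lemma tells us we only need to handle nodes $v$ where $\en^*_{G(\sigma,\tau),w_B}(v) < \infty$, i.e.\ where $w_B(C) \geq 0$; and if moreover $w(C) < 0$ the hypothesis forces $\en^*_{G(\sigma,\tau),w_B}(v)=\infty$ as well, a contradiction, so in fact for every relevant $v$ we have $w(C) \geq 0$ too and both energies are finite.

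Next I would relate the two path sums. Since $w_B(u,v) = \lceil w(u,v)/B\rceil\cdot B$, we always have $w(u,v) \leq w_B(u,v) \leq w(u,v) + B$; hence for any path $P$ with $|P|$ edges, $w(P) \leq w_B(P) \leq w(P) + |P|\cdot B$. A simple path in an $n$-node graph has at most $n-1$ edges, so $w_B(P) \le w(P) + nB$, equivalently $w(P) \geq w_B(P) - nB$. Taking the minimum over all simple paths $P$ from $v$ gives $\min_P w(P) \geq \min_P w_B(P) - nB$, hence $-\min_P w(P) \leq -\min_P w_B(P) + nB$, and therefore
\[
\en^*_{G(\sigma,\tau),w}(v) = \max\{0, -\textstyle\min_P w(P)\} \leq \max\{0, -\textstyle\min_P w_B(P) + nB\} \leq \en^*_{G(\sigma,\tau),w_B}(v) + nB,
\]
using $\max\{0, a+nB\} \le \max\{0,a\} + nB$ for $nB \ge 0$. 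This completes the argument for nodes whose reachable cycle is nonnegative under both weightings, which by the first paragraph is all nodes we must consider.

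The only subtlety — and the one place I would be careful — is the case analysis on the cycle $C$ reachable from $v$: one must genuinely invoke the hypothesis ``$\en^*_{G(\sigma,\tau),w}(v)=\infty \Rightarrow \en^*_{G(\sigma,\tau),w_B}(v)=\infty$'' to rule out $w(C)<0$ while $w_B(C)\ge 0$, since the conclusion of the lemma would be false in that situation ($\infty \le$ a finite number). Everything else is the elementary observation that rounding up by at most $B$ per edge changes a simple-path sum by at most $nB$. I do not expect any real obstacle beyond keeping the $\max\{0,\cdot\}$ bookkeeping honest; the contrast noted in the footnote — that a cycle can flip from nonnegative to negative under a decrease of weights — is exactly why we round \emph{up} and why the hypothesis on the $\infty$ case is indispensable.
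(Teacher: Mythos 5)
Your proof is correct and follows essentially the same route as the paper's: dispose of the infinite case via the hypothesis, then bound $w_B(P)\le w(P)+nB$ over simple paths and push the inequality through the $\max\{0,\cdot\}$. The only cosmetic difference is that you fold the $\en^*=0$ subcase into the inequality $\max\{0,a+nB\}\le\max\{0,a\}+nB$ rather than treating it separately, which is a slight streamlining.
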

The above lemma needs a pair of strategies $(\sigma, \tau)$ such that $ \en^*_{G (\sigma, \tau), w} (v) = \infty $ implies $ \en^*_{G(\sigma, \tau), w_B} (v) = \infty $.
This property can be explained as follows: If Alice needs infinite energy at node $v$ in the graph $(G(\sigma, \tau), w)$ then she also needs infinite energy in the rounded-weight graph $(G(\sigma, \tau), w_B)$.
Our second crucial fact shows that if $v$ has penalty at least $B$ then there exists a pair of strategies that has this property.
{\em This is where we exploit the fact that the penalty is large.}
\newcounter{LemmaTwo}
\setcounter{LemmaTwo}{\value{theorem}}
\begin{lemma}\label{lem:penalty_implies_condition_for_rounding_lemma}
Let $ v $ be a node with penalty $ P_{G, w} (v) \geq B $.
Then there is an optimal strategy $ \tau^* $ of Bob such that for every strategy $ \sigma $ of Alice we have that $ \en^*_{G(\sigma, \tau^*), w} (v) = \infty $ implies $ \en^*_{G(\sigma, \tau^*), w_B} (v) = \infty $.
\end{lemma}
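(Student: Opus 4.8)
The plan is to reduce the statement to the cycle characterization of infinite minimal energy from Section~\ref{sec:definition_energy}, together with the elementary fact that rounding a weight up to the nearest multiple of $B$ increases it by strictly less than $B$. First I would unpack the hypothesis $P_{G, w}(v) \geq B$. By definition $P_{G, w}(v)$ is the supremum of all $D$ for which $v$ has a penalty of at least $D$; since there are only finitely many strategies of Bob, a single optimal strategy $\tau^*$ must witness ``penalty at least $D$'' for a sequence of values $D$ approaching $B$, and since the defining condition ``$w(C) \geq 0$ or $w(C)/|C| \leq -D$'' is closed in $D$, that same $\tau^*$ witnesses that $v$ has a penalty of at least $B$. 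Concretely, I would fix an optimal strategy $\tau^*$ of Bob such that for every strategy $\sigma$ of Alice, the unique cycle $C$ reachable by $v$ in $G(\sigma, \tau^*)$ satisfies $w(C) \geq 0$ or $w(C) \leq -B |C|$.

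With $\tau^*$ fixed, I would take an arbitrary strategy $\sigma$ of Alice and assume $\en^*_{G(\sigma, \tau^*), w}(v) = \infty$. Let $C$ be the unique cycle reachable by $v$ in $G(\sigma, \tau^*)$; it exists because every node of $G(\sigma, \tau^*)$ has a unique out-edge. By the characterization of $\en^*_{G(\sigma, \tau), w}$ in Section~\ref{sec:definition_energy}, the value $\en^*_{G(\sigma, \tau^*), w}(v)$ is infinite only when $w(C) < 0$, so the penalty property forces $w(C) \leq -B|C|$. Crucially, the graph $G(\sigma, \tau^*)$ depends only on $\sigma$ and $\tau^*$ and not on the weights, so $C$ is also the unique cycle reachable by $v$ in $G(\sigma, \tau^*)$ under the rounded weights $w_B$. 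Since $w(x, y) \leq w_B(x, y) = \lceil w(x,y)/B \rceil B < w(x, y) + B$ for every edge $(x,y)$, summing over the $|C|$ edges of $C$ gives $w_B(C) < w(C) + B|C| \leq -B|C| + B|C| = 0$. Hence $w_B(C) < 0$, and applying the cycle characterization once more yields $\en^*_{G(\sigma, \tau^*), w_B}(v) = \infty$. As $\sigma$ was arbitrary, this establishes the lemma.

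I expect the only nontrivial point to be the first step: turning the supremum in the definition of $P_{G, w}(v)$ into an explicit strategy $\tau^*$ that is simultaneously optimal for $(G, w)$ and realizes ``penalty at least $B$''. Once such a $\tau^*$ is in hand, the remainder is the short cycle count above, whose content is simply that rounding up each of at most $|C|$ edge weights can add strictly less than $B|C|$ to $w(C)$ — not enough to turn a cycle of average weight at most $-B$ into a non-negative one — so a negative cycle under $w$ stays negative under $w_B$, which is exactly the implication ``$\en^* = \infty$ under $w$'' $\Rightarrow$ ``$\en^* = \infty$ under $w_B$''.
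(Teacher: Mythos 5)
Your proposal is correct and follows essentially the same route as the paper: fix an optimal strategy $\tau^*$ witnessing the penalty bound, note that an infinite energy under $w$ forces the reachable cycle $C$ to have average weight at most $-B$, and then use $w_B(x,y) < w(x,y) + B$ summed over the at most $|C|$ edges to conclude $w_B(C) < 0$ and hence infinite energy under $w_B$. Your opening step, which carefully turns the supremum in the definition of $P_{G,w}(v)$ into a single witnessing $\tau^*$ via finiteness of Bob's strategies and closedness of the condition in $D$, is in fact slightly more explicit than the paper, which simply asserts the existence of such a $\tau^*$.
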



To prove Lemma~\ref{lem:energy_rounding_strategies} we only have to consider a special graph where the strategies of both players are fixed and thus all nodes have out-degree one. The challenge in proving Lemma~\ref{lem:penalty_implies_condition_for_rounding_lemma} is to use the ``right'' strategy $\tau^*$.
We use the strategy $\tau^*$ that comes from the definition of the penalty (cf. Section~\ref{sec:prelim}). The full proofs of Lemmas~\ref{lem:energy_rounding_strategies} and \ref{lem:penalty_implies_condition_for_rounding_lemma} are given in Section~\ref{sec:details_rounding_procedure 2}.

The other challenge of the proof of Proposition~\ref{prop:bound error} is translating our result from graphs with fixed strategies to general graphs in order to prove the second inequality in Proposition~\ref{prop:bound error}. We do this as follows. Let $ \sigma^* $ be an optimal strategy of Alice for $ (G, w) $ and let $ (\sigma_B^*, \tau_B^*) $ be a pair of optimal strategies for $ (G, w_B) $.
Since $ v $ has penalty $ P_{G, w} (v) \geq B $, Lemma~\ref{lem:penalty_implies_condition_for_rounding_lemma} tells us that the preconditions of Lemma~\ref{lem:energy_rounding_strategies} are fulfilled.
We use Lemma~\ref{lem:energy_rounding_strategies} and get that there is an optimal strategy $ \tau^* $ of Bob such that $ e^*_{G(\sigma_B^*, \tau^*), w} (v) \leq e^*_{G(\sigma_B^*, \tau^*), w_B} (v) + nB $.
We now arrive at the chain of inequalities
\begin{align*}
e^*_{G, w} (v) &\stackrel{\text{(a)}}{=} e^*_{G(\sigma^*, \tau^*), w} (v) \stackrel{\text{(b)}}{\leq}  e^*_{G(\sigma_B^*, \tau^*), w} (v) \stackrel{\text{(Lem.~\ref{lem:energy_rounding_strategies})}}{\leq} e^*_{G(\sigma_B^*, \tau^*), w_B} (v) + nB \\
&\stackrel{\text{(c)}}{\leq}e^*_{G(\sigma_B^*, \tau_B^*), w_B} (v) + nB \stackrel{\text{(d)}}{=} e^*_{G, w_B} (v) + nB
\end{align*}
that can be explained as follows.
Since $ (\sigma^*, \tau^*) $ and $ (\sigma_B^*, \tau_B^*) $ are pairs of \emph{optimal} strategies, we have (a) and (d).
Due to the optimality we also have $ e^*_{G(\sigma^*, \tau^*), w} (v) \leq e^*_{G(\sigma, \tau^*), w} (v) $ for \emph{any} strategy $ \sigma $ of Alice, and in particular $ \sigma_B^* $, which implies (b).
A symmetric argument gives (c).

\subsection{Proof of the First Inequality of Proposition~\ref{prop:bound error}}\label{sec:details_rounding_procedure 1}

In the following we prove that an increase in edge weights does not increase the minimal energy for any node.
We first prove the claim for the case where we fix the strategies of both players, i.e., on graphs where we have deleted all edges except those corresponding to the strategies of Alice and Bob.
Afterwards we generalize the claim to arbitrary graphs.

\begin{lemma}\label{lem:energy_increased_weights_strategies}
Let $ G $ be a graph and $ w_1 $ and $ w_2 $ be edge weights such that $ w_1 (u, v) \leq w_2 (u, v) $ for every edge $ (u, v) \in G $.
Then, for every pair of strategies $ (\sigma, \tau) $ and every node $ v \in G $, we have $ \en^*_{G(\sigma, \tau), w_1} (v) \geq \en^*_{G(\sigma, \tau), w_2} (v) $.
\end{lemma}

\begin{proof}
Let $v$ be any node and $(\sigma, \tau)$ be any pair of strategies. First, consider the case where $ \en^*_{G(\sigma, \tau), w_2} (v) = \infty $.
Let $ C $ denote the unique cycle reachable from $ v $ in $ G(\sigma, \tau) $.
Since $ \en^*_{G(\sigma, \tau), w_2} (v) = \infty $ we know by the definition of the minimal energy that $ w_2(C) < 0 $ where $ w_2(C) $ denotes the sum of the edge weights of the cycle~$ C $.
By our assumption we have $ w_1(C) \leq w_2(C) < 0 $, meaning that $ \en^*_{G(\sigma, \tau), w_1} (v) = \infty $ which is exactly what our inequality claims.

Next, consider the case where $ \en^*_{G(\sigma, \tau), w_2} (v) < \infty $.
By the definition of the minimal energy (see Section~\ref{sec:definition}) we have
\begin{equation*}
\en^*_{G(\sigma, \tau), w_2} (v) = \max \left\{ 0, - \min_{P} w_2(P) \right\}
\end{equation*}
where the minimization is over all simple paths in $ (G(\sigma, \tau), w_2) $ originating at $ v $ and $ w_2(P) $ denotes the sum of the edge weights of the path $ P $.

In the case where $ \en^*_{G(\sigma, \tau), w_2} (v) = 0 $, we have $ \en^*_{G(\sigma, \tau), w_2} (v) = 0 \leq \en^*_{G(\sigma, \tau), w_1} (v) $.
If $ \en^*_{G(\sigma, \tau), w_2} (v) > 0 $, we have
\begin{equation*}
\en^*_{G(\sigma, \tau), w_2} (v) = - \min_{P} w_2(P) \, .
\end{equation*}
Since $ w_2(P) \geq w_1(P) $ for every path $ P $ we have
\begin{align*}
\en^*_{G(\sigma, \tau), w_2} (v) &= - \min_{P} w_2(P)\\
&\leq - \min_{P} w_1(P) \\
&\leq \max \{0, - \min_{P} w_1(P)\}\\
&= \en^*_{G(\sigma, \tau), w_1} (v) \, . \qedhere
\end{align*}
\end{proof}

It is now straightforward to generalize the previous lemma by applying it to an \emph{optimal} pair of strategies.

\begin{lemma}\label{lem:energy_increased_weights}
Let $ G $ be a graph and $ w_1 $ and $ w_2 $ be edge weights such that $ w_1 (u, v) \leq w_2 (u, v) $ for every edge $ (u, v) \in G $.
Then $ \en^*_{G, w_1} (v) \geq \en^*_{G, w_2} (v) $ for every node $ v $.
\end{lemma}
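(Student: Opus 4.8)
The plan is to reduce the general statement to the already-established special case (Lemma~\ref{lem:energy_increased_weights_strategies}) by evaluating both sides at a cleverly chosen pair of strategies. Recall that by Equation~\eqref{eq:energy definition} and Martin's determinacy theorem we have $\en^*_{G, w_1}(v) = \min_\sigma \max_\tau \en^*_{G(\sigma,\tau), w_1}(v)$ and likewise for $w_2$. First I would fix a node $v$ and let $\sigma_2^*$ be an optimal strategy of Alice in $(G, w_2)$, so that $\en^*_{G, w_2}(v) = \max_\tau \en^*_{G(\sigma_2^*,\tau), w_2}(v)$. Then I would let $\tau_1$ be a strategy of Bob that attains the maximum $\max_\tau \en^*_{G(\sigma_2^*,\tau), w_1}(v)$, i.e.\ a best response of Bob to $\sigma_2^*$ under the weights $w_1$.

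The key chain of inequalities is then the following:
\begin{align*}
\en^*_{G, w_1}(v) &\leq \max_\tau \en^*_{G(\sigma_2^*,\tau), w_1}(v) = \en^*_{G(\sigma_2^*,\tau_1), w_1}(v) \\
&\geq \en^*_{G(\sigma_2^*,\tau_1), w_2}(v) \leq \max_\tau \en^*_{G(\sigma_2^*,\tau), w_2}(v) = \en^*_{G, w_2}(v).
\end{align*}
Wait --- this does not chain up as written, because the middle inequality from Lemma~\ref{lem:energy_increased_weights_strategies} points the wrong way relative to what we want. So I would instead argue directly: since $\sigma_2^*$ is optimal for $(G, w_2)$, for \emph{every} strategy $\tau$ of Bob we have $\en^*_{G(\sigma_2^*,\tau), w_2}(v) \leq \en^*_{G, w_2}(v)$. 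Now pick $\sigma_1^*$ optimal for Alice in $(G, w_1)$ and $\tau_1^*$ a best response of Bob to $\sigma_1^*$ under $w_1$, so that $\en^*_{G, w_1}(v) = \en^*_{G(\sigma_1^*, \tau_1^*), w_1}(v)$. Applying Lemma~\ref{lem:energy_increased_weights_strategies} to the fixed pair $(\sigma_1^*, \tau_1^*)$ gives $\en^*_{G(\sigma_1^*, \tau_1^*), w_1}(v) \geq \en^*_{G(\sigma_1^*, \tau_1^*), w_2}(v)$. Finally, since $\sigma_1^*$ is just \emph{some} strategy of Alice (not necessarily optimal for $w_2$), we have $\en^*_{G(\sigma_1^*, \tau_1^*), w_2}(v) \geq \min_\sigma \en^*_{G(\sigma, \tau_1^*), w_2}(v)$, and taking the outer max over $\tau$ we get $\max_\tau \min_\sigma \en^*_{G(\sigma,\tau), w_2}(v) \geq \min_\sigma \en^*_{G(\sigma, \tau_1^*), w_2}(v)$; but this inequality again points the wrong way.

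The clean route, which I would actually write up, is: take $\sigma_2^*$ optimal for $(G, w_2)$ and let $\tau$ be Bob's best response to $\sigma_2^*$ under $w_1$. Then
\begin{align*}
\en^*_{G, w_1}(v) &= \min_{\sigma}\max_{\tau'} \en^*_{G(\sigma,\tau'), w_1}(v) \leq \max_{\tau'}\en^*_{G(\sigma_2^*,\tau'), w_1}(v) = \en^*_{G(\sigma_2^*,\tau), w_1}(v).
\end{align*}
By Lemma~\ref{lem:energy_increased_weights_strategies}, $\en^*_{G(\sigma_2^*,\tau), w_1}(v) \geq \en^*_{G(\sigma_2^*,\tau), w_2}(v)$. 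And since $\tau$ is just some strategy of Bob while $\sigma_2^*$ is optimal for $w_2$, $\en^*_{G(\sigma_2^*,\tau), w_2}(v) \leq \max_{\tau'}\en^*_{G(\sigma_2^*,\tau'), w_2}(v) = \en^*_{G, w_2}(v)$. Stringing these together yields $\en^*_{G, w_1}(v) \leq \en^*_{G(\sigma_2^*,\tau), w_1}(v)$ and $\en^*_{G(\sigma_2^*,\tau), w_2}(v) \leq \en^*_{G, w_2}(v)$, but the middle step gives $\en^*_{G(\sigma_2^*,\tau), w_1}(v) \geq \en^*_{G(\sigma_2^*,\tau), w_2}(v)$, which still does not connect the two endpoints in the direction we need. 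The resolution is that I should choose $\tau$ to be Bob's best response under $w_2$ instead, and pair it with Alice's optimal strategy $\sigma_1^*$ for $w_1$: then $\en^*_{G, w_2}(v) \leq \en^*_{G(\sigma_1^*, \tau), w_2}(v) \leq \en^*_{G(\sigma_1^*, \tau), w_1}(v) \leq \max_{\tau'}\en^*_{G(\sigma_1^*,\tau'), w_1}(v) = \en^*_{G, w_1}(v)$, using Lemma~\ref{lem:energy_increased_weights_strategies} in the middle and optimality of $\sigma_1^*$ at both ends. This is the argument I would commit to. The main obstacle is purely keeping the $\min$/$\max$ orientation straight --- there is no real mathematical difficulty beyond the fixed-strategy lemma --- so I would state explicitly which player's optimal strategy is fixed (Alice, for $w_1$) and which best response is taken (Bob, for $w_1$ as well, against that fixed $\sigma_1^*$), and invoke Lemma~\ref{lem:energy_increased_weights_strategies} exactly once on that single fixed pair.
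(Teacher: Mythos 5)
Your final committed chain is correct and is essentially the paper's own argument: the paper likewise reduces to the fixed-strategy statement (Lemma~\ref{lem:energy_increased_weights_strategies}) evaluated at a single ``crossed'' pair, namely $(\sigma_1^*, \tau_2^*)$ with $\sigma_1^*$ optimal for Alice in $(G,w_1)$ and $\tau_2^*$ optimal for Bob in $(G,w_2)$, and then uses optimality at both ends exactly as you do. The only difference is cosmetic: the paper takes $\tau_2^*$ to be Bob's \emph{globally} optimal strategy for $w_2$ (so the first inequality follows from the definition of Bob's optimal strategy), whereas you take $\tau$ to be Bob's best response to $\sigma_1^*$ under $w_2$ (so it follows from $\min_\sigma\max_{\tau'} \leq \max_{\tau'}$ at $\sigma = \sigma_1^*$); both work.

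One caution: your closing sentence says the best response is taken ``for $w_1$ \ldots against that fixed $\sigma_1^*$.'' That contradicts the chain you just wrote and would break it: if $\tau$ only maximizes $\en^*_{G(\sigma_1^*,\cdot),w_1}(v)$, the first inequality $\en^*_{G,w_2}(v) \leq \en^*_{G(\sigma_1^*,\tau),w_2}(v)$ no longer follows, since nothing then guarantees that $\tau$ extracts at least the game value under $w_2$. The correct choice, as in your displayed chain, is Bob's best response under $w_2$ (or Bob's optimal strategy for $(G,w_2)$). Fix that one word and the proof is fine.
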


\begin{proof}
Let $ (\sigma_1^*, \tau_1^*) $ be an optimal pair of strategies for $ (G, w_1) $ and let $ (\sigma_2^*, \tau_2^*) $ be an optimal pair of strategies for $ (G, w_2) $.
Note that $ \en^*_{G (\sigma_1^*, \tau_1^*), w_1} (v) \geq \en^*_{G (\sigma_1^*, \tau), w_1} (v) $ for every strategy $ \tau $ of Bob (since $\tau_1^*$ is Bob's optimal strategy).
We also have $ \en^*_{G (\sigma, \tau_2^*), w_2} (v) \geq \en^*_{G (\sigma_2^*, \tau_2^*), w_2} (v) $ for every strategy $ \sigma $ of Alice.
Together with Lemma~\ref{lem:energy_increased_weights_strategies} we get
\begin{align*}
\en^*_{G, w_1} (v) &= \en^*_{G (\sigma_1^*, \tau_1^*), w_1} (v) \geq \en^*_{G (\sigma_1^*, \tau_2^*), w_1} (v)
\geq \en^*_{G (\sigma_1^*, \tau_2^*), w_2} (v) \\
&\geq \en^*_{G (\sigma_2^*, \tau_2^*), w_2} (v) = \en^*_{G, w_2} (v) \, . \qedhere
\end{align*}
\end{proof}

\subsection{Proof of the Second Inequality of Proposition~\ref{prop:bound error}}\label{sec:details_rounding_procedure 2}

We now complete the proof of the second inequality of Proposition~\ref{prop:bound error}.
We have already proved this inequality right after the statement of Proposition~\ref{prop:bound error}, but our proof assumes Lemmas~\ref{lem:energy_rounding_strategies} and~\ref{lem:penalty_implies_condition_for_rounding_lemma}. 
In this section, we provide the proofs of these two lemmas. 

\setcounter{theoremtmp}{\value{theorem}}
\setcounter{theorem}{\value{LemmaOne}}
\begin{lemma}[Restated]
Let $ (\sigma, \tau) $ be a pair of strategies. For any node $v$,
if $ \en^*_{G (\sigma, \tau), w} (v) = \infty $ implies that $ \en^*_{G(\sigma, \tau), w_B} (v) = \infty $, then $ \en^*_{G(\sigma, \tau), w} (v) \leq \en^*_{G(\sigma, \tau), w_B} (v) + nB $.
\end{lemma}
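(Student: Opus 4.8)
The plan is to argue entirely inside the fixed-strategy graph $H := G(\sigma,\tau)$, in which every node has out-degree exactly one, so from $v$ there is a unique infinite walk consisting of a simple ``tail'' followed by the unique cycle $C$ reachable from $v$. I would split into cases according to whether $\en^*_{H,w}(v)$ is finite. If $\en^*_{H,w}(v) = \infty$, then the hypothesis of the lemma immediately gives $\en^*_{H,w_B}(v) = \infty$, and the claimed inequality reads $\infty \le \infty + nB$, which holds. This is the only place the hypothesis is used, and it is there precisely to exclude the bad situation in which $w(C) < 0$ but $w_B(C) \ge 0$, i.e. rounding up flips a negative cycle to a non-negative one and collapses an infinite energy to a finite one.

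If $\en^*_{H,w}(v) < \infty$, then $w(C) \ge 0$ by the definition of the minimal energy for fixed strategies; since $w_B(x,y) \ge w(x,y)$ for every edge we get $w_B(C) \ge 0$, so $\en^*_{H,w_B}(v) < \infty$ as well (this also follows directly from Lemma~\ref{lem:energy_increased_weights_strategies} applied with $w_1 = w$, $w_2 = w_B$). If moreover $\en^*_{H,w}(v) = 0$ the inequality is trivial, so I may assume $\en^*_{H,w}(v) = -w(P^*) > 0$, where $P^*$ is a simple path from $v$ in $H$ minimizing $w(\cdot)$, as in the definition of $\en^*_{H,w}$.

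The key estimate is that for every edge $(x,y)$ one has $0 \le w_B(x,y) - w(x,y) < B$, and that $P^*$, being simple, has at most $n-1$ edges. Hence $w_B(P^*) \le w(P^*) + (n-1)B \le w(P^*) + nB$. Since $P^*$ is also one of the simple paths from $v$ in $(H, w_B)$ and $\en^*_{H,w_B}(v)$ is finite, the definition of the minimal energy gives $\en^*_{H,w_B}(v) \ge -w_B(P^*)$, and combining,
\[
\en^*_{H,w_B}(v) \;\ge\; -w_B(P^*) \;\ge\; -w(P^*) - nB \;=\; \en^*_{H,w}(v) - nB,
\]
which rearranges to the desired $\en^*_{H,w}(v) \le \en^*_{H,w_B}(v) + nB$.

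I do not expect a real obstacle here: once one reduces to the out-degree-one graph $H$, everything is a direct length/weight count, using only that simple paths have fewer than $n$ edges and that rounding each edge weight up to a multiple of $B$ raises it by less than $B$. The one subtlety — and the reason the lemma carries a hypothesis rather than being unconditional — is the infinite case, which is exactly where the assumption ``$\en^*_{H,w}(v) = \infty$ implies $\en^*_{H,w_B}(v) = \infty$'' is consumed; all of the genuine difficulty is deferred to Lemma~\ref{lem:penalty_implies_condition_for_rounding_lemma}, which supplies a strategy $\tau^*$ making that hypothesis hold whenever $v$ has penalty at least $B$.
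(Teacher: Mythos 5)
Your proposal is correct and follows essentially the same route as the paper: split on whether $\en^*_{G(\sigma,\tau),w}(v)$ is infinite (consuming the hypothesis only there), handle the zero case trivially, and otherwise bound $w_B(P) \le w(P) + nB$ for simple paths to transfer the minimization from $w$ to $w_B$. The only cosmetic differences are that you instantiate a single minimizing path $P^*$ rather than manipulating the minimum over all paths, and you note the slightly sharper $(n-1)B$ bound before relaxing it to $nB$.
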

\setcounter{theorem}{\value{theoremtmp}}

\begin{proof}
Recall that $ w_B $ is defined as the weight function resulting from rounding up every edge weight of $ w $ to the nearest multiple of $ B $, i.e.,
\begin{equation*}
w_B (u, v) = \left\lceil \frac{w(u, v)}{B} \right\rceil \cdot B \, .
\end{equation*}
By this definition we have $ w_B (u, v) \leq w (u, v) + B $ for every edge $ (u, v) \in E $.

If $ \en^*_{G(\sigma, \tau), w} (v) = \infty $, then also $ \en^*_{G(\sigma, \tau), w_B} (v) = \infty $ which trivially makes the inequality $ \en^*_{G(\sigma, \tau), w} (v) \leq \en^*_{G(\sigma, \tau), w_B} (v) + nB $ hold.
We now consider the case where $ \en^*_{G(\sigma, \tau), w} (v) < \infty $.
By the definition of the minimal energy we have
\begin{equation*}
\en^*_{G(\sigma, \tau), w} (v) = \max \left\{0, - \min_{P} w(P) \right\}
\end{equation*}
where the minimization is over all simple paths in $ (G(\sigma, \tau), w) $ originating at~$ v $ and $ w(P) $ denotes the sum of the edge weights of the path $ P $.
In the case where $ \en^*_{G(\sigma, \tau), w} (v) = 0 $, our claimed inequality trivially holds because $ \en^*_{G(\sigma, \tau), w_B} (v) \geq 0 $.
Consider now the second case where $ \en^*_{G(\sigma, \tau), w} (v) > 0 $. In this case, we have
\begin{equation*}
\en^*_{G(\sigma, \tau), w} (v) = - \min_{P} w(P) \, .
\end{equation*}
Every simple path $ P $ has length at most $ n $ and therefore
\begin{equation*}
w_B(P) = \sum_{(u, v) \in P} w_B(u, v) \leq \sum_{(u, v) \in P} (w (u, v) + B) \leq w (P) + nB \, .
\end{equation*}
Thus, we get $ w(P) \geq w_B(P) - nB $ for every simple path $ P $.
We now get
\begin{align*}
\en^*_{G(\sigma, \tau), w} (v) = - \min_{P} w(P) &\leq - \min_{P} (w_B(P) - nB) \\
&= - \min_{P} (w_B(P)) + n B = \en^*_{G(\sigma, \tau), w_B} (v) + nB \, . \qedhere
\end{align*}
\end{proof}

We now show that the precondition of the previous lemma is already implied by our choice of $ B $.

\setcounter{theoremtmp}{\value{theorem}}
\setcounter{theorem}{\value{LemmaTwo}}
\begin{lemma}[Restated]
Let $ v $ be a node with penalty $ P_{G, w} (v) \geq B $.
Then there is an optimal strategy $ \tau^* $ of Bob such that for every strategy $ \sigma $ of Alice we have that $ \en^*_{G(\sigma, \tau^*), w} (v) = \infty $ implies $ \en^*_{G(\sigma, \tau^*), w_B} (v) = \infty $.
\end{lemma}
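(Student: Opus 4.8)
The plan is to take for $\tau^*$ exactly the strategy furnished by the definition of the penalty, and then reduce the whole claim to a one-line inequality about the weight of the single cycle reachable from $v$ once both players' strategies are fixed.

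\textbf{Step 1: extract a witness strategy.} The set of reals $D \geq 0$ for which $v$ has penalty at least $D$ is downward closed, and since Bob has only finitely many strategies, it is also closed from above: if for every small $\varepsilon > 0$ some optimal strategy witnesses penalty at least $B - \varepsilon$, then by pigeonhole a single optimal strategy $\tau^*$ witnesses this for a sequence of $\varepsilon$'s tending to $0$, and ``average weight on $C$ at most $-(B-\varepsilon)$ for all such $\varepsilon$'' forces ``average weight on $C$ at most $-B$''. Hence $P_{G,w}(v) \geq B$ implies that $v$ has penalty at least $B$, so by definition there is an optimal strategy $\tau^*$ of Bob with the property: for every strategy $\sigma$ of Alice, the unique cycle $C$ reachable from $v$ in $G(\sigma,\tau^*)$ satisfies $w(C) \geq 0$ or $w(C)/|C| \leq -B$.

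\textbf{Step 2: reduce to the reachable cycle.} Fix this $\tau^*$ and an arbitrary strategy $\sigma$ of Alice, and assume $\en^*_{G(\sigma,\tau^*),w}(v) = \infty$. In $G(\sigma,\tau^*)$ every node has out-degree one, so there is a unique cycle $C$ reachable from $v$; by the characterization of the minimal energy under fixed strategies (Section~\ref{sec:definition_energy}), the assumption is equivalent to $w(C) < 0$, and likewise $\en^*_{G(\sigma,\tau^*),w_B}(v) = \infty$ is equivalent to $w_B(C) < 0$. Crucially, $C$ is the \emph{same} cycle in both cases, since it depends only on the graph $G(\sigma,\tau^*)$ and not on the weights, so it suffices to show $w(C) < 0 \Rightarrow w_B(C) < 0$.

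\textbf{Step 3: combine the penalty bound with the rounding bound.} Since $w(C) < 0$, the property of $\tau^*$ from Step~1 gives $w(C) \leq -B|C|$. Rounding each integer weight up to the nearest multiple of $B$ increases it by at most $B-1 < B$, so summing over the $|C|$ edges of $C$ yields $w_B(C) \leq w(C) + (B-1)|C| \leq -B|C| + (B-1)|C| = -|C| < 0$. Hence $w_B(C) < 0$, i.e.\ $\en^*_{G(\sigma,\tau^*),w_B}(v) = \infty$. As $\sigma$ was arbitrary and $\tau^*$ is optimal, this is exactly the claim. I expect the only slightly delicate points to be the attainment of the supremum in Step~1 and the \emph{strict} inequality in Step~3 — the latter is precisely why one should record ``each edge grows by at most $B-1$'' rather than the looser ``at most $B$'' used in Lemma~\ref{lem:energy_rounding_strategies}; everything else is routine.
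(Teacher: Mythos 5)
Your proof is correct and follows essentially the same route as the paper: take the optimal strategy $\tau^*$ furnished by the definition of the penalty, observe that the reachable cycle $C$ in $G(\sigma,\tau^*)$ is weight-independent, and check that a cycle of average weight at most $-B$ stays strictly negative after rounding (you bound the per-edge increase by $B-1$ using integrality, where the paper uses the strict bound $w_B(u,v) < w(u,v)+B$; both work). Your Step~1, making explicit that the supremum in the definition of $P_{G,w}(v)$ is attained by a single optimal strategy via pigeonhole over Bob's finitely many strategies, is a welcome bit of rigor that the paper's proof glosses over.
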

\setcounter{theorem}{\value{theoremtmp}}

To prove the above lemma, we first prove the following claim.

\begin{claim}
If the average weight of a cycle $ C $ in $ (G, w) $ is at most $ -B $, then $ C $ is a negative cycle in $ (G, w_B) $ with total weight $ w_B(C) < 0 $.
\end{claim}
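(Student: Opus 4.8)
The plan is to prove the claim by a single direct estimate: bound the total rounded weight $w_B(C)$ from above in terms of the total weight $w(C)$, and then invoke the average‑weight hypothesis.

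First I would record two elementary facts about the rounding operation. For every edge $(u,v)\in E$ we have $w(u,v)\le w_B(u,v)$, since rounding up never decreases a weight. Moreover, because $w(u,v)$ is an integer and $B$ is a positive integer, the identity $\lceil a/B\rceil = \lfloor (a+B-1)/B\rfloor$ gives $\lceil w(u,v)/B\rceil\,B \le w(u,v)+B-1$, hence
\[
w_B(u,v)\;\le\;w(u,v)+B-1 .
\]
(The proof of Lemma~\ref{lem:energy_rounding_strategies} only needs the weaker bound $w_B(u,v)\le w(u,v)+B$; here the extra $-1$ is precisely what will turn the final inequality into a strict one.)

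Next I would sum this bound over the edges of the cycle $C$. Writing $w(C)=\sum_{(u,v)\in C}w(u,v)$ and $w_B(C)=\sum_{(u,v)\in C}w_B(u,v)$, we obtain
\[
w_B(C)\;\le\;\sum_{(u,v)\in C}\bigl(w(u,v)+B-1\bigr)\;=\;w(C)+(B-1)\,|C| .
\]
Now I would plug in the hypothesis. The assumption that the average weight of $C$ in $(G,w)$ is at most $-B$ means exactly $w(C)/|C|\le -B$, i.e. $w(C)\le -B\,|C|$. Combining the two displays,
\[
w_B(C)\;\le\;-B\,|C|+(B-1)\,|C|\;=\;-\,|C|\;<\;0 ,
\]
since $|C|\ge 1$. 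Thus $C$ has negative total weight in $(G,w_B)$, which is the claim.

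I do not expect a genuine obstacle here; the argument is a one‑line computation. The only point requiring a little care is ensuring the conclusion is a \emph{strict} inequality, which is why I use the sharpened per‑edge bound $w_B(u,v)\le w(u,v)+B-1$ rather than $w_B(u,v)\le w(u,v)+B$; the latter would only yield $w_B(C)\le 0$.
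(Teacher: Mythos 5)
Your proof is correct and follows essentially the same route as the paper: sum the per-edge rounding error over the cycle and combine with the hypothesis $w(C)\le -B\,|C|$. The only cosmetic difference is how strictness is obtained --- the paper uses the strict per-edge bound $w_B(u,v) < w(u,v)+B$ (valid since weights are integers and $B$ is a positive integer), whereas you use the equivalent sharpened non-strict bound $w_B(u,v)\le w(u,v)+B-1$; both yield $w_B(C)<0$.
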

\begin{proof}
We assume that the average weight of $ C $ in $ (G, w) $ is at most $ -B $, i.e.,
\begin{equation*}
\frac{\sum_{(u, v) \in C} w (u, v)}{|C|} \leq -B \, .
\end{equation*}
Since $ w_B (u, v) < w (u, v) + B $ for every edge $ (u, v) \in E $, we get the following bound for the average weight of $ C $ in $ (G, w_B) $:
\begin{align*}
\frac{\sum_{(u, v) \in C} w_B (u, v)}{|C|} &< \frac{\sum_{(u, v) \in C} (w (u, v) + B)}{|C|} \\
&= \frac{\sum_{(u, v) \in C} w (u, v)}{|C|} + \frac{\sum_{(u, v) \in C} B}{|C|} \\
&\leq - B + \frac{|C| \cdot B}{|C|} = 0 \, .
\end{align*}
Therefore, $ w_B (C) = \sum_{(u, v) \in C} w_B (u, v) < 0 $ which means that $ C $ is a negative cycle in $ (G, w_B) $.
This finishes the proof of the claim.
\end{proof}

We now give the proof of Lemma~\ref{lem:penalty_implies_condition_for_rounding_lemma}.
\begin{proof}[Proof of Lemma~\ref{lem:penalty_implies_condition_for_rounding_lemma}]
By the definition of the penalty we know that there is an optimal strategy~$ \tau^* $ of Bob such that, for every strategy~$ \sigma $ of Alice, if the unique cycle $ C $ reachable from $ v $ in $ G(\sigma, \tau^*) $ has negative total weight $ w(C) < 0 $, then its average weight is at most $ - P_{G, w} (v) \leq -B $ by the definition of $P(G, w)$.
Now let $ \sigma $ be any strategy of Alice and let $ C $ denote the unique cycle $ C $ reachable from $ v $ in $ G(\sigma, \tau^*) $.
Assume that $ \en^*_{G(\sigma, \tau^*), w} (v) = \infty $.
Then we have $ w(C) < 0 $ and thus, by the definition of the penalty, $ C $ has an average weight of at most $ - B $.
By our claim we get that $ C $ is a negative cycle in $ (G, w_B) $ (i.e. $ w_B (C) < 0$) and therefore $ \en^*_{G (\sigma, \tau^*), w_B} (v) = \infty $.
\end{proof}

\section{Exact Solution by Approximation}\label{sec:exact_solution_by_approximation}

We now use our results from the previous sections to prove Theorem~\ref{thm:main 1}.
\setcounter{theoremtmp}{\value{theorem}}
\setcounter{theorem}{\value{TheoremOne}}
\begin{theorem}[Restated]\label{thm:main 1}
Given a graph $(G, w)$ and an integer $M$ we can compute the minimal initial energies of all nodes in
\begin{equation*}
O \left( mn \left( \log\frac{M}{n} \right) \left( \log\frac{M}{n\lceil P(G, w)\rceil} \right) + m \frac{M}{\lceil P(G, w)\rceil} \right)
\end{equation*}
time, provided that for all $v$, $e_{G, w}^*(v)<\infty$ implies that $e_{G, w}^*(v)\leq M$.
\end{theorem}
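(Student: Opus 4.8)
The plan is to use the additive-approximation algorithm of Theorem~\ref{thm:rounding_procedure_gives_desired_approximation} as an engine and drive its error down to zero by repeated calls with a geometrically shrinking error parameter, finishing with one call to the value iteration algorithm. Two preliminary issues must be dispatched. Theorem~\ref{thm:rounding_procedure_gives_desired_approximation} assumes $P(G,w)\ge 1$ and states its cost in terms of $P(G,w)$, neither of which is given to us. If $P(G,w)<1$ then $\lceil P(G,w)\rceil=1$ and the claimed bound degenerates to $O(mn\log^2\tfrac Mn+mM)$, which we obtain by simply running the value iteration algorithm of Proposition~\ref{pro:value_iteration} with the trivial admissible list $\{0,\dots,M,\infty\}$ in $O(mM)$ time; so we may assume $P(G,w)\ge 1$ and fix the target $B=\lfloor P(G,w)\rfloor$, which satisfies $1\le B\le P(G,w)$ and $B=\Theta(\lceil P(G,w)\rceil)=\Theta(p)$. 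To avoid needing $P(G,w)$ itself I would wrap the whole refinement in a doubling search that tries $\widehat B=\dots,4,2,1$ in decreasing order: an oversized guess $\widehat B>P(G,w)$ violates the precondition $c\le nP(G,w)$ of Theorem~\ref{thm:rounding_procedure_gives_desired_approximation} and may return nonsense, but since the \emph{running time} of every subroutine is oblivious to correctness we just run the guess to completion, obtain a candidate energy function $f$, and verify it using Lemma~\ref{lem:minimal characterization}: $f=\en^*_{G,w}$ iff $f$ satisfies conditions~\ref{item:one}--\ref{item:two} and value iteration (Proposition~\ref{pro:value_iteration}) with the sorted $O(n)$-element admissible list $\{0\}\cup\{f(v):v\in V\}\cup\{\infty\}$ returns $f$; this check costs $O(mn)$. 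The costs of the oversized guesses, being $O(mM/\widehat B)$ plus lower-order terms, form a geometric series dominated by the successful guess $\widehat B=\Theta(p)$, so only a constant factor is lost.

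The core is the refinement loop. It maintains an energy function $\en$ and a bound $c$ with $\en(v)\le\en^*_{G,w}(v)\le\en(v)+c$ for every $v$, starting from $\en\equiv 0$, $c=M$. Each round first applies the potential transformation $w_\en(u,v)=w(u,v)+\en(u)-\en(v)$: this preserves integrality and absence of self-loops, leaves the total weight (hence the average weight) of every cycle unchanged — so it leaves the penalty unchanged, $P(G,w_\en)=P(G,w)\ge B$ — and, because $\en$ is a close underestimate, pushes the finite minimal energies of the transformed instance down to $O(c)$. We then invoke Theorem~\ref{thm:rounding_procedure_gives_desired_approximation} on the transformed instance with the largest admissible error parameter that is at most $c/2$ (which is $\ge n$ as long as $c\ge 2n$, and at most $n\widehat B\le nP(G,w)$ for a good guess, so the precondition holds), and undo the transformation to obtain a new pair $(\en,c)$ with the error halved. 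When $c>2n\widehat B$ the parameter used is $n\widehat B$ and the call costs $O\bigl(mn\cdot O(c)/(n\widehat B)\bigr)=O(mc/\widehat B)$; when $c\le 2n\widehat B$ the parameter is $c/2$ and the call costs $O(mn)$. Thus the $O(\log\tfrac{M}{n\widehat B})$ rounds of the first regime cost $O(mM/\widehat B)$ in total (geometric sum, dominated by the first round), and the $O(\log\widehat B)$ rounds of the second regime cost $O(mn)$ apiece; with $\widehat B=\Theta(p)$, and together with the $O(\log\tfrac{M}{np})$-many doubling guesses each contributing $O(mn)$ polylog overhead, this is where the $\tfrac{mM}{p}$ term and the logarithmic factors $\log\tfrac Mn$ and $\log\tfrac{M}{np}$ of the claimed bound arise.

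Once $c$ has been driven down to $\Theta(n)$, I would finish with one value iteration call — started from the current $\en$, which is a valid underestimate of $\en^*_{G,w}$, using the all-integers admissible list on $[0,M]$ together with $\infty$ — which raises each node's energy $O(n)$ times, hence costs $O(mn)$, and returns $\en^*_{G,w}$ exactly. Summing the pieces gives the bound of the theorem.

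The main obstacle is making the potential-transformation step go through quantitatively. The output of Theorem~\ref{thm:rounding_procedure_gives_desired_approximation} is the exact minimal energy of a \emph{rounded} graph $(G,w_{B'})$, not of $(G,w)$, so after transforming by it the edges Alice actually uses are only guaranteed to have weight $\ge -B'$ rather than exactly $\ge 0$; one must carry this $\Theta(B')$ slack through the analysis to prove both that the transformed finite minimal energies really are $O(c)$ — which controls the cost of the next call — and that the error contracts geometrically rather than stalling at an additive $\Theta(nB')$. This is delicate precisely because, as the footnote after Proposition~\ref{prop:bound error} stresses, energy games are \emph{not} robust to weight perturbations in general; the scheme works only because the large-penalty hypothesis, transported unchanged across every potential transform (which preserves cycle weights), guarantees as in Lemma~\ref{lem:penalty_implies_condition_for_rounding_lemma} that ``infinite stays infinite'' under each rounding, which is exactly what pins the transformed minimal energies down. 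Secondary nuisances are the bookkeeping of the doubling search over $\widehat B$ and confirming that the verification step catches every bad guess within the stated budget.
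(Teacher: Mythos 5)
Your architecture is essentially the paper's: a potential-transform-and-halve refinement driven by the approximation algorithm of Theorem~\ref{thm:rounding_procedure_gives_desired_approximation} (this is Lemma~\ref{lem:approx to exact} and Lemma~\ref{lem:main_lemma_for_correctness} in the paper), wrapped in a halving search over guesses for the penalty with a verification step to reject bad guesses. The running-time accounting also matches (one ``expensive'' call per guess contributing the $mM/\lceil P\rceil$ term, and $O(\log)$ cheap rounds per guess contributing the $\log\cdot\log$ term), modulo the minor point that your ``first regime'' really consists of a single call, not $O(\log\frac{M}{n\widehat B})$ of them. The worry you single out as the main obstacle --- carrying a $\Theta(B')$ slack on the edges Alice uses through the transformation --- is actually a non-issue: the identity $\en^*_{G,w}(v)=\en(v)+\en^*_{G,w'}(v)$ and the preservation of the penalty hold for \emph{any} energy function $\en$ with $\en\le\en^*_{G,w}$ (Lemma~\ref{lem:main_lemma_for_correctness}), irrespective of how $\en$ was obtained, so no reasoning about which edges are used is needed.

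The genuine gap is your verification step. You propose to certify a candidate $f$ by checking conditions~\ref{item:one}--\ref{item:two} of Lemma~\ref{lem:minimal characterization} and then confirming that value iteration with the list $A_f=\{0\}\cup\{f(v):v\in V\}\cup\{\infty\}$ returns $f$. This check is not sound: value iteration with list $A_f$ returns the \emph{least $A_f$-valued} function satisfying conditions~\ref{item:one}--\ref{item:two}, whereas condition~3 of Lemma~\ref{lem:minimal characterization} quantifies over \emph{all} energy functions, and Proposition~\ref{pro:value_iteration} gives no guarantee when the supplied list is not admissible --- which is exactly what you cannot assume when the guess is bad. Concretely, take two nodes $u,v$ with the single edges $w(u,v)=-3$ and $w(v,u)=3$; then $\en^*=(3,0)$, but $f=(4,1)$ satisfies conditions~\ref{item:one}--\ref{item:two}, and value iteration with $A_f=\{0,1,4,\infty\}$ started from $0$ raises $u$ to $3$, rounds up to $4$, then raises $v$ to $1$ and terminates, returning $f$. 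So your test accepts a wrong answer. Note also that you cannot instead argue that bad guesses only ever produce underestimates of $\en^*$ (which would make conditions~\ref{item:one}--\ref{item:two} alone sufficient): when the guessed $\widehat B$ exceeds $P(G,w)$, the upper bound passed to the recursive call may be wrong, and the final value iteration can then erroneously assign $\infty$ to nodes with finite minimal energy, so the candidate can overshoot. The paper's verification instead uses the fixed-point characterization of Lemma~\ref{lem:characterization_of_minimal_energy}, checkable in $O(m)$ time per guess; you should substitute that (or otherwise prove that every candidate your scheme produces is dominated by $\en^*$, including at nodes assigned $\infty$). With the verification repaired, the rest of your argument goes through and yields the stated bound.
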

\setcounter{theorem}{\value{theoremtmp}}

As the first step, we provide an algorithm that computes the minimal energy given a lower bound on the penalty of the graph.
For this algorithm, we show how we can use the approximation algorithm in Section~\ref{sec:rounding_procedure} to find an {\em exact} solution.
\newcounter{LemmaThree}
\setcounter{LemmaThree}{\value{theorem}}
\begin{lemma}\label{lem:approx to exact}
There is an algorithm that takes a graph $(G, w)$, a lower bound~$ D $ on the penalty $ P (G, w) $, and an upper bound $ M $ on the finite minimal energy of $ (G, w) $ as its input and computes the minimal energies of $ (G, w) $ in $ O(m n \log{D} + m \cdot \frac{M}{\lceil D \rceil})$ time.
Specifically, if $ P(G, w) \geq M/(2n) $, we can set $ D = M/(2n) $ and the algorithm runs in time $ O(m n \log{(M/n)}) $.
\end{lemma}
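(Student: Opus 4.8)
The plan is to obtain $\en^*_{G,w}$ exactly by computing a chain of rounded games of geometrically decreasing granularity, each step reusing the answer of the previous, coarser one. For an integer $B\ge 1$ write $w_B(u,v)=\lceil w(u,v)/B\rceil\cdot B$ as in Section~\ref{sec:rounding_procedure}. First assume $D\ge 1$; if $D<1$ the claimed bound is essentially the pseudopolynomial one, obtained by running Proposition~\ref{pro:value_iteration} with $A=\{0,\dots,M,\infty\}$ in time $O(mM)=O(mM/\lceil D\rceil)$. Set $t=\lfloor\log_2\lfloor D\rfloor\rfloor$ and $\tilde B_i=2^{t-i}$ for $i=0,\dots,t$, so $\tilde B_0>\cdots>\tilde B_t=1$, each $\tilde B_{i+1}$ divides $\tilde B_i$, $t=O(\log D)$, and $\tilde B_i\le\lfloor D\rfloor\le P(G,w)$ for all $i$. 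Write $\en^{(i)}=\en^*_{G,w_{\tilde B_i}}$. Since $w=w_1$ for integral weights, $\en^{(t)}=\en^*_{G,w}$, so it suffices to compute $\en^{(0)},\en^{(1)},\dots,\en^{(t)}$ in this order.

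Two facts, both already available in the paper, make the step from $\en^{(i-1)}$ to $\en^{(i)}$ cheap. Since $\tilde B_i\mid\tilde B_{i-1}$ we have $w\le w_{\tilde B_i}\le w_{\tilde B_{i-1}}$ pointwise, so Lemma~\ref{lem:energy_increased_weights} gives $\en^{(i-1)}(v)\le\en^{(i)}(v)\le\en^*_{G,w}(v)$; and since every node has penalty at least $\tilde B_{i-1}$, inequality~(2) of Proposition~\ref{prop:bound error} (with $c=n\tilde B_{i-1}$, hence $B=\tilde B_{i-1}$) gives $\en^*_{G,w}(v)\le\en^{(i-1)}(v)+n\tilde B_{i-1}$. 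Hence
\[
\en^{(i-1)}(v)\ \le\ \en^{(i)}(v)\ \le\ \en^{(i-1)}(v)+n\tilde B_{i-1},
\]
and $\en^{(i)}(v)$ is either $\infty$ or a multiple of $\tilde B_i$; as $\tilde B_{i-1}=2\tilde B_i$, only $2n+1$ values remain possible for $\en^{(i)}(v)$. (The same reasoning shows $\en^{(i)}(v)=\infty$ iff $\en^*_{G,w}(v)=\infty$, so the given $M$ is a valid upper bound on the finite minimal energies of every $(G,w_{\tilde B_i})$.)

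I would then proceed in three steps. First, observe that the correctness argument of Proposition~\ref{pro:value_iteration} — the invariant $\en(x)\le\en^*(x)$ together with Lemma~\ref{lem:minimal characterization} — goes through with only cosmetic changes when a \emph{separate} sorted admissible list $A_v$ is used at each node $v$, as long as $\en^*(v)\in A_v$ and the initial value $\min A_v$ is a lower bound on $\en^*(v)$, with running time $O(\sum_v(\deg^{+}(v)+\deg^{-}(v))\,|A_v|)$. Second, compute $\en^{(0)}=\en^*_{G,w_{\tilde B_0}}$ via Corollary~\ref{cor:algorithm_rounded_game} (applicable since all weights of $(G,w_{\tilde B_0})$ are multiples of $\tilde B_0$ and $M$ bounds its finite minimal energies), in time $O(mM/\tilde B_0)=O(mM/\lceil D\rceil)$. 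Third, for $i=1,\dots,t$ compute $\en^{(i)}$ with the per-node value iteration of the first step, using at node $v$ the list $A_v=\{\text{multiples of }\tilde B_i\text{ in }[\en^{(i-1)}(v),\,\en^{(i-1)}(v)+n\tilde B_{i-1}]\}\cup\{\infty\}$; by the displayed sandwich this list contains $\en^{(i)}(v)$, its minimum $\en^{(i-1)}(v)$ is a valid lower bound, and $|A_v|=O(n)$, so each refinement costs $O(mn)$. Summing, the total time is $O(mM/\lceil D\rceil)+O(mn\log D)$; substituting $D=M/(2n)$ makes the first term $O(mn)$ and the second $O(mn\log(M/n))$, which is the stated special case.

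The main obstacle is conceptual: finding the right telescoping object. Re-running the approximation of Section~\ref{sec:rounding_procedure} at finer scales from scratch does not help, because at scale $\tilde B_i$ it only pins each value to an interval of width $n\tilde B_{i-1}$, and the \emph{union} of these intervals over all nodes may contain $\Theta(M)$ integers. The key point is that the intermediate rounded games $\en^*_{G,w_{\tilde B_i}}$ are sandwiched \emph{nodewise} between consecutive levels, so only $O(n)$ candidates survive \emph{per node} — exactly what a per-node admissible list exploits — and this forces the granularities to form a divisor chain $\tilde B_t\mid\cdots\mid\tilde B_0$ with bounded ratios (without divisibility, $w_{\tilde B_i}\le w_{\tilde B_{i-1}}$, and hence the monotonicity of the $\en^{(i)}$, can fail). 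A secondary point to check is that the penalty hypothesis is precisely what keeps the "$=\infty$" cases consistent across all scales, letting one reuse a single $M$ throughout.
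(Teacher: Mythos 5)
Your proof is correct, but it takes a genuinely different route from the paper's. The paper's algorithm never touches the rounding granularity chain you build: it makes one call to the approximation algorithm with error $c=\lfloor M/2\rfloor$, then applies a \emph{potential transformation} $w'(u,v)=w(u,v)+\en(u)-\en(v)$, proves (Lemma~\ref{lem:main_lemma_for_correctness}) that this preserves the penalty and shifts all minimal energies by $\en$, and recurses on the transformed graph with the halved bound $M'=\lfloor M/2\rfloor$; the case $D<M/(2n)$ is handled by one preliminary approximation step with $c=nD$ costing $O(mM/D)$, exactly mirroring your base level $\tilde B_0$. You instead keep the graph and weights fixed and telescope over granularities $\tilde B_0>\cdots>\tilde B_t=1$ forming a divisor chain, using Lemma~\ref{lem:energy_increased_weights} (for the lower side, where the divisibility $\tilde B_i\mid\tilde B_{i-1}$ is indeed essential for $w_{\tilde B_i}\le w_{\tilde B_{i-1}}$) and inequality~(2) of Proposition~\ref{prop:bound error} (for the upper side) to confine $\en^{(i)}(v)$ to $O(n)$ multiples of $\tilde B_i$ per node. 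The price is that you must generalize Proposition~\ref{pro:value_iteration} to \emph{per-node} admissible lists $A_v$; this generalization is not stated in the paper but does follow from the same invariant argument (the only places the list enters are $\min A_v\le\en^*(v)$ at initialization and $\en^*(u)\in A_u$ when rounding up, both of which localize to $u$), and the speed-up analysis gives $O(\sum_v(\degree^+(v)+\degree^-(v))|A_v|)=O(mn)$ per level. What each approach buys: the paper's is modular (the approximation algorithm is reused as a black box and the recursion is on a genuinely smaller instance, which also makes the $D<M/(2n)$ case a one-line reduction), while yours avoids the potential-transformation lemma and the bookkeeping of recursing on modified weight functions, at the cost of the per-node-list variant of value iteration; the running times coincide, including the $O(mn\log(M/n))$ bound in the special case $D=M/(2n)$. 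Your treatment of the infinite values (penalty $\ge\tilde B_i$ forces $\en^{(i)}(v)=\infty$ exactly when $\en^*_{G,w}(v)=\infty$, so a single $M$ serves all levels and nodes with $\en^{(i-1)}(v)=\infty$ get the singleton list $\{\infty\}$) is the one point that genuinely needs the penalty hypothesis at every scale, and you identify it correctly.
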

\begin{proof} [Main Idea]
We provide the main idea of the proof of Lemma~\ref{lem:approx to exact}. Details are in Section \ref{apx:details_exact_solution_by_approximation} and \ref{sec:full proof of lem:approx to exact}. 

To illustrate the main idea, we focus on the case $ D = M / (2n) $ where we want to show an $O(mn\log (M/n))$ running time.
If that condition does not hold, we can transform the problem into a problem where it holds in time $ O (m M/D) $.
Let $\cA$ be the approximation algorithm given in Theorem~\ref{thm:rounding_procedure_gives_desired_approximation}. Recall that $\cA$ takes $c$ as its input and returns $\en(v)$ such that
\begin{align}
\en (v) \leq \en^*_{G, w} (v) \leq \en (v) + c
\end{align}
provided that $n\leq c \leq n P(G, w) $. Our exact algorithm will run $\cA$ with parameter $ c = \lfloor M/2 \rfloor $ which satisfies $ c \leq M/2 \leq n D \leq n P(G, w) $. By Theorem~\ref{thm:rounding_procedure_gives_desired_approximation}, this takes $O(mnM/c)=O(mn)$ time.
Using the energy function $ e $ returned by $\cA$, our algorithm produces a new graph $(G, w')$ defined by $w'(u, v)=w(u, v)+\en(u)-\en(v)$ for every edge $ (u, v)\ $. It can be proved that this graph has the following crucial properties (see details in Lemma~\ref{lem:main_lemma_for_correctness} in Section~\ref{apx:details_exact_solution_by_approximation}):
\begin{enumerate}
\item The penalty does not change, i.e., $ P_{G, w} (v) = P_{G, w'} (v) $ for every node $ v $.
\item We have $\en^*_{G, w}(v)=\en^*_{G, w'}(v)+\en(v)$ for every node $ v $.
\item The largest finite minimal energy of nodes in $ (G, w') $ is at most $c$; i.e., if $ \en^*_{G, w'}(v) < \infty $ then $ \en^*_{G, w'}(v) \leq c $. (This follows from property~2 and the inequality $ \en^*_{G, w} (v) \leq \en (v) + c $ of Theorem~\ref{thm:rounding_procedure_gives_desired_approximation}.)
\end{enumerate}
The algorithm then recurses on input $(G, w')$, $D$ and $ M' = c = \lfloor M/2 \rfloor $. Properties~1 and~3 guarantee that the preconditions of our algorithm for the recursive call are fulfilled:
By our choice of $ M' $ we know that if $ \en^*_{G, w'}(v) < \infty $ then $ \en^*_{G, w'}(v) \leq M' $ and since $ D \leq P_{G, w} (v) = P_{G, w'} (v) $, $ D $ is a lower bound on the penalty of $ (G, w') $.
Therefore we may recurse and the algorithm will return $\en^*_{G, w'}(v)$ for every node $ v $. It then outputs $\en^*_{G, w'}(v)+\en(v)$ which is guaranteed to be a correct solution (i.e., $\en^*_{G, w}(v)=\en^*_{G, w'}(v)+\en(v)$) by the second property.
The running time of this algorithm is $T(n, m, M) \leq T(n, m, M/2)+O(mn)$.
We stop the recursion when $M$ becomes small enough, i.e. when $ M \leq n $. In this case the value iteration algorithm $\cA$ runs in $ O(mn) $ time.
Thus we get $T(n, m, M)=O(mn\log (M/n))$ as desired.
\end{proof}

We now prove Theorem~\ref{thm:main 1} by extending the algorithm of Lemma~\ref{lem:approx to exact} to an algorithm that does not require the knowledge of a lower bound of the penalty.
\begin{proof}[Proof of Theorem~\ref{thm:main 1}]
We repeatedly guess a lower bound for the penalty $ P_{G, w} $ and run the algorithm of Lemma~\ref{lem:approx to exact} until our guess eventually turns out to be correct.
We start with the guess $ D = M / (2n) $ for which the algorithm of Lemma~\ref{lem:approx to exact} runs in time $ O (m n \log(M/n) ) $.
We then perform binary search for the next values of $ D $ by trying the values $ M/(2n) $, $ M/(4n) $, $ M/(8n) $, and so on.

If our guess was correct, the algorithm returns the minimal energy function.
If our guess was not correct, the energy function returned by our algorithm might not necessarily be the minimal energy function.
Using the following characterization of the minimal energy we can check in linear time whether we have already found the minimal energy function.
\begin{lemma}[Minimal Energy Characterization \cite{Lifshits07}]\label{lem:characterization_of_minimal_energy}
The minimal energy of the graph $ (G, w) $ is the unique energy function $ \en $ satisfying
\begin{align*}
\en (u) &= \begin{cases}
\min_{(u, v) \in E} \max (\en (v) - w(u, v), 0) & \text{if $ u \in V_A $} \\
\max_{(u, v) \in E} \max (\en (v) - w(u, v), 0) & \text{if $ u \in V_B $}\, .
\end{cases}
\end{align*}
for every node $ u \in G $.
\end{lemma}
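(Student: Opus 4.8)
The plan is to prove two things: that the minimal energy function $\en^*_{G, w}$ is a solution of the displayed system, and that it is the only one. Throughout, write $F$ for the operator on energy functions whose $u$-th coordinate is $\min_{(u,v)\in E}\max(\en(v)-w(u,v),0)$ for $u\in V_A$ and $\max_{(u,v)\in E}\max(\en(v)-w(u,v),0)$ for $u\in V_B$, so that the system reads $\en=F(\en)$. Two elementary facts about $F$ drive the argument. First, $F$ is monotone: since $x\mapsto\max(x,0)$ is nondecreasing it commutes with both $\min$ and $\max$, so raising any $\en(v)$ can only raise $F(\en)(u)$. Second, because every energy function is nonnegative, the pointwise inequality $\en\ge F(\en)$ is \emph{equivalent} to conditions~\ref{item:one} and~\ref{item:two} of Lemma~\ref{lem:minimal characterization}: for $u\in V_A$, condition~\ref{item:one} says $\en(u)\ge\en(v)-w(u,v)$ for some edge, i.e.\ $\en(u)\ge\min_{(u,v)}(\en(v)-w(u,v))$, which together with $\en(u)\ge0$ is exactly $\en(u)\ge F(\en)(u)$; the $V_B$ case is symmetric.

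First I would show $\en^*_{G, w}=F(\en^*_{G, w})$. By Lemma~\ref{lem:minimal characterization}, $\en^*_{G, w}$ satisfies conditions~\ref{item:one} and~\ref{item:two}, so $\en^*_{G, w}\ge F(\en^*_{G, w})$ by the equivalence above. Applying the monotone operator $F$ to both sides gives $F(\en^*_{G, w})\ge F(F(\en^*_{G, w}))$, which says that the energy function $F(\en^*_{G, w})$ itself satisfies conditions~\ref{item:one} and~\ref{item:two}; the minimality condition (the third condition of Lemma~\ref{lem:minimal characterization}) then forces $\en^*_{G, w}\le F(\en^*_{G, w})$. Combining the two inequalities yields $\en^*_{G, w}=F(\en^*_{G, w})$, so $\en^*_{G, w}$ solves the system.

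For uniqueness, let $\en$ be any energy function with $\en=F(\en)$; I would prove $\en=\en^*_{G, w}$ by two opposite inequalities. The inequality $\en\ge\en^*_{G, w}$ is immediate: $\en=F(\en)$ gives $\en\ge F(\en)$, hence $\en$ satisfies conditions~\ref{item:one} and~\ref{item:two}, and the minimality condition of Lemma~\ref{lem:minimal characterization} gives $\en^*_{G, w}\le\en$. The reverse inequality $\en\le\en^*_{G, w}$ is the heart of the matter, and this is where the \emph{equality} $\en=F(\en)$ (equivalently the relation $\en\le F(\en)$) must be exploited, since the super-solutions of the first part do not satisfy it. The plan is to read a Bob strategy $\tau$ off his equations, choosing at each $u\in V_B$ an edge attaining the outer maximum so that $\en(u)=\max(0,\en(\tau(u))-w(u,\tau(u)))$, and then to prove $\en^*_{G(\sigma,\tau),w}(v)\ge\en(v)$ for \emph{every} Alice strategy $\sigma$; by the determinacy identity $\en^*_{G, w}(v)=\max_\tau\min_\sigma\en^*_{G(\sigma,\tau),w}(v)$ from Section~\ref{sec:prelim}, this yields $\en^*_{G, w}(v)\ge\en(v)$. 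In $G(\sigma,\tau)$ each node has a unique successor $n(u)$; the $V_A$-equation gives $\en(u)\le\max(0,\en(n(u))-w(u,n(u)))$ on the chosen edge and the $V_B$-equation gives equality there, so in both cases $w(u,n(u))\le\en(n(u))-\en(u)$ holds along the forced path $v=u_0\to u_1\to\cdots$ whenever $\en(u_i)>0$. Telescoping these bounds, I would conclude that either the path first meets a node of energy $0$ at some $u_k$, so that the simple prefix $P_k$ satisfies $w(P_k)\le\en(u_k)-\en(v)=-\en(v)$ and hence $\en^*_{G(\sigma,\tau),w}(v)\ge-w(P_k)\ge\en(v)$, or the energy stays positive into the reached cycle $C$, where summing around $C$ gives $w(C)\le0$ and $\en^*_{G(\sigma,\tau),w}(v)=\infty$ as soon as $w(C)<0$.

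The step I expect to be the main obstacle is precisely the remaining alternative: a cycle with $w(C)=0$ on which $\en$ is everywhere positive. There the telescoped inequalities are all tight, so the equation propagates the values rigidly around $C$, and one must trace this rigidity back along the path to $v$ to exhibit a simple prefix whose weight drops to $-\en(v)$; making this propagation precise, together with the separate case $\en(v)=\infty$ (where instead one argues that the cycle reached under the chosen $\tau$ must be strictly negative), is the delicate part of the uniqueness argument and is exactly where the sub-solution property $\en\le F(\en)$ is indispensable. Once this zero-mean case is settled, the two inequalities combine to give $\en=\en^*_{G, w}$, establishing that the minimal energy is the unique solution of the system.
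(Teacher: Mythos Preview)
The paper does not prove this lemma; it is quoted from \cite{Lifshits07} and used as a black box, so there is no paper proof to compare against. Your argument that $\en^*_{G,w}$ is a fixed point of $F$ is clean and correct, as is the direction $\en\ge\en^*_{G,w}$ for any fixed point $\en$: together these show that $\en^*_{G,w}$ is the \emph{least} fixed point of $F$.

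The gap is the reverse inequality $\en\le\en^*_{G,w}$, and your instinct that the zero-mean-cycle case is the obstacle is exactly right---but the obstacle is fatal, not merely delicate, because the uniqueness claim is false as stated. Take two nodes $u,v$ (owned by either player) with edges $(u,v)$ and $(v,u)$ of weight~$0$. Each node has a single outgoing edge, so the system collapses to $\en(u)=\max(\en(v),0)=\en(v)$ and symmetrically $\en(v)=\en(u)$; every constant function, including $\en\equiv 5$ and $\en\equiv\infty$, is a fixed point, while $\en^*_{G,w}\equiv 0$. In your Bob-strategy argument this is precisely the case $w(C)=0$ with all energies on $C$ positive: no simple prefix ever has weight $\le -\en(v)$, and the ``rigidity'' you planned to propagate does not force the values down. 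What your proof actually establishes is that $\en^*_{G,w}$ is the least solution of the system, which is the correct formulation; the paper's two uses of the lemma (checking a candidate in the proof of Theorem~\ref{thm:main 1} and the verification in Lemma~\ref{lem:main_lemma_for_correctness}) can both be made to work with that weaker statement together with a separately argued upper bound, but uniqueness as written cannot be proved.
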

By checking the equation for every node $ u $ we can determine in time $ O (m) $ whether an energy function $ e $ is indeed the minimal energy function.

We stop if we have already found the minimal energy function by running the algorithm of Lemma~\ref{lem:approx to exact} with our guessed lower bound $ D $ of the penalty. Otherwise we guess a new lower bound $ D $ of the penalty which is half of the previous one and run the algorithm of Lemma~\ref{lem:approx to exact} again. Eventually, our guess will be correct and we will stop before the guessed value is smaller than $P(G,w)/2$ or $ 1 $ (in the latter case we simply run the value iteration algorithm). Therefore we get a running time of
\begin{equation*}
\textstyle
O\left( mn \left(\log\frac{M}{2n}+\log\frac{M}{4n}+\ldots+\log(\lceil P(G, w)\rceil) \right)+ m \left( 2n+4n+\ldots+\frac{M}{\lceil P(G, w)\rceil} \right) \right)
\end{equation*}
which solves to $O(mn(\log\frac{M}{n})(\log{\frac{M}{n \lceil P(G, w)\rceil}})+\frac{mM}{\lceil P(G, w)\rceil})$.
\end{proof}
In the worst case, i.e., when $P(G, w) = 1/n$ and $ M = nW $, our algorithm runs in time $ O(mnW) $ which matches the current fastest pseudopolynomial algorithm~\cite{Brim11}.
The result also implies that graphs with a penalty of at least $W/\poly(n)$ form an interesting class of polynomial-time solvable energy games.

\subsection{Auxiliary Lemma Needed for Proving Lemma~\ref{lem:approx to exact}}\label{apx:details_exact_solution_by_approximation}

In the following we prove an auxiliary lemma that we need for arguing about the correctness of the algorithm of Lemma~\ref{lem:approx to exact}.
In that algorithm we first compute an energy function $ \en $ that approximates the minimal energy function of a weighted graph $ (G, w) $ and then define a new weight function $ w' $ by $ w' (u, v) = w (u, v) + \en (u) - \en (v) $ for every edge $ (u, v) $.
For our algorithm to be correct we need two properties to hold.\footnote{The third property we mentioned above follows from property 2 and the approximation guarantee of the energy function $ \en $.}
\begin{enumerate}
\item The penalty does not change, i.e., $ P_{G, w} (v) = P_{G, w'} (v) $ for every node $ v $.
\item We have $\en^*_{G, w}(v)=\en^*_{G, w'}(v)+\en(v)$ for every node $ v $.
\end{enumerate}
We will show that these two properties actually hold for \emph{any} energy function $ \en $.

Note that this kind of modification of the weights is often called a potential transformation~\cite{GurvichKK88} by the potential function $ e $.
It is well-known that a potential transformation does not change the average weight of any cycle and the total weight of a path from $ u $ to $ v $ changes by $ e(u) - e(v) $.
The first property above in fact follows from this observation and we provide its proof only for completeness.
The second property above additionally needs the precondition that $ \en(v) $ does not exceed the minimal energy at $ v $ and is not true for an arbitrary potential transformation.

\begin{lemma}\label{lem:main_lemma_for_correctness}
Let $ (G, w) $ be a weighted graph and let $ \en $ be an energy function such that $ \en (v) \leq \en^*_{G, w} (v) $ for all $ v \in G $.
Define the modified game $ (G, w') $ with the weight function $ w' $ by $ w'(u, v) =  w(u, v) + \en (u) - \en (v) $ for every edge $ (u, v) \in G $.
Then the penalty does not change, i.e., $ P_{G, w} (v) = P_{G, w'} (v) $ for every node $ v \in G $, and 
$ \en^*_{G, w} (v) = \en (v) + \en^*_{G, w'} (v) $ for every node $ v \in G $.
\end{lemma}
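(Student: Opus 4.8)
The plan is to establish the two claims — invariance of the penalty and the energy shift formula $\en^*_{G, w}(v) = \en(v) + \en^*_{G, w'}(v)$ — by reducing everything to the level of fixed strategy pairs, where $G(\sigma, \tau)$ has out-degree one everywhere and both quantities are determined by simple paths and the unique reachable cycle. First I would record the elementary observation that a potential transformation does not alter cycle sums: for any cycle $C$, $w'(C) = \sum_{(u,v)\in C}(w(u,v)+\en(u)-\en(v)) = w(C)$, since the telescoping $\en$-terms cancel around the cycle; moreover $\en$ is finite on any node lying on a cycle reachable from a node with finite minimal energy, but for cycles reachable via a strategy pair one must be slightly careful when $\en$ takes the value $\infty$ — I would note that if $\en(v) = \infty$ then $\en^*_{G,w}(v) = \infty$ by the hypothesis $\en(v) \le \en^*_{G,w}(v)$, and handle that case separately so that all the telescoping arguments take place among finite values. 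Similarly, for a path $P$ from $x$ to $y$, $w'(P) = w(P) + \en(x) - \en(y)$.

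For the penalty invariance I would argue directly from the definition in Section~\ref{sec:prelim}: a node $s$ has penalty at least $D$ iff Bob has an optimal strategy $\tau^*$ such that for every $\sigma$, the unique cycle $C$ reachable from $s$ in $G(\sigma, \tau^*)$ satisfies ``$w(C) < 0 \Rightarrow$ average weight of $C$ is $\le -D$.'' Because cycle sums and cycle lengths are unchanged under the potential transformation, the condition ``$w(C)<0$ and average $\le -D$'' holds in $(G,w)$ exactly when it holds in $(G,w')$. The only subtlety is that ``optimal strategy of Bob'' is defined relative to $\en^*$, so I must first know that the set of Bob-optimal strategies is the same in $(G,w)$ and $(G,w')$ — but that follows once the energy-shift formula is established (optimality is characterized by the shifted energies, and the shift is by the same additive constant $\en(s)$ for all of Bob's strategies from a fixed start), or alternatively one can phrase the penalty purely in terms of mean-payoff values, which are classically invariant under potential transformations. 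I would take the latter route to keep the penalty argument independent of the second claim.

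For the energy-shift formula, at the level of a fixed pair $(\sigma,\tau)$: if the reachable cycle $C$ has $w(C) < 0$ then $w'(C) = w(C) < 0$ too, so both $\en^*_{G(\sigma,\tau),w}(v)$ and $\en^*_{G(\sigma,\tau),w'}(v)$ are $\infty$ and the formula reads $\infty = \en(v) + \infty$, which is fine (using that $\en(v)$ is finite here because, if it were not, there would be nothing to prove — but actually I should instead invoke the global hypothesis more carefully; I will simply observe $\en(v) + \infty = \infty$ regardless). If $w(C) \ge 0$, then $\en^*_{G(\sigma,\tau),w}(v) = \max\{0, -\min_P w(P)\}$ over simple paths $P$ from $v$, and since $w'(P) = w(P) + \en(v) - \en(\text{endpoint})$, the minimizing path shifts its value by a term that is \emph{not} constant — here is where the precondition $\en(v) \le \en^*_{G,w}(v)$ must be used. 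The clean way: I would not try to prove the shift formula strategy-by-strategy, but instead first prove that $w'$ is ``energy-nonnegative'' in the appropriate sense — i.e. that $\en'(u) := \en^*_{G,w}(u) - \en(u)$ is a nonnegative energy function satisfying conditions \ref{item:one}–\ref{item:two} of Lemma~\ref{lem:minimal characterization} for $(G,w')$ — and that it is minimal. Nonnegativity is exactly the precondition; the inequalities transform correctly because $\en^*_{G,w}(u) + w(u,v) \ge \en^*_{G,w}(v)$ becomes $\en'(u) + w'(u,v) \ge \en'(v)$ after substituting $w' = w + \en(u) - \en(v)$; minimality follows because any $\en''$ satisfying the conditions for $(G,w')$ yields $\en'' + \en$ satisfying them for $(G,w)$, forcing $\en'' + \en \ge \en^*_{G,w}$, i.e. $\en'' \ge \en'$. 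Then Lemma~\ref{lem:minimal characterization} (which uniquely characterizes the minimal energy) gives $\en^*_{G,w'} = \en'$, which is the desired formula. I expect the main obstacle to be the bookkeeping around $\infty$: one must check that the conditions of Lemma~\ref{lem:minimal characterization} and the arithmetic $\en(u) + w'(u,v) \ge \en(v)$ all behave correctly when some of the $\en^*$ values (hence some $\en'$ values) are $\infty$ while $\en$ is finite — here the footnote's remark that conditions \ref{item:one}–\ref{item:two} are trivially satisfied at nodes with infinite energy, plus the convention $\infty - \text{finite} = \infty$, should resolve it cleanly.
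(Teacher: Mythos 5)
Your proposal is correct and follows essentially the same strategy as the paper: a telescoping cancellation of the potentials around every cycle gives the penalty claim, and the energy-shift formula is obtained by transferring a characterization of the minimal energy across the potential transformation. The one substantive difference is in the second part. The paper defines $ f = \en + \en^*_{G, w'} $ and verifies that $ f $ satisfies the fixpoint characterization of Lemma~\ref{lem:characterization_of_minimal_energy} in $ (G, w) $, whereas you define $ \en' = \en^*_{G, w} - \en $ and verify the three conditions of Lemma~\ref{lem:minimal characterization} in $ (G, w') $. Your direction has the advantage of making completely explicit where the hypothesis $ \en(v) \leq \en^*_{G, w}(v) $ enters: it is precisely the nonnegativity of $ \en' $, i.e.\ the requirement that $ \en' $ be an energy function at all, while in the paper's manipulation of the $ \max(\cdot, 0) $ terms the hypothesis is used only implicitly when the additive constant is pulled inside the inner maximum. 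You also flag a subtlety in the penalty claim that the paper passes over entirely: since the definition of penalty quantifies over \emph{optimal} strategies of Bob, cycle-average invariance alone does not formally close the argument, and one must additionally argue that the relevant optimal strategies (or the attainable supremum of $ D $) transfer between $ (G, w) $ and $ (G, w') $. Your suggested repair via the mean-payoff formulation is the reasonable one; just note that your parenthetical claim that the energies shift ``by the same additive constant $ \en(s) $'' is false strategy-pair by strategy-pair (the shift $ \en(s) - \en(y) $ depends on the endpoint $ y $ of the minimizing simple path), so that route needs the mean-payoff phrasing or an aggregate argument rather than a pairwise one.
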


\begin{proof}
We first show that the penalty does not change from $ w $ to $ w' $, i.e., $ P_{G, w} = P_{G, w'} $.
For this purpose we will show that every cycle in $ G $ has the same sum of edge weights in $ (G, w) $ and in $ (G, w') $ which means that the average weights are the same.
By the definition of the penalty this implies that $ P_{G, w} (v) = P_{G, w'} (v) $ for every node $ v \in G $ as desired.
Let $ C $ be a cycle of $ G $ consisting of the nodes $ v_1, \ldots, v_k $.
We simply plug in the definition of $ w' $ to check that our claim is true:
\begin{align*}
\sum_{(u, v) \in C} w'(u, v) &= w'(v_k, v_1) + \sum_{i=1}^{k-1} w'(v_i, v_{i+1}) \\
&= w(v_k, v_1) + \en (v_k) - \en (v_1) + \sum_{i=1}^{k-1} \left( w(v_i, v_{i+1}) + \en (v_i) - \en (v_{i+1}) \right) \\
&= w(v_k, v_1) + \en (v_k) - \en (v_1) + \sum_{i=1}^{k-1} w(v_i, v_{i+1}) + \sum_{i=1}^{k-1} \en (v_i) - \sum_{i=2}^{k} \en (v_{i}) \\
&= w(v_k, v_1) + \sum_{i=1}^{k-1} w(v_i, v_{i+1}) + \sum_{i=1}^{k} \en (v_i) - \sum_{i=1}^{k} \en (v_{i}) \\
&= w(v_k, v_1) + \sum_{i=1}^{k-1} w(v_i, v_{i+1}) \\
&= \sum_{(u, v) \in C} w(u, v) \, .
\end{align*}

We now prove the second property.
We define the energy function $ f $ by $ f (v) = \en (v) + \en^*_{G, w'} (v) $ for every node $ u \in G $.
We use Lemma~\ref{lem:characterization_of_minimal_energy} to show that $ f $ is the minimal energy $ \en^*_{G, w} $.
We have to show that, for every node $ u \in G $, we have
\begin{align*}
f (u) &= \begin{cases}
\min_{(u, v) \in E} \max (f (v) - w(u, v), 0) & \text{if $ u \in V_A $} \\
\max_{(u, v) \in E} \max (f (v) - w(u, v), 0) & \text{if $ u \in V_B $}
\end{cases} \, .
\end{align*}
By the definition of $ f $ this is equivalent to
\begin{align*}
\en (u) + \en^*_{G, w'} (u) &= \begin{cases}
\min_{(u, v) \in E} \max (\en^*_{G, w'} (v) - w(u, v) + \en (v), 0) & \text{if $ u \in V_A $} \\
\max_{(u, v) \in E} \max (\en^*_{G, w'} (v) - w(u, v) + \en (v), 0) & \text{if $ u \in V_B $}
\end{cases} \, .
\end{align*}
Since $ \en (u) $ is a constant in the minimization and maximization terms, we get
\begin{align*}
\en^*_{G, w'} (u) &= \begin{cases}
\min_{(u, v) \in E} \max (\en^*_{G, w'} (v) - w(u, v) - \en (u) + \en (v), 0) & \text{if $ u \in V_A $} \\
\max_{(u, v) \in E} \max (\en^*_{G, w'} (v) - w(u, v) - \en (u) + \en (v), 0) & \text{if $ u \in V_B $}
\end{cases} \, .
\end{align*}
By the definition of $ w' $ this is equivalent to
\begin{align*}
\en^*_{G, w'} (u) &= \begin{cases}
\min_{(u, v) \in E} \max (\en^*_{G, w'} (v) - w'(u, v), 0) & \text{if $ u \in V_A $} \\
\max_{(u, v) \in E} \max (\en^*_{G, w'} (v) - w'(u, v), 0) & \text{if $ u \in V_B $}
\end{cases} \, .
\end{align*}
which is true by Lemma~\ref{lem:characterization_of_minimal_energy}.
\end{proof}

\subsection{Full Proof of Lemma \ref{lem:approx to exact}}\label{sec:full proof of lem:approx to exact}

\begin{algorithm}
\caption{Computing minimal energy based on approximation}
\label{alg:exact_solution_by_approximation}

\DontPrintSemicolon
\LinesNotNumbered

\SetKwFunction{ValueIteration}{ValueIteration}
\SetKwFunction{Approximate}{Approximate}
\SetKwFunction{MinimalEnergy}{MinimalEnergy}
\SetKwFor{Function}{Procedure}{}{end}

\KwIn{A weighted graph $ (G, w) $, an upper bound $ M $ on the finite minimal energy of $ (G, w) $ and a lower bound~$ D $ on the penalty of $ (G, w) $}
\KwOut{The minimal energy of $ (G, w) $}

\BlankLine
\Function{$ \MinimalEnergy{G, w, M, D} $}{

\nl 	\uIf{$ D \leq M/(2n) $}{
\nl		\uIf{$ M \leq n $}{
			 \tcp{cf. Proposition~\ref{pro:value_iteration}}
\nl			\KwRet{\ValueIteration{$G, w, \{0, \ldots, n, \infty \}$}}\;
		}
\nl		\Else{
\nl			$ c \gets \lfloor \frac{M}{2} \rfloor $\;
\nl			$ \en \gets $ \Approximate{$G, w, M, c$} \tcp*[r]{cf. Theorem~\ref{thm:rounding_procedure_gives_desired_approximation}}
			\tcp{Now solve $ (G, w') $ with weights modified by energy $ \en $}
\nl			$ w'(u, v) \gets w(u, v) + \en(u) - \en(v) $ for every edge $ (u, v) \in G $\;
\nl			$ \en' \gets $ \MinimalEnergy{$G, w', c, D$}\;
\nl			$ \en'' (v) \gets  \en (v) + \en' (v) $ for every node $ v \in G $\;
\nl			\KwRet{$ \en'' $}
		}
	}
\nl	\Else{
\nl		$ c \gets nD $\;
\nl		$ \en \gets $ \Approximate{$G, w, M, c$} \tcp*[r]{cf. Theorem~\ref{thm:rounding_procedure_gives_desired_approximation}}
		\tcp{Now solve $ (G, w') $ with weights modified by energy $ \en $}
\nl		$ w'(u, v) \gets w(u, v) + \en(u) - \en(v) $ for every edge $ (u, v) \in G $\;
\nl		$ \en' \gets $ \MinimalEnergy{$G, w', c, D$}\;
\nl		$ \en'' (v) \gets  \en (v) + \en' (v) $ for every node $ v \in G $\;
\nl		\KwRet{$ \en'' $}
	}
}
\end{algorithm}

Our algorithm is called \MinimalEnergy and is described in Algorithm~\ref{alg:exact_solution_by_approximation}.
We call the algorithm provided by Theorem~\ref{thm:rounding_procedure_gives_desired_approximation}, which computes an approximation of the minimal energy, \Approximate and we call the value iteration algorithm provided by Proposition~\ref{pro:value_iteration}, which computes the minimal energy exactly, \ValueIteration.

We first consider the case $ D \geq M/(2n) $.
As pointed out in the proof idea, the correctness of \MinimalEnergy in this case follows from Theorem~\ref{thm:rounding_procedure_gives_desired_approximation} and Lemma~\ref{lem:main_lemma_for_correctness}.
We therefore only argue about the running time.
If $ M \leq n $, we know that $ n $ is an upper bound on the finite minimal energy, and we can use the value iteration algorithm of Proposition~\ref{pro:value_iteration} with the admissible list $ \{0, \ldots, n, \infty \} $, as explained in Section~\ref{sec:value_iteration_algorithm}.
The running time in this case is $ O(m n) $.
The algorithm \Approximate runs in time $ O (m M n/c ) $ for the upper bound $ M $ on the finite minimal energy.
For $ c = \lfloor M/2 \rfloor $ the factor $ M $ cancels itself and therefore the running time of \Approximate is $ O (mn) $.
We recurse with the upper bound $ M' = c = \lfloor M/2 \rfloor $ on the finite minimal energy and the unchanged lower bound $ D $ on the penalty.
It is still the case that $ D \geq M'/(2n) $.
Thus, the running time of the procedure $ \MinimalEnergy $ is given by the following recurrence:
\begin{align*}
T \left(n, m, M \right) = \begin{cases}
O \left(m n \right) & \text{if $ M \leq n $} \\
T \left(n, m, \frac{M}{2} \right) + O \left(m n \right) & \text{otherwise}
\end{cases} \, .
\end{align*}
Since the initial value of $ M $ is halved with every iteration of the algorithm until $ M \leq n $, the algorithm runs for at most $ \log{M} - \log{n} = \log{(M/n)} $ many iterations.
Every iteration needs time $ O (mn) $ and therefore the total running time is $ O(m n \cdot \log{(M/n)}) $.

We now consider the case $ D < M/(2n) $ in which we perform one step of \Approximate to reduce $ M $ to $ M' $ such that $ D \geq M' / 2n $.
We first compute an approximation $ e $ of the minimal energy by calling \Approximate with the approximation error $ c = n D $.
Then we set $ w'(u, v) = w(u, v) + e(u) - e(v) $.
We can compute the approximation of the minimal energy in time $ O (m M /D) $.
After that we can recurse on $ (G, w') $ with the new upper bound $ M' = c = nD $ on the finite minimal energy to compute $ \en'(v) = \en_{G, w'} (v) $ for every node $ v $.
By Lemma~\ref{lem:main_lemma_for_correctness} the algorithm afterwards correctly returns the minimal energy $ \en''(v) = \en (v) + \en' (v) $ for every node $ v $.
The new upper bound $ M' $ fulfills the following inequality:
\begin{align*}
\frac{M'}{2n} = \frac{nD}{2n} = \frac{D}{2} < D \, .
\end{align*}
Since the penalty does not change, i.e., $ P(G, w) = P(G, w') $ by Lemma~\ref{lem:main_lemma_for_correctness}, our previous running time analysis of the case $ D \geq M'/(2n) $ now applies.
The remaining time needed to compute the minimal energy of $ (G, w') $ therefore is $ O(m n \cdot \log{(M'/n)}) = O(m n \cdot \log{D}) $.
Thus, the total running time in this case is $ O(m n \cdot \log{D} + m \cdot M/D) $.
This completes the proof of Lemma~\ref{lem:approx to exact}.

\section{Hardness on Complete Bipartite Graphs} 
\label{sec:discussion}

We show in this section that energy games on complete bipartite graphs are polynomial-time equivalent to the general case. This implies that energy games on graphs of bounded clique-width~\cite{CourcelleO00} and strongly ergodic\footnote{There are many notions of ergodicity \cite{Lebedev05,BorosEFGMM11}. Strong ergodicity is the strongest one as it implies other ergodicity conditions.} graphs~\cite{Lebedev05} are as hard as the general case.\footnote{We formally define the notion of \emph{clique-width} and the class of \emph{strongly ergodic} graphs in Section~\ref{subsec:clique width ergodicity}.}
Our result indicates that structural properties of the input graphs might not yield efficiently solvable subclasses.
This is in contrast to the fact that parity games (a natural subclass of energy and mean-payoff games) can be solved in polynomial time in these cases~\cite{Obdrzalek07,Lebedev05}.

Our main hardness result is for the \emph{decision} problem of energy games which will imply the hardness of the {\em value} problem as well as of mean-payoff games. The value problem is what we have discussed so far. 
The \emph{decision problem} of energy games for a graph $ (G, w) $ and a node $ s $ asks whether the minimal energy $e_{G, w}^*(s)$ is finite.
If $e_{G, w}^*(s)$ is finite, we say that Alice {\em wins at $s$}; otherwise, we say that Alice loses (or equivalently Bob wins).
The decision problem and the value problem of energy games are polynomial-time equivalent~\cite{Bouyer08}.\footnote{The reduction of Bouyer et al.~\cite{Bouyer08} adds nodes and edges such that, after every edge that is taken, Bob has the possibility to return to the starting node $ s $ by an edge of weight $ t $, for a finite $ t \geq 0 $.
In this way we have $ \en^*_{G, w} (s) \leq t $ if and only if Alice wins at $ s $.
It is now possible to find $ \en^*_{G, w} (s) $ by binary search because the maximum finite energy is limited to $ nW $.}

We show that the decision problem on strongly ergodic graphs or graphs of bounded clique-width is just as hard as the general decision problem on arbitrary graphs. For this purpose we will work with a special type of {\em complete bipartite} graphs which are strongly ergodic and have bounded clique-width (see Definition~\ref{def:complete bipartite}). We note the following fact, proved in Section~\ref{subsec:clique width ergodicity}. 
\begin{lemma}\label{lem:clique width and strongly ergodic}
Every complete bipartite graph has clique-width two and is strongly ergodic.
\end{lemma}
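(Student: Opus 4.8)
The plan is to verify the two claimed properties of a complete bipartite graph $ (G, w) $ separately: that its clique-width is at most two, and that it is strongly ergodic. I expect the clique-width bound to be essentially standard graph-theoretic bookkeeping, while the strongly-ergodic part will depend on whatever precise definition the paper adopts in Section~\ref{subsec:clique width ergodicity} (which is not in this excerpt), so the real work is matching that definition.

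For the clique-width bound, I would recall that clique-width two is equivalent (for undirected graphs) to being a cograph, and that complete bipartite graphs are cographs — indeed $ K_{p,q} $ is the join of an independent set on $ p $ vertices with an independent set on $ q $ vertices, and joins preserve cograph-ness. More constructively, and in a way that adapts to the directed/labelled setting the paper presumably uses, I would exhibit an explicit clique-width expression using two labels: build the left side $ L $ as a disjoint union of $ |L| $ singletons all carrying label $ 1 $, build the right side $ R $ similarly with all singletons carrying label $ 2 $, take the disjoint union of the two, then apply a single $ \eta_{1,2} $ operation to insert all edges between label-$1$ vertices and label-$2$ vertices (and, if the definition here treats the bipartite graph as having edges in both directions or as a game graph, also $ \eta_{2,1} $, still using only labels $ 1,2 $), and finally relabel everything to a single label. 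This uses only two labels, so $ \operatorname{cwd}(G) \le 2 $; and since $ G $ has at least one edge it is not edgeless, so its clique-width is exactly two. I would cite \cite{CourcelleO00} for the definition of clique-width and the cograph characterization.

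For strong ergodicity, I would first state the definition as given in Section~\ref{subsec:clique width ergodicity} and then observe that a complete bipartite game graph satisfies it more or less by construction — the expected content of Definition~\ref{def:complete bipartite} is that Alice owns all of one side and Bob all of the other (or some fixed ownership pattern) with \emph{all} edges between the two sides present. The point is that from \emph{every} vertex, after one move the token is on the opposite side, and from there either player can route it to any vertex of the original side; so regardless of the players' strategies the token can reach every vertex, and in particular every play eventually enters a cycle that visits both sides. Strong ergodicity typically asks that the optimal mean-payoff value be the same from every starting vertex (equivalently, the value is independent of the start), and this follows because from any two start vertices the players can, within a bounded number of moves, funnel the token into identical behavior: the tail of the play — hence its mean-payoff — does not depend on where it started. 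I would make this precise by showing that for any pair of start vertices $ s, s' $, any strategy from $ s $ can be mimicked from $ s' $ after an initial transient of length $ O(1) $, which changes the mean-payoff by $ 0 $ in the limit.

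The main obstacle is purely one of definitional alignment: the excerpt defers both ``strongly ergodic'' and ``complete bipartite'' (as a game graph, with its ownership partition) to a later subsection, so the proof must be written against those definitions, and the ergodicity argument is only as clean as the completeness of the bipartite structure makes it — if the definition of the special complete bipartite graphs in Definition~\ref{def:complete bipartite} restricts ownership or edge weights in some way, I would need to check that the reachability/transient argument still goes through. The clique-width half, by contrast, is robust and short. So I would write the clique-width part in full and keep the ergodicity part as a direct unfolding of the (forthcoming) definition, emphasizing the "every vertex reaches every vertex in two steps" property as the engine.
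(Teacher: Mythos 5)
Your clique-width half is fine and in fact more detailed than the paper, which simply asserts that the two-label construction is ``easy to see''; your explicit expression (label~$1$ for Alice's side, label~$2$ for Bob's side, disjoint union, then $\eta_{1,2}$ and $\eta_{2,1}$) is exactly the intended argument.

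The strong-ergodicity half, however, proves the wrong statement. You guess that ``strongly ergodic'' means the mean-payoff value is independent of the starting vertex, and you argue this via a mimicking/transient argument. But the paper uses Lebedev's \emph{structural} definition (Definition~\ref{def:ergodic}): a graph is ergodic if it admits no non-trivial \emph{ergodic partition} $(S_A, S_B)$ --- a partition in which Alice can keep the token inside $S_A$ and Bob cannot leave it, and symmetrically for $S_B$ --- and it is \emph{strongly} ergodic if every induced subgraph in which all out-degrees remain at least one is ergodic. Value-independence is a consequence of this (the paper notes strong ergodicity implies the other ergodicity conditions), but it is not what the lemma asserts, so your argument does not establish it. Two concrete pieces are missing. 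First, the reduction from ``strongly ergodic'' to ``ergodic'': you never discuss induced subgraphs, whereas the paper's one-line key observation is that every induced subgraph of a complete bipartite graph (with all out-degrees positive) is again complete bipartite, so it suffices to rule out a non-trivial ergodic partition of a complete bipartite graph itself. Second, the actual contradiction: given a non-trivial partition $(S_A, S_B)$, one does a short case analysis --- if $S_A$ contains a Bob node $u$ and $S_B$ contains an Alice node $v$, completeness gives the edge $(u,v)$, letting Bob escape $S_A$; if $S_B$ contains no Alice node, it consists only of Bob nodes, all of whose edges leave $S_B$ by bipartiteness, so Bob cannot stay inside; and symmetrically for $S_A$. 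Your intuition that ``completeness prevents anyone from being trapped'' is the right engine, but as written the proof neither targets the correct definition nor carries out this case analysis, so the second half of the lemma is not established.
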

Our main result is a polynomial-time reduction from the decision problem on arbitrary graphs to the decision problem on complete bipartite graphs. 
\begin{theorem}\label{thm:equivalence_decision_complete_bipartite}\label{thm:complete bipartite}
The decision problem of energy games on complete bipartite graphs is polynomial-time equivalent to the decision problem of energy games on general graphs.
\end{theorem}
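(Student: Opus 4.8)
The plan is to prove the two directions of the equivalence separately. One direction is immediate: a complete bipartite graph is a special general graph, so any algorithm for the decision problem on general graphs also handles complete bipartite ones. For the converse I would give a polynomial-time reduction that, from an arbitrary instance $(G, w, s)$ of the decision problem — deciding whether $\en^*_{G, w}(s) < \infty$ — produces an equivalent instance $(G', w', s')$ with $G'$ complete bipartite. Throughout I would use the characterization recalled in Section~\ref{sec:definition_energy}: $\en^*_{G, w}(s) < \infty$ iff Alice has a positional strategy such that, against every positional strategy of Bob, the unique cycle reachable from $s$ is non-negative. The reduction proceeds in three stages.

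The first stage makes the instance \emph{sign-ergodic}, i.e.\ transforms $(G, w, s)$ into $(G_1, w_1, s)$ having the same answer at $s$ but with \emph{every} node of $G_1$ sharing that answer. I would replace each edge $(u, v)$ of $G$ by a detour $u \to a_{uv} \to b_{uv} \to v$, where $a_{uv}$ is a fresh Alice node with an extra ``reset'' edge $a_{uv} \to s$ of weight $-K$, $b_{uv}$ is a fresh Bob node with an extra reset edge $b_{uv} \to s$ of weight $+K$, the edge $u \to a_{uv}$ inherits the weight $w(u, v)$, and $a_{uv} \to b_{uv}$ and $b_{uv} \to v$ get weight $0$; here $K$ is a polynomially bounded integer exceeding the total absolute weight of every simple path that uses no reset edge. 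The point is that a reset edge traversed only finitely often lies on the tail of the play and does not affect the sign of the reachable cycle, whereas a reset edge on the reachable cycle forces its sign ($-K$ negative, $+K$ positive); moreover, whichever player wins at $s$ can force the play to reach $s$ from any node within three steps. From this one verifies that (i) $\en^*_{G_1, w_1}(s) < \infty$ iff $\en^*_{G, w}(s) < \infty$, since each player can simulate her/his winning positional strategy of $G$ and never reset, while a reset can only harm the player who takes it, and (ii) the player winning at $s$ in $G_1$ can win from every node by forcing the play back to $s$ and then playing optimally there — so $G_1$ is sign-ergodic.

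The second stage makes the graph bipartite: I would take the double cover $G_1^{\mathrm{bip}}$ with vertex set $V(G_1) \times \{0, 1\}$, edges $(u, 0) \to (v, 1)$ and $(u, 1) \to (v, 0)$ of weight $w_1(u, v)$ for each edge $(u, v)$ of $G_1$, and ownership inherited. A simple cycle of $G_1^{\mathrm{bip}}$ projects to a closed walk of $G_1$ of equal weight that decomposes into simple cycles of $G_1$, and conversely a negative cycle of $G_1$ traversed once or twice yields a negative cycle of $G_1^{\mathrm{bip}}$; hence $G_1^{\mathrm{bip}}$ — bipartite with parts $X = V(G_1)\times\{0\}$ and $Y = V(G_1)\times\{1\}$ — is still sign-ergodic, and its answer at $(s, 0)$ equals that of $G$ at $s$. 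The third stage completes the graph: I would add every missing $X$–$Y$ edge in both directions, giving each new edge out of an Alice node weight $-K'$ and each new edge out of a Bob node weight $+K'$, where $K'$ is polynomially bounded and exceeds the total absolute weight of every simple path of $G_1^{\mathrm{bip}}$, and set $s' = (s, 0)$. Then $G'$ is complete bipartite and of polynomial size. If Alice wins at $s$ in $G$ then she wins everywhere in $G_1^{\mathrm{bip}}$, so by Lemma~\ref{lem:minimal characterization} she has a single positional strategy $\sigma^*$ under which every cycle of $G_1^{\mathrm{bip}}$ is non-negative; playing $\sigma^*$ in $G'$ she never uses a new edge, and any reachable cycle either lies in $G_1^{\mathrm{bip}}$ (hence non-negative) or contains a $+K'$ edge of Bob (hence positive). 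Dually, if Alice loses at $s$ in $G$ she loses everywhere in $G_1^{\mathrm{bip}}$, Bob has a positional $\tau^*$ under which every cycle is negative, and against $\tau^*$ in $G'$ every reachable cycle is negative (it lies in $G_1^{\mathrm{bip}}$ or contains a $-K'$ edge of Alice). Hence $\en^*_{G', w'}(s') < \infty$ iff $\en^*_{G, w}(s) < \infty$.

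The step I expect to be the crux is the completion stage — arguing that the added edges help neither player. Merely ``penalizing illegal edges with a huge weight'' is not enough for the \emph{decision} problem, because a single very negative edge is a one-off cost that does not change the sign of the reachable cycle, so a priori Alice could ``teleport'' to a more favourable node and still win. The role of stage one is exactly to neutralize this: once every node of the instance carries the same answer, a teleport lands on an equivalent node and gains nothing, whereas a teleport repeated infinitely often already forces a $-K'$ edge onto the reachable cycle. The remaining ingredients — the existence of uniform optimal positional strategies for each player (standard for energy / mean-payoff games) and the cycle-decomposition argument showing that the double cover preserves the answer — are routine. Finally, since complete bipartite graphs have bounded clique-width and are strongly ergodic (Lemma~\ref{lem:clique width and strongly ergodic}), this reduction simultaneously yields hardness on those classes.
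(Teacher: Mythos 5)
Your overall architecture matches the paper's: a reset-gadget reduction to the ``one player wins everywhere'' case (your stage one is essentially the paper's construction, including the $-nW/{+}nW$ reset edges and the observation that a reset used finitely often lands on the tail of the play), followed by a bipartition step and then a completion step with large-magnitude edges whose sign depends on the owner of the tail node (your stage three is essentially the paper's second reduction, and your single threshold $K'$ works because in each direction the relevant player sticks to old edges).

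The genuine gap is stage two. ``Complete bipartite'' in this theorem means bipartite \emph{with respect to the ownership partition} $(V_A, V_B)$ (Definition~\ref{def:complete bipartite}): no Alice--Alice and no Bob--Bob edges, and every Alice node adjacent to every Bob node. Your double cover makes the underlying graph bipartite with parts $X = V(G_1)\times\{0\}$ and $Y = V(G_1)\times\{1\}$, but since ownership is inherited, each part contains a mix of Alice and Bob nodes; any original edge $(u,v)$ with $u,v\in V_A$ yields an Alice-to-Alice edge $(u,0)\to(v,1)$, and your completion step then adds many more such edges. The resulting graph is therefore not complete bipartite in the required sense; in particular the proof that complete bipartite graphs are strongly ergodic (which drives the hardness consequence) relies on completeness between Bob's nodes and Alice's nodes and does not apply to your graph. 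The fix is the paper's subdivision: replace every Alice--Alice edge $(u,v)$ by a length-two path through a fresh Bob node (weights $w(u,v)$ then $0$) and every Bob--Bob edge by a length-two path through a fresh Alice node; this preserves all minimal energies, keeps the win-everywhere property, and makes the bipartition coincide with $(V_A, V_B)$, after which your completion argument goes through verbatim. A secondary point: your stage-one claim that the winning player ``forces the play back to $s$ and then plays optimally'' needs the care the paper takes in defining $\sigma'$ --- Alice must reset only at nodes where her $s$-winning strategy loses, since otherwise the play from $s$ could revisit a resetting node and put the $-K$ edge on the reachable cycle.
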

This shows that if we can solve the decision problem of energy games on very special graphs that have clique-width two and are strongly ergodic, then we can solve this problem on general graphs too. 

The relationship in Theorem~\ref{thm:equivalence_decision_complete_bipartite} also carries over to the value problem and to mean-payoff games. 
\begin{corollary}\label{cor:complete bipartite}
The value problem of energy games on complete bipartite graphs is polynomial-time equivalent to the value problem of energy games on general graphs.
Moreover, the mean-payoff game problem on complete bipartite graphs is polynomial-time equivalent to the mean-payoff game problem on general graphs.
\end{corollary}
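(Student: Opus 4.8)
The plan is to chain Theorem~\ref{thm:equivalence_decision_complete_bipartite} with two facts that are already available: that the decision and value versions of energy games are polynomial-time equivalent~\cite{Bouyer08}, and that the energy-game decision problem on a graph $(G, w)$ with source $s$ --- whether $\en^*_{G, w}(s)$ is finite --- coincides with the decision problem of the mean-payoff game on the \emph{same} graph with threshold $0$, because $\en^*_{G, w}(s) < \infty$ holds exactly when the mean-payoff value at $s$ (with Alice maximizing) is at least $0$ (cf.\ the discussion of Gurvich et al.~\cite{GurvichKK88}). In both parts of the corollary one direction of the equivalence is immediate, since complete bipartite graphs form a subclass of all graphs; the work is to reduce the general problems \emph{to} the complete bipartite ones, and this is done by composing the above facts with Theorem~\ref{thm:equivalence_decision_complete_bipartite}.

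For the value problem of energy games I would argue as follows. Given a general instance asking for $\en^*_{G, w}(s) \in \{0, \ldots, nW\} \cup \{\infty\}$, first reduce it to the decision problem on general graphs by the gadget of Bouyer et al.~\cite{Bouyer08}: letting Bob return to $s$ along a weight-$t$ edge after every move yields a graph in which Alice wins iff $\en^*_{G, w}(s) \le t$, so $O(\log(nW))$ such decision queries, via binary search over $t$, determine $\en^*_{G, w}(s)$ exactly (reporting $\infty$ if even $t = nW$ fails). Each of these decision instances lives on a general graph, hence reduces in polynomial time to a decision instance on a complete bipartite graph by Theorem~\ref{thm:equivalence_decision_complete_bipartite}, and such a decision instance is answered by a single call to a value oracle for complete bipartite graphs followed by a finiteness test. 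Composing, the value problem on general graphs reduces in polynomial time (a Turing reduction, the same notion already used for the value $\equiv$ decision equivalence) to the value problem on complete bipartite graphs; the reverse reduction is trivial.

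For mean-payoff games, the key observation is that the passage between energy games and mean-payoff games can be kept on a \emph{fixed} graph. A mean-payoff decision query ``is the value at $s$ at least $\nu$?'' is turned into ``is the value at $s$ at least $0$?'' by multiplying every weight by the denominator of $\nu$ and subtracting its numerator --- this changes the weights but not the graph --- and the latter query is precisely the energy-game decision problem on that graph. Thus a mean-payoff decision instance on a general graph becomes an energy decision instance on a general graph, which Theorem~\ref{thm:equivalence_decision_complete_bipartite} turns into an energy decision instance on some complete bipartite graph $H$, which in turn \emph{is} a mean-payoff decision instance (threshold $0$) on the complete bipartite graph $H$: a single many-one reduction. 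Since the value and decision versions of mean-payoff games are themselves polynomial-time equivalent (the value at $s$ is a rational $p/q$ with $|p| \le nW$ and $1 \le q \le n$, recoverable from $O(\log(nW))$ decision queries), it follows that both the decision and the value versions of the mean-payoff game problem on complete bipartite graphs are polynomial-time equivalent to their counterparts on general graphs.

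I do not expect a genuine obstacle here, since Theorem~\ref{thm:equivalence_decision_complete_bipartite} already does the heavy lifting; the two points that require care are (i) performing the energy $\leftrightarrow$ mean-payoff translation on a \emph{fixed} graph (via the threshold-$0$ normalization) rather than invoking an opaque polynomial-time equivalence, so that the complete bipartite structure is preserved exactly, and (ii) being explicit that the equivalences involving the value problems are polynomial-time Turing reductions realized by binary search over a polynomially bounded set of candidate values --- which is consistent with the notion of polynomial-time equivalence already in force in the excerpt.
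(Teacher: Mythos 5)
Your proposal is correct and follows essentially the same route as the paper: both parts are proved by routing the general value problem through the general decision problem, applying Theorem~\ref{thm:equivalence_decision_complete_bipartite}, and observing that the decision problem on complete bipartite graphs is answered by a value oracle there, with the mean-payoff case handled via the identity between the threshold-$0$ mean-payoff decision problem and the energy decision problem on the same graph. Your write-up merely makes explicit the binary-search Turing reductions and the fixed-graph threshold normalization that the paper delegates to the citations of Bouyer et al.\ and Gurvich et al.
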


The first statement of the above corollary follows from the fact that the value problem of energy games on general graphs can be reduced to the decision problem \cite{Bouyer08}, and the decision problem on complete bipartite graph is a special case of the value problem on complete bipartite graphs (because solving the value problem also answers the decision problem).
For the second statement observe that the decision problem of energy games and the decision problem of mean-payoff games are exactly the same problem~\cite{Bouyer08} because the minimal energy at a node $ v $ is finite if and only if the mean-payoff value at $ v $ is non-negative.
Therefore the statement follows from the fact that the value problem of mean-payoff games can be reduced to the decision problem of mean-payoff games~\cite{GurvichKK88}, and the decision problem of mean-payoff games on complete bipartite graphs is a special case of the value problem of mean-payoff games on complete bipartite graphs.

The rest of this section is devoted to proving Theorem~\ref{thm:complete bipartite}. We first give a proof idea in Section~\ref{subsec:proof idea complete bipartite}. In Section~\ref{subsec:clique width ergodicity}, we formally define the notion of complete bipartite graphs in the context of energy games and prove Lemma~\ref{lem:clique width and strongly ergodic}. In Section~\ref{subsec:reduce to win everywhere} and \ref{subsec:reduce to complete bipartite}, we show two parts of our reduction. In the first part (Section~\ref{subsec:reduce to win everywhere}), we reduce from the general decision problem of energy games to the problem where it is {\em promised} that one player wins everywhere, i.e., either the minimal energy function is finite at all nodes (Alice wins) or infinite at all nodes (Bob wins). Note that an input graph of this promised problem is still a general graph. In the second part (Section~\ref{subsec:reduce to complete bipartite}), we reduce from this win-everywhere problem on general graphs to the same problem on {\em complete bipartite} graphs.

\subsection{Proof Ideas of Theorem~\ref{thm:complete bipartite}}\label{subsec:proof idea complete bipartite}
The main idea of proving Theorem~\ref{thm:complete bipartite} is to add ``useless'' edges to the input graph to make the graph complete bipartite while the answer to the energy game problem remains the same. To illustrate this point, consider any input graph $(G, w)$ and a node $s$. For each node $u$ belonging to Alice, we add an edge $(u, v)$ with weight $-\infty$, for all nodes $v$ belonging to Bob. Let $(G', w')$ be the new graph.\footnote{Readers that are familiar with parity games might wonder why the same idea does not work for parity games. In parity games every node has a priority. This corresponds to the case where all outgoing edges of a node have the same weight. Under this restriction we are not allowed to add edges of weight $-\infty$ wherever we want to.}

Observe that if Alice wins in $(G, w)$, i.e. $\en^*_{G, w}(s)<\infty$, then she can still play the same strategy in $(G', w')$ so that she wins in $(G', w')$, i.e. $\en^*_{G', w'}(s)<\infty$. 
On the other hand, if Alice loses in $(G, w)$, i.e. $\en^*_{G, w}(s)=\infty$, then the only way she can win in $(G', w')$ is to use some edges that are not in $(G, w)$. These edges, however, have weight $-\infty$. So, the minimum energy that Alice requires remains $\infty$ even when she use the new edges. Thus, Alice also loses in $(G', w')$, i.e.  $\en^*_{G', w'}(s)=\infty$. 

The actual proof of Theorem~\ref{thm:complete bipartite} is based on this idea but needs a bit more work. This is because we cannot add an edge of weight $-\infty$. We instead add an edge of weight $-X$, for large enough $X$.
But this does not solve the whole problem since, when $\en^*_{G, w}(s)=\infty$, Alice can use this edge to ``escape'' to some node $v$ such that $\en^*_{G, w}(s)<\infty$. This will make $\en^*_{G', w'}(s)<\infty$. To get around this, we first reduce the problem on $(G, W)$ to another graph $(G'', w'')$ where Alice either wins everywhere or loses everywhere. This makes the escaping impossible. We do this in Section~\ref{subsec:reduce to win everywhere}. After we have reduced to the case where one player wins everywhere, we can add edges as above. We also have to add edges from Bob's nodes to Alice's nodes. We assign to these edges a large \emph{positive} weight.

\subsection{Properties of Complete Bipartite Graphs}\label{subsec:clique width ergodicity}

In the following we define what we mean by the class of complete bipartite graphs and show that these graphs are strongly ergodic and have clique-width two.
Later, we will show that we can reduce energy games on arbitrary graphs to energy games on complete bipartite graphs.

\begin{definition}\label{def:complete bipartite}
A \emph{complete bipartite graph} is a graph $ G = (V, E) $ fulfilling the following two conditions:
\begin{itemize}
\item \emph{(bipartite)} There is no edge $ (u, v) $ from a node $ u \in V_A $ of Alice to a node $ v \in V_A $ of Alice and there is no edge $ (u, v) $ from a node $ u \in V_B $ of Bob to a node $ v \in V_B $ of Bob.
\item \emph{(complete)} For every node $ u \in V_A $ of Alice and every node $ v \in V_B $ there is an edge $ (u, v) \in E $ and for every node $ u \in V_B $ of Bob and every node $ v \in V_A $ of Bob there is an edge $ (u, v) $.
\end{itemize}
\end{definition}
Note that the number of nodes of Alice and Bob is not required to be equal to fit this definition.
We claim that every complete bipartite graph has clique-width two and is strongly ergodic.

The notion of clique-width was introduced by Courcelle and Olariu~\cite{CourcelleO00}.
We state the definition of clique-width using different notation.
\begin{definition}
The \emph{clique-width} of a graph is the minimum number of labels needed to construct $ G $ by means of the following four operations.
\begin{enumerate}
\item Creation of a new node with label $ i $
\item Disjoint union of two labeled graphs
\item Adding an edge $ (u, v) $ for every vertex $ u $ with label $ i $ and every vertex $ v $ with label $ j $
\item Renaming label $ i $ to label $ j $
\end{enumerate}
\end{definition}
It is easy to see that every complete bipartite graph has clique-width~2.
Note that $ 2 $ is the smallest clique-width possible for a graph with more than one node.
Furthermore, every graph that has bounded tree-width also has bounded clique-width~\cite{CourcelleO00}.
The concept of tree-width is applied to directed graphs by viewing every edge as an undirected edge.
Remember that parity games, which can be reduced to energy games in polynomial-time, can be solved in polynomial time on graphs of bounded clique-width~\cite{Obdrzalek07}.

We now show that every complete bipartite graph is strongly ergodic.
\begin{definition}[Ergodicity]\label{def:ergodic}\footnote{We use Lebedev's definitions~\cite{Lebedev05}.}
An \emph{ergodic partition} is a pair $ (S_A, S_B) $ of sets of nodes such that $ S_A $ and $ S_B $ are a partition of the nodes satisfying the following conditions:
\begin{enumerate}
\item For every node $ u $ in $ S_A \cap V_A $ there is a node $ v \in S_A $ such that $ (u, v) \in E $; i.e., Alice can always keep the car inside $S_A$ if she wants to.
\item There is no edge $ (u, v) $ such that $ u \in S_A \cap V_B $ and $ v \in S_B $; i.e., Bob cannot move the car out of $S_A$.
\item For every node $ u $ in $ S_B \cap V_B $ there is a node $ v \in S_B $ such that $ (u, v) \in E $; i.e., Bob can always keep the car inside $S_B$ if he wants to.
\item There is no edge $ (u, v) $ such that $ u \in S_B \cap V_A $ and $ v \in S_A $; i.e., Alice cannot move the car out of $S_B$.
\end{enumerate}

A graph is \emph{ergodic} if it has no non-trivial ergodic partition (a partition $ (S_A, S_B) $ is trivial if $ S_A = \emptyset $ or $ S_B = \emptyset $).
A graph is \emph{strongly ergodic} if every induced subgraph such that every node has out-degree at least $ 1 $ is ergodic.
\end{definition}
\begin{lemma}
Every complete bipartite graph is strongly ergodic.
\end{lemma}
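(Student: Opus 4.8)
The plan is to fix an arbitrary induced subgraph $H$ of a complete bipartite graph $G$ (Definition~\ref{def:complete bipartite}) in which every node has out-degree at least~$1$, and to show directly that $H$ has no non-trivial ergodic partition. First I would record the structural facts that $H$ inherits: since $G$ has no edge inside $V_A$ and none inside $V_B$, the same holds in $H$; and since $H$ is induced while $G$ has an edge between every Alice-node and every Bob-node in both directions, $H$ contains \emph{every} edge between $A' := V_A \cap V(H)$ and $B' := V_B \cap V(H)$. I would also note that $A'$ and $B'$ are either both empty (in which case $H$ is vacuously ergodic) or both non-empty: if $x \in V(H)$, then its forced out-neighbour lies on the opposite side, so one non-empty side forces the other.

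Then I would assume, for contradiction, that $(S_A, S_B)$ is a non-trivial ergodic partition of $H$, so $S_A \neq \emptyset$ and $S_B \neq \emptyset$, and split on whether $S_A$ contains a node of Alice. If $S_A \cap V_A \neq \emptyset$, pick $u \in S_A \cap V_A$; ergodicity condition~1 gives an out-neighbour $v \in S_A$, which is a Bob-node (no edge inside $V_A$), hence $v \in S_A \cap V_B \subseteq B'$. Now if $S_B \cap V_A \neq \emptyset$, then completeness provides an edge from $v$ into $S_B$, contradicting condition~2; and if $S_B \cap V_A = \emptyset$, then $S_B \subseteq V_B$, so any $w \in S_B$ is a Bob-node whose out-neighbour promised by condition~3 is an Alice-node lying in $S_B$, a contradiction. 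If instead $S_A \cap V_A = \emptyset$, then every Alice-node of $H$ lies in $S_B$, i.e.\ $A' \subseteq S_B$ and $S_A \subseteq V_B$; since $S_A$ is non-empty it contains some $u \in B'$, and since $A'$ is non-empty, completeness yields an edge from $u \in S_A \cap V_B$ to a node of $A' \subseteq S_B$, again contradicting condition~2. Hence $H$ has no non-trivial ergodic partition, so $H$ is ergodic; as $H$ was arbitrary, $G$ is strongly ergodic.

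I do not expect a genuine difficulty here: the argument is a short case analysis driven entirely by completeness of the bipartition. The only thing requiring care is the bookkeeping---making sure the right one of the four conditions of Definition~\ref{def:ergodic} is invoked in each branch, and disposing of the degenerate cases (empty $H$, or $A'$ or $B'$ empty) at the outset so that the ``complete'' half of the hypothesis is always available.
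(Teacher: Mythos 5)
Your proof is correct and takes essentially the same route as the paper: reduce to showing that a complete bipartite (sub)graph is ergodic, then derive a contradiction from a non-trivial ergodic partition by a short case analysis using completeness and bipartiteness against conditions 2 and 3 of the ergodicity definition. The only difference is cosmetic — you organize the cases around whether $S_A$ contains an Alice node rather than the paper's three-way split, and you are slightly more explicit about the degenerate cases.
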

\begin{proof}
Note that every induced subgraph of a complete bipartite graph is also a complete bipartite graph.
Therefore it is sufficient to show that every complete bipartite graph is ergodic.

Suppose that there is a complete bipartite graph $G$ that is not ergodic.
Then $ G $ has a non-trivial ergodic partition $ (S_A, S_B) $.
We consider three cases where each one leads to a contradiction:
\begin{itemize}
\item $ S_A $ contains a node $ u $ of Bob, and $ S_B $ contains a node $ v $ of Alice:
Since we have a complete bipartite graph there is an edge $ (u, v) $.
This means that Bob has an edge leaving $ S_A $ which contradicts the second condition in Definition~\ref{def:ergodic}.
\item $ S_B $ contains no node of Alice ($S_A$ might or might not contain a node of Bob):
Then $ S_B $ only contains nodes of Bob.
Since the graph is bipartite all nodes of $ S_B $ only have edges that leave $ S_B $.
Since $ S_B $ is nonempty, there is a node of Bob in $ S_B $ that has no edge that stays in $ S_B $ which contradicts the third condition in Definition~\ref{def:ergodic}.
\item $ S_A $ contains no node of Bob ($S_B$ might or might not contain a node of Alice): symmetric to previous case.
\end{itemize}
Since $ S_A \neq \emptyset $ and $ S_B \neq \emptyset $ we have considered all cases.
\end{proof}
We remark that every graph that is \emph{strongly ergodic} is also \emph{structurally ergodic} in the sense of Boros et al.~\cite{BorosEFGMM11}.
Thus, complete bipartite graphs are also structurally ergodic.

\subsection{Reduction to Graphs Where One Player Wins Everywhere}\label{subsec:reduce to win everywhere}

In the following, we give the first reduction.
We show that energy games on arbitrary weighted graphs can---in polynomial time---be reduced to energy games on weighted graphs in which one player wins at every node.

\begin{lemma}
For energy games, the following variants of the decision problem are polynomial-time equivalent:
\begin{itemize}
\item Decision problem on arbitrary weighted graphs.
\item Decision problem on weighted graphs in which one player wins everywhere.
\end{itemize}
\end{lemma}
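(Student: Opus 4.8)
The plan is to prove the two directions of the equivalence separately. One direction is trivial: a graph on which one player wins everywhere is in particular an arbitrary weighted graph, and the decision problem posed is literally the same, so the identity map is a polynomial-time reduction. The work lies in the other direction: given an arbitrary instance $(G, w)$ and a node $s$, I want to build in polynomial time an instance $(G'', w'')$ of the win-everywhere problem, together with a node $s''$, such that Alice wins at $s''$ in $(G'', w'')$ if and only if Alice wins at $s$ in $(G, w)$.

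The construction has two ingredients. First, \emph{subdivide} every edge of $G$ by a node belonging to the opponent of its tail: an edge $(u, v)$ with $u \in V_A$ is replaced by $(u, m_{uv})$ and $(m_{uv}, v)$ with $m_{uv} \in V_B$, the original weight $w(u, v)$ placed on the first new edge and weight $0$ on the second, and symmetrically for $u \in V_B$. This leaves the total weight of every cycle unchanged but forces every cycle that in $G$ used only nodes of one player to pass through a decision node of the \emph{other} player. Second, from every node $x \neq s$ of the subdivided graph add a ``teleport'' edge $(x, s)$, of weight $+(nW + 1)$ if $x \in V_B$ and of weight $-(nW + 1)$ if $x \in V_A$; put $s'' = s$. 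The intention is that Bob's teleports let him abandon, from any Bob-controlled position, a region where Alice would be winning and restart the game at $s$, and Alice's teleports do the symmetric thing; the signs and the polynomial magnitude $nW + 1$ are chosen so that any \emph{simple} cycle that uses a teleport edge (such a cycle can use at most one, since every teleport ends at $s$) is strictly positive if the teleport belongs to Bob and strictly negative if it belongs to Alice, because the remaining simple path has total weight in $[-nW, nW]$. Thus a teleport is never useful for looping — only as a one-shot jump to $s$ — and no player can win in $(G'', w'')$ through a spurious cycle.

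For correctness I would split into the two cases, using from Section~\ref{sec:prelim} that both players have positional optimal strategies and, from Lemma~\ref{lem:characterization_of_minimal_energy}, the ``forcing'' facts that for $u \in V_A$ the equality $\en^*_{G, w}(u) = \infty$ implies $\en^*_{G, w}(v) = \infty$ for \emph{every} successor $v$, while for $u \in V_B$ it implies this for \emph{some} successor (and dually for finite values). Suppose Bob wins at $s$ in $G$; let $B^* = \{v : \en^*_{G, w}(v) = \infty\}$ be his winning region and $\tau^*$ a positional winning strategy. In $G''$ Bob plays $\tau^*$ whenever the play is inside (the subdivided copy of) $B^*$, and otherwise — which can only be on an initial segment — he teleports to $s \in B^*$ at the first Bob-owned node, reachable within two moves since every $G$-edge is subdivided. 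The forcing facts show that from every node of $G''$ the play is then driven into a negative cycle inside $B^*$, and Alice's teleports can at best bring her back to $s$, where she still loses; hence Bob wins at every node of $G''$. The case where Alice wins at $s$ is symmetric: with her positional winning strategy $\sigma^*$ and winning region $A^*$, from any node she first reaches $s$ — directly by her teleport if she is at one of her own nodes, or within two moves via a subdivision node she controls — paying at most $nW + 1 + O(W)$, and then plays $\sigma^*$; Bob's teleports only ever hand her the positive bonus $nW + 1$, so her energy stays finite and she wins everywhere. In either case $(G'', w'')$ is a legal win-everywhere instance whose answer coincides with the answer for $s$ in $G$.

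The main obstacle, I expect, is making these two ingredients cooperate: the subdivision is exactly what lets the player who is globally losing at $s$ still intrude on any \emph{private} winning region of the opponent — a cycle all of whose $G$-nodes have a single owner — while the weights $\pm(nW + 1)$ must be simultaneously (i) large enough that every teleport cycle is strictly signed, so that neither player wins through a spurious cycle, and (ii) bounded by a polynomial, so that the ``escape to $s$'' is a finite detour that does not turn a finite minimal energy into $\infty$ in the Alice-wins case. Verifying that this one choice of gadget yields both the win-everywhere property and the equivalence $\en^*_{G'', w''}(s) < \infty \iff \en^*_{G, w}(s) < \infty$ is the technical heart; the rest is bookkeeping with the characterizations already recorded in the paper.
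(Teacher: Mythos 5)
Your construction is correct and essentially reproduces the paper's own reduction: the paper likewise replaces each edge by a short gadget containing an Alice-owned and a Bob-owned intermediate node, each with an escape edge back to $s$ of weight $-nW$ (Alice) or $+nW$ (Bob), so that any simple cycle through an escape edge is strictly signed against its user, the winner at $s$ is unchanged, and that winner can redirect every play back to $s$ and hence wins everywhere. The one point to tighten is that ``first reach $s$, then play $\sigma^*$'' is not positional as stated; as in the paper, define the everywhere-winning strategy positionally (play the original optimal strategy at nodes where it already wins, and take the escape edge to $s$ elsewhere), or explicitly invoke positional determinacy.
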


Clearly, graphs in which one player wins everywhere are included in the class of all graphs.
The reduction from arbitrary graphs to graphs in which one player wins everywhere goes as follows.
We are given a graph $ (G, w) $ and want to solve the decision problem, i.e., we want to figure out which player wins at a node $ s $ in $ (G, w) $.
We construct a graph $ (G', w') $ as follows.
All nodes of $ G $ also appear in $ G' $ and belong to the same player as in $ G $.
We replace every edge $ (x, y) $ of $ G $ (see Fig.~\ref{fig:reduction_one_player_wins_everywhere}) by the following construction:
We add a node $ u $ of Alice and node $ v $ of Bob and add the edges $ (x, u) $, $ (u, v) $, $ (v, y) $, $ (u, s) $, and $ (v, s) $ with the weights $ w' (x, u) = w (x, y) $, $ w' (u, v) = w' (v, y) = 0 $, $ w' (u, s) = -n W $, and $ w' (v, s) = nW $.

\begin{figure}[htbp!]
\centering
\begin{tikzpicture}
\SetGraphUnit{2.5}
\GraphInit[vstyle=Normal]
\tikzset{LabelStyle/.style= {draw=none,inner sep=3pt,outer sep=0pt,fill=lightgray!50}}
\tikzset{EdgeStyle/.style = {->}}

\tikzset{VertexStyle/.append style = {shape = diamond}}
\Vertex[L=$x$]{x}
\EA[L=$y$](x){y}

\Edge[label=$w(x{,}y)$](x)(y)
\end{tikzpicture}

$ \Downarrow $

\begin{tikzpicture}
\SetGraphUnit{2.5}
\GraphInit[vstyle=Normal]
\tikzset{LabelStyle/.style= {draw=none,inner sep=3pt,outer sep=0pt,fill=lightgray!50}}
\tikzset{EdgeStyle/.style = {->}}

\tikzset{VertexStyle/.append style = {shape = diamond}}
\Vertex[L=$x$]{x}
\tikzset{VertexStyle/.append style = {shape = circle}}
\EA[L=$u$](x){u}
\tikzset{VertexStyle/.append style = {shape = diamond}}
\SOEA[L=$s$](u){s}
\tikzset{VertexStyle/.append style = {shape = rectangle}}
\NOEA[L=$v$](s){v}
\tikzset{VertexStyle/.append style = {shape = diamond}}
\EA[L=$y$](v){y}

\Edge[label=$w(x{,}y)$](x)(u)
\Edge[label=$0$](u)(v)
\Edge[label=$0$](v)(y)
\Edge[label=$-nW$](u)(s)
\Edge[label=$nW$](v)(s)
\end{tikzpicture}
\caption{This picture shows the reduction from the general decision problem to the decision problem in which one of the players wins everywhere.
The round nodes belong to Alice and the rectangular nodes belong to Bob.
The diamond-shaped nodes are unspecified and could belong to any of the two players.}
\label{fig:reduction_one_player_wins_everywhere}
\end{figure}
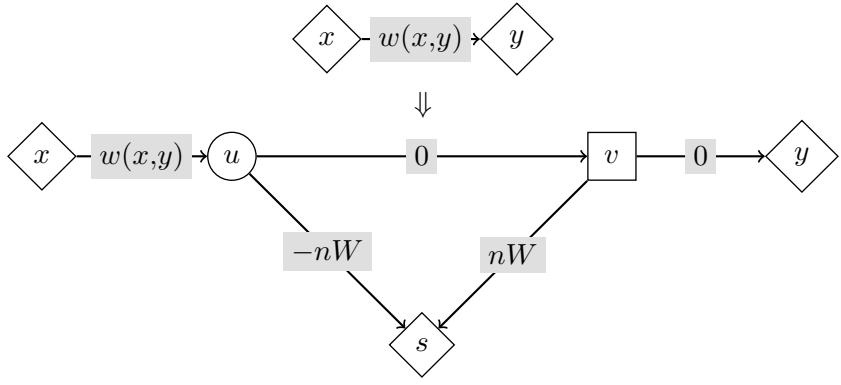

\begin{lemma}\label{lem:same_winning_at_starting_node}
Alice wins at $ s $ in $ (G, w) $ if and only if Alice wins at $ s $ in $ (G', w') $.
\end{lemma}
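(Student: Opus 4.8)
The plan is to prove the two implications by translating plays between $(G, w)$ and $(G', w')$ through the gadget $x \to u \to v \to y$. Three facts drive everything: traversing a whole gadget changes the energy by exactly $w(x, y)$, so a play avoiding the two escape edges is a faithful copy of a play in $(G, w)$ with identical energies at the original nodes; the escape edge $(v, s)$ has weight $+nW \geq 0$, so Bob taking it only hands Alice energy; and the escape edge $(u, s)$ has weight $-nW$, which is at least as negative as the weight of any simple path of $(G, w)$. I work throughout with positional strategies (sufficient by positional determinacy, already invoked in the preliminaries) and use the universal bound $\en^*_{G, w}(v) \leq nW$ whenever $\en^*_{G, w}(v) < \infty$~\cite{Brim11}. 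The case $W = 0$ is trivial (every cycle has weight $0$ and $\en^* \equiv 0$ in both graphs), so assume $W \geq 1$.

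For ``Alice wins at $s$ in $(G, w)$ implies Alice wins at $s$ in $(G', w')$'', fix an optimal positional Alice strategy $\sigma^*$ for $(G, w)$ and set $E = \en^*_{G, w}(s) \leq nW$. In $(G', w')$ let Alice mimic $\sigma^*$ on original nodes and always move $u \to v$ at every added node $u$. Against any positional Bob strategy the play from $s$ is deterministic; cut it at every visit to $s$ into segments. Restricted to original nodes, a segment starting with energy $e_0$ is a prefix of a play of $(G, w)$ in which Alice uses $\sigma^*$, so at each of its points the energy is $e_0 + w(\pi)$ for the corresponding path $\pi$, and since $\sigma^*$ wins from $s$ with energy $E$ we get $w(\pi) \geq -E$, hence energy $\geq e_0 - E$ throughout the segment (the added nodes carry the energy of the original node the play heads to, so they are covered too). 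The first segment starts with energy $E$; a later segment begins only after Bob uses $(v, s)$ from a node of energy $\geq e_0 - E$, so it starts with energy $\geq e_0 - E + nW \geq e_0$ because $nW \geq E$. Inductively every segment starts with energy $\geq E$, the energy never drops below $0$, and therefore $\en^*_{G', w'}(s) \leq E < \infty$.

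For the converse I show the contrapositive. Let $\tau^*$ be an optimal positional Bob strategy for $(G, w)$, so for every $\sigma$ the cycle reachable from $s$ in $G(\sigma, \tau^*)$ has negative total weight. In $(G', w')$ let Bob mimic $\tau^*$ on original nodes and always move $v \to y$ at every added node $v$ (never escaping). Fix any positional Alice strategy and let $C'$ be the cycle reachable from $s$; it suffices that $w'(C') < 0$. Since every escape edge points to $s$ and the play is deterministic, if $C'$ contains an escape edge then $C'$ passes through $s$ and contains exactly one escape edge $(u_k, s)$; otherwise no escape edge is used at all, so $C'$ is the expansion of a cycle $C$ of $G(\sigma, \tau^*)$ for the induced $\sigma$, giving $w'(C') = w(C) < 0$. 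In the escaping case $C'$ visits distinct original nodes $s = x_1, \ldots, x_k$ and $w'(C') = \sum_{i=1}^{k-1} w(x_i, x_{i+1}) + w(x_k, y_k) - nW$, a sum of $k \leq n$ original edge weights minus $nW$, hence $w'(C') \leq nW - nW = 0$; and equality forces $k = n$ with all these weights equal to $W$, so $s = x_1 \to \cdots \to x_n \to y_n$ is a walk of $(G, w)$ respecting $\tau^*$ on Bob's nodes, using only weight-$W$ edges, with $y_n \in \{x_1, \ldots, x_{n-1}\}$ (no self-loops), hence closing a cycle of weight $\geq 2W > 0$ reachable from $s$ under $\tau^*$ --- contradicting the choice of $\tau^*$. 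So $w'(C') < 0$ always, $\en^*_{G'(\sigma', \tau'), w'}(s) = \infty$ for every positional Alice strategy, and $\en^*_{G', w'}(s) = \infty$.

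The delicate point is the borderline $w'(C') = 0$ in the converse: the weight $-nW$ is exactly tight, and ruling out equality requires extracting a strictly positive cycle of $(G, w)$ reachable under $\tau^*$, which contradicts $\tau^*$ being optimal. The second point needing care is the ``restart'' bookkeeping in the forward direction; it relies precisely on $\en^*_{G, w}(s) \leq nW$, so that the $+nW$ gained on an escape edge outweighs any energy spent before that escape.
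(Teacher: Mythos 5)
Your proof is correct, and it follows the same overall strategy as the paper's (each player mimics their winning strategy from $(G,w)$ and never takes an escape edge, and one then analyzes the resulting play in $(G',w')$), but it is noticeably more careful at the one genuinely delicate point. In the forward direction you replace the paper's cycle-weight argument with an explicit energy-bookkeeping argument over segments between visits to $s$; this is a stylistic difference, and both work (the paper only needs the escape cycle to be nonnegative there, which its bound delivers with room to spare). The substantive difference is in the Bob direction. The paper argues symmetrically that the escape cycle decomposes into a simple path of $(G,w)$ ``with at most $n-1$ edges'' plus the $-nW$ edge, hence is negative; but the path charged to the cycle is $x_1\to\cdots\to x_k\to y_k$, where the weight $w(x_k,y_k)$ is incurred on the edge $(x_k,u)$ even though $y_k$ is never reached and may coincide with an earlier $x_i$, so up to $n$ original weights can appear and the honest bound is only $w'(C')\le 0$. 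You identify exactly this boundary case and close it by observing that equality would force an all-$W$ closed walk consistent with $\tau^*$ and reachable from $s$, i.e.\ a positive cycle contradicting $\tau^*$'s optimality. That extra step is needed to make the argument airtight as stated (alternatively one could take the escape weight to be $\pm(nW+1)$), so your version is, if anything, a small improvement on the paper's.
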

\begin{proof}
We first prove the following claim: If Alice wins at $ s $ in $ (G, w) $, then Alice also wins at $ s $ in $ (G', w') $.
Alice simply has to play the winning strategy $ \sigma^* $ for $ s $ in $ (G, w) $.\footnote{To be precise: Alice has to play $ \sigma^* $ for nodes already present in $ (G, w) $ and for the other nodes the edge that does \emph{not} go back to $ s $ has to be chosen.}
If Bob never plays a new edge that goes back to $ s $, his strategy was also available in $ (G, w) $ and then Alice wins because $ \sigma^* $ is a winning strategy in $ G $. 
As soon as Bob plays one of the new edges, a cycle is formed.
The cycle $ C $ consists of a simple path $ P $ from $ s $ to some node $ v $ and then an edge from $ v $ to $ s $.
Since the path $ P $ in $ (G', w') $ does not contain an edge going to $ s $, it corresponds to some path in $ (G, w) $ of the same weight.
As a simple path in $ (G, w) $ contains at most $ n-1 $ edges each of weight at least $ -W $, the weight of $ P $ is at least $ -(n-1) W $.
Since the edge from $ x $ to $ s $ has weight $ nW $, the cycle $ C $ has positive weight.
Therefore $ \sigma^* $ is also a winning strategy in $ (G', w') $.

A symmetric argument can be used to prove the following claim: If Bob wins at $ s $ in $ (G, w) $, then Bob also wins at $ s $ in $ (G', w') $.
Now the lemma follows from determinacy: Alice does not win if and only if Bob wins.
\end{proof}
\begin{lemma}
One of the players wins everywhere in $ (G', w') $.
\end{lemma}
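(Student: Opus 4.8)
The plan is to prove the following stronger dichotomy: in $(G', w')$ either $\en^*_{G', w'}(t) < \infty$ for \emph{every} node $t$, or $\en^*_{G', w'}(t) = \infty$ for every node $t$, according to whether $\en^*_{G', w'}(s)$ is finite or infinite. Since $\en^*_{G', w'}(s)$ must be one or the other, this immediately gives the lemma. The only tool needed is the following consequence of the fixed-point characterization (Lemma~\ref{lem:characterization_of_minimal_energy}): for $u \in V_A$, $\en^*_{G', w'}(u) < \infty$ holds if and only if \emph{some} out-neighbor $z$ of $u$ has $\en^*_{G', w'}(z) < \infty$; for $u \in V_B$, $\en^*_{G', w'}(u) < \infty$ holds if and only if \emph{every} out-neighbor of $u$ has finite minimal energy. (Intuitively: Alice moves to a good neighbor and then plays optimally, while Bob can always move to a bad one.)

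First I would handle the case $\en^*_{G', w'}(s) < \infty$ by propagating finiteness through the three node types of $G'$. (a)~Each gadget Alice-node $u$ carries the edge $(u, s)$ of weight $-nW$, so Alice can move from $u$ to $s$ and then play her winning strategy there; hence $\en^*_{G', w'}(u) \le \en^*_{G', w'}(s) + nW < \infty$. (b)~Every original node $x$ (of either player, with out-degree at least one) has all of its out-edges leading to gadget Alice-nodes, each of which has finite minimal energy by~(a); since all edge weights are bounded, this forces $\en^*_{G', w'}(x) < \infty$. (c)~Each gadget Bob-node $v$ has out-edges only to an original node $y$ and to $s$, both of which have finite minimal energy by~(b) and the hypothesis, so $\en^*_{G', w'}(v) < \infty$. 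Thus Alice wins at every node.

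The case $\en^*_{G', w'}(s) = \infty$ is treated symmetrically, propagating infiniteness, but now it is the \emph{Bob}-gadget-nodes that carry the escape edge to $s$. (a$'$)~Each gadget Bob-node $v$ carries the edge $(v, s)$ of weight $nW$; Bob may move from $v$ to $s$, where Alice loses from every finite energy, so $\en^*_{G', w'}(v) = \infty$. (b$'$)~Each gadget Alice-node $u$ has out-edges only to the corresponding $v$ and to $s$, both of infinite minimal energy by~(a$'$); since Alice must pick one of them and still has finite energy afterwards, $\en^*_{G', w'}(u) = \infty$. (c$'$)~Every original node $x$ has all out-edges leading to gadget Alice-nodes, all of infinite minimal energy by~(b$'$), so $\en^*_{G', w'}(x) = \infty$. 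Hence Bob wins everywhere.

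I expect the main subtlety to be the asymmetry of the two cases rather than any single estimate: the naive idea that ``both players can always escape to $s$'' is wrong, because the escape edges have opposite signs -- Alice's escape $(u,s)$ has weight $-nW$, Bob's escape $(v,s)$ has weight $+nW$. These signs are exactly what makes the escape affordable for the \emph{correct} player in each case (Alice needs only a harmless $nW$ extra units of energy once she already wins at $s$; Bob traverses a positively-weighted edge, which never increases his burden, and then wins at $s$). So one must be careful to let Alice's gadget nodes and Bob's gadget nodes play opposite roles in the propagation, exactly as above.
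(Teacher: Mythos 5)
Your proof is correct, but it takes a genuinely different route from the paper's. The paper argues at the level of \emph{strategies}: it takes Alice's winning strategy $\sigma$ from $s$ in $(G', w')$, patches it at the gadget nodes where she would lose so that she escapes to $s$ instead, and then checks that every play under the patched strategy either reaches $s$ (after which it follows a winning play) or coincides with a play on which $\sigma$ already wins. You instead argue at the level of \emph{values}, propagating finiteness or infiniteness of $\en^*_{G', w'}$ through the three node types via the local characterization ``an Alice node is finite iff some successor is finite; a Bob node is finite iff all successors are finite.'' Your chain (a)--(c) and (a$'$)--(c$'$) covers exactly the gadget Alice-nodes, the original nodes, and the gadget Bob-nodes, and each step uses the correct quantifier for the owner of the node, so the dichotomy is complete. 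Your version is shorter and more self-contained: it does not invoke Lemma~\ref{lem:same_winning_at_starting_node}, and in fact it never uses the magnitudes or signs of the escape weights $\pm nW$ at all---finiteness propagation is weight-independent (your closing remark about the signs being essential is really about the \emph{preceding} lemma, which preserves the winner at $s$; it plays no role in your own argument). What the paper's approach buys in exchange is an explicit single positional strategy $\sigma'$ that wins from every node simultaneously, a slightly stronger artifact than the pure existence statement, though for the reduction only the dichotomy is needed.
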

\begin{proof}
We show that the player that wins at $ s $ in $ (G, w) $ is the one that wins everywhere in $ G' $.
We assume that Alice wins at $ s $ in $ (G, w) $.
(For Bob the argument is symmetric.)
By Lemma~\ref{lem:same_winning_at_starting_node} it follows that Alice wins at $ s $ in $ (G', w') $ by playing some strategy $ \sigma $.
We define a strategy $ \sigma' $ for every node $ v $ of Alice as follows:
If the edge $ (v, s) $ does not exist, we set $ \sigma'(v) = \sigma(v) $.
If the edge $ (v, s) $ does exist we distinguish two cases.
If Alice wins at $ v $ in $ (G', w') $ by playing according to $ \sigma $, then $ \sigma'(v) = \sigma(v) $.
Otherwise, Alice takes the new edge that goes to $ s $, i.e., $ \sigma(v) = s $.
In other words, $ \sigma' $ is defined as follows for every node $ v $ of Alice:
\begin{align*}
\sigma'(v) =
\begin{cases}
s & \text{if edge $ (v, s) $ exists in $ G' $ and Alice loses at $ v $ in $ (G', w') $ by playing $ \sigma $} \\
\sigma (v) & \text{otherwise}
\end{cases}
\end{align*}

We now show that with the strategy $ \sigma' $ Alice wins against any strategy $ \tau $ of Bob.
Let $ P $ be the (unique) infinite path in $ (G'(\sigma, \tau), w') $ starting at $ s $.\footnote{Because of its special structure such a path $ P $ is also known as a ``lasso'' in the literature.}
Since $ \sigma $ is a winning strategy of Alice starting from $ s $ in $ (G', w') $, Alice wins for every node on $ P $ in $ (G', w') $ by playing according to $ \sigma $.
By the above definition of $ \sigma' $ we have $ \sigma'(v) = \sigma (v) $ for every node $ v $ on $ P $.
This means that the infinite path in $ (G'(\sigma', \tau), w') $ starting at $ s $ is exactly $ P $ and contains a non-negative cycle.

We now show that in fact for every node $ u $, the infinite path $ P' $ in $ (G'(\sigma', \tau), w') $ starting at $ u $ contains a non-negative cycle.
If $ P' $ contains $ s $, then $ P' $ ends in $ P $.
As argued above, $ P $ contains a non-negative cycle and therefore also $ P' $ contains a non-negative cycle.
Consider now the case that $ P' $ does not contain $ s $ which implies that $ \sigma'(v) = \sigma (v) $ for every node $ v $ of Alice on $ P' $ (because otherwise $ P' $ would contain $ s $).
Therefore $ P' $ is equal to the infinite path in $ (G'(\sigma, \tau), w') $ starting at $ u $.
By the way we constructed $ G' $, $ P' $ must contain at least one node $ v $ of Alice that has an edge $ (v, s) $ to $ s $.
Since $ \sigma' (v) = \sigma (v) \neq s $ it follows by the way we defined $ \sigma' $ that Alice wins at $ v $ in $ (G', w') $ by playing according to $ \sigma $.
Therefore $ P'$ contains a non-negative cycle as desired.
Since $ \tau $ was an arbitrary strategy of Bob, we know that Alice wins everywhere in $ (G', w') $ with the strategy $ \sigma' $.
\end{proof}

\subsection{Reduction to Complete Bipartite Graphs}\label{subsec:reduce to complete bipartite}

We now give our second reduction.
We show how to reduce the decision problem on graphs in which one player wins everywhere to the decision problem on complete bipartite graphs, as in the following lemma. 

\begin{lemma}
For energy games, the following variants of the decision problem are polynomial-time equivalent.
\begin{enumerate}[(1)]
\item Decision problem on graphs in which one player wins everywhere. 
\item Decision problem on weighted complete bipartite graphs. 
\end{enumerate}
\end{lemma}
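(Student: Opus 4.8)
The plan is to prove the two directions of the equivalence separately. The direction reducing~(2) to~(1) is essentially free: a complete bipartite graph is in particular an arbitrary weighted graph, so the decision problem on complete bipartite graphs is a restriction of the decision problem on arbitrary graphs, which by the lemma of Section~\ref{subsec:reduce to win everywhere} reduces in polynomial time to the decision problem on graphs in which one player wins everywhere. The real work is the converse: given a graph $(G, w)$ in which one player wins everywhere, together with a node $s$, produce in polynomial time a complete bipartite graph $(G'', w'')$ containing $s$ such that the same player wins at $s$ in $(G'', w'')$.

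I would build $(G'', w'')$ in two stages. First, make the graph bipartite: every edge $(x, y)$ with $x, y$ owned by the same player is subdivided by a fresh node $z$ of the opposite player, replacing $(x, y)$ with $(x, z)$ of weight $w(x, y)$ and $(z, y)$ of weight $0$. As $z$ has a unique out-edge, its owner has no real choice there; one checks that the minimal energy at each original node is unchanged (the value at $z$ on any simple path equals the value at its successor, so no new minimum is introduced), so the winner at $s$ is preserved, and moreover the ``one player wins everywhere'' property persists in the resulting graph $(G', w')$ because the cycles reachable from a new node $z$ are exactly those reachable from its successor. This step turns $n$ nodes into at most $n+m$ nodes and is polynomial. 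Second, complete the bipartite graph: for every missing edge from an Alice node to a Bob node add an edge of weight $-X$, and for every missing edge from a Bob node to an Alice node add an edge of weight $+X$, where $X$ is any integer larger than $n'W'$ ($n'$ and $W'$ being the node count and maximum absolute weight of $(G', w')$). The result $(G'', w'')$ is complete bipartite and is built in polynomial time.

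The heart of the argument is that $(G'', w'')$ has the same winner at $s$ as $(G', w')$. Suppose Alice wins everywhere in $(G', w')$ and let $\sigma^*$ be an optimal (hence everywhere-winning) strategy of Alice there; I claim $\sigma^*$ is also everywhere-winning in $(G'', w'')$. Fix any strategy $\tau$ of Bob in $G''$ and any node $u$, and let $C$ be the unique cycle reached from $u$ in $G''(\sigma^*, \tau)$. Since $\sigma^*$ uses only edges of $G'$, every edge of $C$ leaving an Alice node is old, so either all edges of $C$ are old, or $C$ contains a new edge, which must be a $+X$ edge leaving a Bob node. In the first case, picking any node $v$ on $C$, the play from $v$ in $G'(\sigma^*, \tau')$ traverses $C$ for the $G'$-strategy $\tau'$ that agrees with $\tau$ on the Bob nodes of $C$ and picks arbitrary old out-edges elsewhere (these exist since $G'$ has minimum out-degree $1$); by optimality of $\sigma^*$ and $e^*_{G', w'}(v) < \infty$ we get that $C$ is non-negative. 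In the second case $w''(C) \geq -(n'-1)W' + X > 0$. Either way $C$ is non-negative, so $e^*_{G'', w''}(u) < \infty$. The symmetric argument with an optimal Bob strategy $\tau^*$ (using the bound $w''(C) \leq (n'-1)W' - X < 0$ when $C$ uses a new $-X$ edge) shows that if Bob wins everywhere in $(G', w')$ then Bob wins everywhere in $(G'', w'')$. Combining both stages, the winner at $s$ agrees across $(G, w)$, $(G', w')$, $(G'', w'')$, which completes the reduction.

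The step I expect to be the main obstacle is the bookkeeping in this last paragraph: one must verify carefully that an all-old cycle of $G''(\sigma^*, \tau)$ really is a cycle of $G'$ reachable under $\sigma^*$ against \emph{some} legitimate $G'$-strategy of Bob, and that the weight estimates for cycles using a new edge are as stated. The ``one player wins everywhere'' hypothesis is precisely what makes the estimates suffice --- it forbids the ``escape'' scenario (Alice, losing at $u$ but winning at some $v$, sliding along a new negative edge from $u$ to $v$) that is the obstruction highlighted in the proof idea of Theorem~\ref{thm:complete bipartite}.
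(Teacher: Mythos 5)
Your proof is correct and follows essentially the same route as the paper: subdivide same-owner edges to make the graph bipartite, then complete it with dummy edges whose weights are too extreme to help the losing player, using the win-everywhere hypothesis to rule out escapes along all-old cycles. The only differences are minor: the paper gets the (2)-to-(1) direction for free from strong ergodicity (one player automatically wins everywhere in any complete bipartite graph, so (2) is literally a subproblem of (1)) rather than routing through the general decision problem, and it completes the graph in two sequential steps with asymmetric weights $-nW$ and $n^2W$, whereas your one-shot symmetric choice of $\pm X$ with $X > n'W'$ also works because the winning player's strategy excludes one of the two kinds of new edges from any cycle of the strategy-restricted graph.
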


Note that the reduction from (2) to (1) is trivial. This is because complete bipartite graphs are strongly ergodic, and in strongly ergodic graphs one player wins everywhere (because otherwise the sets of winning nodes of Alice and Bob, respectively, would immediately give a non-trivial ergodic partition).

The rest of this subsection is devoted to showing the reduction from (1) to (2). This reduction has two parts.  
We first reduce from (1) to bipartite graphs, which can be done very easily, and from there we reduce to complete bipartite graphs.

\subsubsection{Part 1: Reduction to Bipartite Graphs}
We are given a graph $ (G, w) $ in which one of the players wins everywhere.
We want make the graph bipartite, i.e., there should neither be an edge $ (u, v) $ such that $ u \in V_A $ and $ v \in V_A $ nor should there be and edge $ (u, v) $ such that $ u \in V_B $ and $ v \in V_B $.
We modify $ (G, w) $ as follows:
\begin{itemize}
\item We replace every edge $ (u, v) \in E $ such that $ u, v \in V_A $ by two edges $ (u, u') $ and $ (u', v)$ where $ u' $ is a new node of Bob and the weights of the new edges are $ w_0 (u, u') = w(u, v) $ and $ w_0 (u', v) = 0 $.
\item We replace every edge $ (u, v) \in E $ such that $ u, v \in V_B $ by two edges $ (u, u') $ and $ (u', v)$ where $ u' $ is a new node of Alice and the weights of the new edges are $ w_0 (u, u') = w(u, v) $ and $ w_0 (u', v) = 0 $.
\end{itemize}
We call the resulting graph $ (G_0, w_0) $.
Observe that $ \en^*_{G, w} (v) = \en^*_{G_0, w_0} (v) $ for every node $ v $ of $ G $.
Therefore the player that wins everywhere in $ (G, w) $ also wins everywhere in $ (G_0, w_0) $.
The same reduction has recently been considered for mean-payoff games by Boros et al.~\cite{BorosEGM13}.

\subsubsection{Part 2: Reduction to Complete Bipartite Graphs}
We are given a bipartite graph $ (G, w) $ in which one player wins everywhere.
The reduction to complete bipartite graphs has two steps:
\begin{enumerate}
\item Modification of $ (G, w) $: For every pair $ (u, v) $ of nodes such that $ u \in V_A $ and $ v \in V_B $, if the edge $ (u, v) $ is not contained in $ G $, we add it with weight $ w_1 (u,v) = - nW $.
We call the resulting graph $ (G_1, w_1) $.
\item Modification of $ (G_1, w_1) $: For every pair $ (u, v) $ of nodes such that $ u \in V_B $ and $ v \in V_A $, if the edge $ (u, v) $ is not contained in $ G $, we add it with weight $ w_2 (u,v) = n^2 W $.
We call the resulting graph $ (G_2, w_2) $.
\end{enumerate}
Clearly $ (G_2, w_2) $ is a complete bipartite graph.

\begin{lemma}
In $ (G, w) $ and $ (G_2, w_2) $ the same player wins everywhere.
\end{lemma}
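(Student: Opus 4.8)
The plan is to establish the two implications ``Alice wins everywhere in $(G,w)$ $\Rightarrow$ Alice wins everywhere in $(G_2,w_2)$'' and ``Bob wins everywhere in $(G,w)$ $\Rightarrow$ Bob wins everywhere in $(G_2,w_2)$''. Since $(G,w)$ satisfies one of the two hypotheses by assumption and each graph is determined (Martin's determinacy theorem, already invoked above), this gives the lemma. Throughout we may assume $W\ge 1$, since if $W=0$ every cycle has weight $0$ and Alice trivially wins everywhere in every graph, so the statement holds with ``Alice'' as the common winner. Two facts are used repeatedly: $G_2$ has the same vertex set as $G$ (only edges are added), and a player who wins everywhere in $(G,w)$ possesses a single positional strategy winning from every node (energy games are positionally determined), a strategy which uses only edges of $G$ and hence is still available in $(G_2,w_2)$.

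First I would treat the case that Alice wins everywhere in $(G,w)$; fix a positional strategy $\sigma^*$ of Alice that is winning from every node of $(G,w)$. I claim $\sigma^*$ is winning from every node of $(G_2,w_2)$. Fix any positional strategy $\tau$ of Bob and any start node $v$; the play in $G_2(\sigma^*,\tau)$ reaches a unique simple cycle $C$, and it suffices to show $w_2(C)\ge 0$, for then $\en^*_{G_2(\sigma^*,\tau),w_2}(v)=\max\{0,-\min_P w_2(P)\}$, taken over the finitely many simple paths $P$ from $v$, is finite. Since $\sigma^*$ uses only edges of $G$, every new edge of $C$ is an out-edge of a Bob node, hence a Bob$\to$Alice edge of weight $n^2W$. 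If $C$ contains such an edge, then since $|C|\le n$ and every other edge has weight at least $-W$ we get $w_2(C)\ge n^2W-(n-1)W>0$. If $C$ uses only edges of $G$, pick any positional strategy $\tau'$ of Bob in $G$ that agrees with $\tau$ on the Bob nodes lying on $C$; then $C$ is a cycle of $G(\sigma^*,\tau')$, and since $\sigma^*$ is winning for Alice from every node, $G(\sigma^*,\tau')$ contains no negative cycle -- every cycle of an out-degree-one graph is reachable from its own vertices, so a negative cycle would witness $\en^*_{G(\sigma^*,\tau'),w}=\infty$ at one of them -- whence $w_2(C)=w(C)\ge 0$. In either case $w_2(C)\ge 0$, so $\en^*_{G_2(\sigma^*,\tau),w_2}(v)<\infty$ for all $v$ and all $\tau$; maximizing over the finitely many $\tau$ and then minimizing over $\sigma$ yields $\en^*_{G_2,w_2}(v)<\infty$ for all $v$.

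The case that Bob wins everywhere is symmetric but exploits the other family of new edges. Let $\tau^*$ be a positional strategy of Bob winning from every node of $(G,w)$; it uses only edges of $G$. For any positional strategy $\sigma$ of Alice and any node $v$, the cycle $C$ reached in $G_2(\sigma,\tau^*)$ can use new edges only among the Alice$\to$Bob edges of weight $-nW$. If $C$ uses at least one of these, then (as there are no self-loops, $C$ has at least one further edge, of weight at most $W$) $w_2(C)\le -nW+(n-1)W=-W<0$; if $C$ uses only edges of $G$, then $C$ is a cycle of $G(\sigma',\tau^*)$ for a suitable positional $\sigma'$, and it is negative because $\tau^*$ is winning for Bob everywhere. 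Hence $\en^*_{G_2(\sigma,\tau^*),w_2}(v)=\infty$ for all $\sigma$ and $v$, so by determinacy $\en^*_{G_2,w_2}(v)=\max_\tau\min_\sigma \en^*_{G_2(\sigma,\tau),w_2}(v)\ge \min_\sigma \en^*_{G_2(\sigma,\tau^*),w_2}(v)=\infty$, i.e.\ Bob wins everywhere in $(G_2,w_2)$.

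The one step demanding care -- rather than real difficulty -- is the ``only edges of $G$'' subcase: one must observe that a cycle of the deterministic graph $G_2(\sigma,\tau)$ that happens to lie entirely inside $G$ is a genuine cycle of $G(\sigma',\tau')$ for the appropriate positional strategies, and then combine this with the standard consequence of a uniform winning strategy (no reachable cycle of the opponent's sign, using that every cycle of an out-degree-one graph is reachable from its own vertices). The remainder is the arithmetic $n^2W-(n-1)W>0$ and $(n-1)W-nW=-W<0$ together with Martin determinacy.
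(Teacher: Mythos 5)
Your proof is correct, and it takes a mildly but genuinely different route from the paper's. The paper stages the argument through the intermediate graph $(G_1,w_1)$: it first observes the two trivial implications (Alice wins everywhere in $(G,w)$ $\Rightarrow$ in $(G_1,w_1)$; Bob wins everywhere in $(G_1,w_1)$ $\Rightarrow$ in $(G_2,w_2)$), and then proves the two non-trivial ones ($(G,w)\to(G_1,w_1)$ for Bob, $(G_1,w_1)\to(G_2,w_2)$ for Alice) by exactly the cycle-sign argument you use. You instead jump directly from $(G,w)$ to $(G_2,w_2)$ in one step for each player, exploiting the fact that the winner's positional strategy from $(G,w)$ avoids \emph{all} new edges leaving that player's own nodes, so the only new edges a cycle can contain are the opponent's, all of one sign. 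This buys you a cleaner arithmetic: in your Alice case every non-new edge on the cycle has weight at least $-W$, so $n^2W-(n-1)W>0$ holds with room to spare, and in fact weight $nW$ on the new Bob edges would already suffice for your argument; the paper genuinely needs $n^2W$ because in its staged version Alice's winning strategy is taken from $(G_1,w_1)$ and may use the new $-nW$ edges, forcing the bound $n^2W-(n-1)\cdot nW>0$. You are also more careful than the paper on the subcase where the cycle lies entirely in $G$ (explicitly extending the restriction of the opponent's strategy on the cycle to a full positional strategy of $G$, and noting that a cycle of an out-degree-one graph is reachable from its own vertices), a point the paper passes over silently. Both proofs are sound; yours is self-contained and marginally tighter, the paper's is modular.
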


\begin{proof}
The following claims follow easily:
\begin{itemize}
\item If Alice wins everywhere in $ (G, w) $, then Alice also wins everywhere in $ (G_1, w_1) $. (Alice simply has to play the same strategy as in $ (G, w) $, Bob does not have more strategies than in $ (G, w) $.)
\item If Bob wins everywhere in $ (G_1, w_1) $, then Bob also wins everywhere in $ (G_2, w_2) $. (Bob simply has to play the same strategy as in $ (G_1, w_1) $, Alice does not have more strategies than in $ (G_1, w_1) $.)
\end{itemize}

Now we show the following: If Bob wins everywhere in $ (G, w) $, then Bob also wins everywhere in $ (G_1, w_1) $.
Let $ \tau^* $ be a winning strategy of Bob in $ (G, w) $.
We argue that $ \tau^* $ is also a winning strategy of Bob in $ (G_1, w_1) $.
Let $ \sigma_1 $ be an arbitrary strategy of Alice in $ (G_1, w_1) $.
Let $ C $ be a cycle in $ G_1 (\sigma_1, \tau^*) $.
If all edges of $ C $ already occur in $ (G, w) $, we know that $ C $ is a cycle of negative weight in $ (G_1, w_1) $ because $ \tau^* $ is a winning strategy of Bob in $ G $.
If there is an edge in $ C $ that did not already occur in $ (G, w) $, then this edge has weight $ -nW $.
Since the largest positive weight in $ (G_1, w_1) $ is $ W $, and $ C $ consists of at most $ n $ edges, we know that $ C $ is a cycle of negative weight.
Thus, every cycle in $ G (\sigma_1, \tau^*) $ has negative weight.
Since $ \sigma_1 $ was an arbitrary strategy of Alice in $ (G_1, w_1) $, we conclude that $ \tau^* $ is a winning strategy of Bob in $ (G_1, w_1) $ which he can play to win everywhere.

A symmetric argument can be used to prove the following:
If Alice wins everywhere in $ (G_1, w_1) $, then Alice also wins everywhere in $ (G_2, w_2) $.
The only difference to before is that the minimal negative edge weight in $ (G_1, w_1) $ is $ - nW $ which is the reason why we have to set the weight of the new edges of Bob to $ n^2 W $.
Since either Alice wins everywhere in $ (G, w) $ or Bob wins everywhere in $ (G, w) $ it follows by our claims that the same player wins everywhere in $ (G, w) $ and $ (G_2, w_2) $.
\end{proof}

\section{Conclusion}

In this paper we answer the question whether the energy game problem can be solved efficiently under certain restrictions.
We give both negative and positive answers to this question.
On the negative side, we show that usual {\em graph structure restrictions}, namely clique-width and strong ergodicity, do not make the problem easier.
This is in contrast to the situation of the parity game problem (a special case of the energy game problem), which can be solved in polynomial time under such restrictions.
Thus, our result provides evidence that energy games might really be harder to solve than parity games.

On the positive side, we identify two {\em weight structure restrictions} that allow us to solve the energy game problem efficiently: fixed-window and large penalty restrictions. 
We also provide an algorithm for solving the energy game problem with additive error and show how to use this algorithm to solve the energy game problem exactly. 

Many problems remain open for solving energy games and related problems. The most fundamental one is, of course, settling the complexity status of these problems. On the one hand, current algorithmic techniques seem to be insufficient to show that these problems can be solved in polynomial time. On the other hand, it is unlikely that these problem are hard for any complexity classes currently known. 
It is interesting to investigate how weight structures can help in attacking these problems. For example, it might be possible to transform a graph $(G, w)$ to another graph $(G', w')$ whose penalty is large while the solution to the energy game problem remains the same. While this might be true for {\em any} graph $(G, w)$, we believe that it is already interesting to show this for some natural class of graphs, e.g. bounded tree-width graphs and graphs from the special case of parity games.

\printbibliography[heading=bibintoc] 

\end{document}